\DeclareRobustCommand{\greektext}{%
  \fontencoding{LGR}\selectfont\def\encodingdefault{LGR}}
\DeclareRobustCommand{\textgreek}[1]{\leavevmode{\greektext #1}}
\providecommand{\tabularnewline}{\\}
\theoremstyle{plain}
\newtheorem{assumption}{\protect\assumptionname}
\theoremstyle{plain}
\newtheorem{cor}{\protect\corollaryname}
\theoremstyle{plain}
\newtheorem{lem}{\protect\lemmaname}
\theoremstyle{plain}
\newtheorem{thm}{\protect\theoremname}
\theoremstyle{definition}
\newtheorem{defn}{\protect\definitionname}
\newcommand{\customlabel}[2]{%
   \protected@write \@auxout {}{\string \newlabel {#1}{{#2}{\thepage}{#2}{#1}{}} }%
   \hypertarget{#1}{}
}
\providecommand{\assumptionname}{Assumption}
\providecommand{\corollaryname}{Corollary}
\providecommand{\definitionname}{Definition}
\providecommand{\lemmaname}{Lemma}
\providecommand{\theoremname}{Theorem}
\begin{document}
\global\long\def\a{\alpha}%
 
\global\long\def\b{\beta}%
 
\global\long\def\g{\gamma}%
 
\global\long\def\d{\delta}%
 
\global\long\def\e{\epsilon}%
 
\global\long\def\l{\lambda}%
 
\global\long\def\t{\theta}%
 
\global\long\def\o{\omega}%
 
\global\long\def\s{\sigma}%

\global\long\def\G{\Gamma}%
 
\global\long\def\D{\Delta}%
 
\global\long\def\L{\Lambda}%
 
\global\long\def\T{\Theta}%
 
\global\long\def\O{\Omega}%
 
\global\long\def\R{\mathbb{R}}%
 
\global\long\def\N{\mathbb{N}}%
 
\global\long\def\Q{\mathbb{Q}}%
 
\global\long\def\I{\mathbb{I}}%
 
\global\long\def\P{\mathbb{P}}%
 
\global\long\def\E{\mathbb{E}}%
\global\long\def\B{\mathbb{\mathbb{B}}}%
\global\long\def\S{\mathbb{\mathbb{S}}}%
\global\long\def\V{\mathbb{\mathbb{V}}\text{ar}}%
 
\global\long\def\GG{\mathbb{G}}%

\global\long\def\X{{\bf X}}%
\global\long\def\cX{\mathscr{X}}%
 
\global\long\def\cY{\mathscr{Y}}%
 
\global\long\def\cA{\mathscr{A}}%
 
\global\long\def\cB{\mathscr{B}}%
\global\long\def\cF{\mathscr{F}}%
 
\global\long\def\cM{\mathscr{M}}%
\global\long\def\cN{\mathcal{N}}%
\global\long\def\cG{\mathcal{G}}%
\global\long\def\cC{\mathcal{C}}%
\global\long\def\sp{\,}%

\global\long\def\es{\emptyset}%
 
\global\long\def\mc#1{\mathscr{#1}}%
 
\global\long\def\ind{\mathbf{\mathbbm1}}%
\global\long\def\indep{\perp}%

\global\long\def\any{\forall}%
 
\global\long\def\ex{\exists}%
 
\global\long\def\p{\partial}%
 
\global\long\def\cd{\cdot}%
 
\global\long\def\Dif{\nabla}%
 
\global\long\def\imp{\Rightarrow}%
 
\global\long\def\iff{\Leftrightarrow}%

\global\long\def\up{\uparrow}%
 
\global\long\def\down{\downarrow}%
 
\global\long\def\arrow{\rightarrow}%
 
\global\long\def\rlarrow{\leftrightarrow}%
 
\global\long\def\lrarrow{\leftrightarrow}%

\global\long\def\abs#1{\left|#1\right|}%
 
\global\long\def\norm#1{\left\Vert #1\right\Vert }%
 
\global\long\def\rest#1{\left.#1\right|}%

\global\long\def\bracket#1#2{\left\langle #1\middle\vert#2\right\rangle }%
 
\global\long\def\sandvich#1#2#3{\left\langle #1\middle\vert#2\middle\vert#3\right\rangle }%
 
\global\long\def\turd#1{\frac{#1}{3}}%
 
\global\long\def\ellipsis{\textellipsis}%
 
\global\long\def\sand#1{\left\lceil #1\right\vert }%
 
\global\long\def\wich#1{\left\vert #1\right\rfloor }%
 
\global\long\def\sandwich#1#2#3{\left\lceil #1\middle\vert#2\middle\vert#3\right\rfloor }%

\global\long\def\abs#1{\left|#1\right|}%
 
\global\long\def\norm#1{\left\Vert #1\right\Vert }%
 
\global\long\def\rest#1{\left.#1\right|}%
 
\global\long\def\inprod#1{\left\langle #1\right\rangle }%
 
\global\long\def\ol#1{\overline{#1}}%
 
\global\long\def\ul#1{\underline{#1}}%
 
\global\long\def\td#1{\tilde{#1}}%
\global\long\def\bs#1{\boldsymbol{#1}}%

\global\long\def\upto{\nearrow}%
 
\global\long\def\downto{\searrow}%
 
\global\long\def\pto{\overset{p}{\longrightarrow}}%
 
\global\long\def\dto{\overset{d}{\longrightarrow}}%
 
\global\long\def\asto{\overset{a.s.}{\longrightarrow}}%

\setlength{\abovedisplayskip}{6pt} \setlength{\belowdisplayskip}{6pt}
\title{Two-Stage Maximum Score Estimator\thanks{We thank Xiaohong Chen, Xu Cheng, Frank Diebold, Ivan Fern\'{a}ndez-Val,
Simon Lee, Ming Li, Konrad Menzel, Frank Schorfheide, Matt Seo, Peter
Phillips, Joris Pinkse, Yuanyuan Wan, as well as seminar and conference
participants at Syracuse, U Toronto, USC, BU, NUS \& SMU, NYU, the
2022 Cowles Foundation Summer Conference on Econometrics and the 2022
Asian Meeting of the Econometric Society for helpful comments and
suggestions.}}
\author{Wayne Yuan Gao\thanks{Gao: Department of Economics, University of Pennsylvania, 133 S 36th
St., Philadelphia, PA 19104, USA, waynegao@upenn.edu.},$\ $Sheng Xu\thanks{Xu: The Program in Applied and Computational Mathematics, Princeton
University, Fine Hall, Washington Road, Princeton, NJ 08544, sx7392@princeton.edu.}, and Kan Xu\thanks{Xu: Department of Economics, University of Pennsylvania, 133 S 36th
St., Philadelphia, PA 19104, USA, kanxu@upenn.edu.}}
\maketitle
\begin{abstract}
\noindent This paper considers the asymptotic theory of a semiparametric
M-estimator that is generally applicable to models that satisfy a
monotonicity condition in one or several parametric indexes. We call
this estimator the \emph{two-stage maximum score} (TSMS) estimator,
since our estimator involves a first-stage nonparametric regression
when applied to the binary choice model of \citet*{manski1975maximum,manski1985semiparametric}.
We characterize the asymptotic distribution of the TSMS estimator,
which features phase transitions depending on the dimension of the
first-stage estimation. Effectively, the first-stage nonparametric
estimator serves as an imperfect smoothing function on a non-smooth
criterion function, leading to the pivotality of the first-stage estimation
error with respect to the second-stage convergence rate and asymptotic
distribution. \textbf{}\\
\textbf{~}\\
\textbf{Keywords:} semiparametric M-estimation, maximum score, non-smooth
criterion, monotone index, discrete choice
\end{abstract}
\newpage{}

\section{\label{sec:Intro}Introduction}

In a sequence of papers \citet{manski1975maximum,manski1985semiparametric}
proposed and analyzed the \emph{maximum-score estimator} for semiparametric
discrete choice models, e.g.,
\[
y_{i}=\ind\left\{ X_{i}^{'}\t_{0}\geq\e_{i}\right\} 
\]
based on a median normalization $\text{med}\left(\rest{\e_{i}}X_{i}\right)=0$
and the consequent observation
\begin{equation}
h_{0}\left(X_{i}\right):=\E\left[\rest{y_{i}-\frac{1}{2}}X_{i}\right]\gtrless0\quad\iff\quad X_{i}^{'}\t_{0}\gtrless0.\label{eq:Mono_Equiv}
\end{equation}
Specifically, the maximum-score estimator is defined as any solution
to the problem
\[
\max_{\t}\frac{1}{n}\sum_{i=1}^{n}\left(y_{i}-\frac{1}{2}\right)\ind\left\{ X_{i}^{'}\t\geq0\right\} .
\]
Subsequently, \citet*{kim1990cube} demonstrated the cubic-root asymptotics
of the maximum-score estimator with a non-normal limit distribution,
and \citet*{horowitz1992smoothed} showed the asymptotic normality
of the \emph{smoothed} maximum score estimator\footnote{The smoothed maximum score estimator is defined as the solution to
$\max_{\t}\frac{1}{n}\sum_{i=1}^{n}\left(y_{i}-\frac{1}{2}\right)\Phi\left(X_{i}^{'}\t/b_{n}\right)$
with a chosen smooth function $\Phi$ and bandwidth $b_{n}$.} with a faster-than-$n^{-1/3}$ but slower-than-$n^{-1/2}$ convergence
rate.

In this paper we consider yet another estimator of the model above,
which we call the \emph{two-stage maximum score} (TSMS) estimator,
defined as any solution to
\[
\max_{\t}\frac{1}{n}\sum_{i=1}^{n}\hat{h}\left(X_{i}\right)\ind\left\{ X_{i}^{'}\t\geq0\right\} ,
\]
where $\hat{h}$ is a consistent first-stage nonparametric estimator
of $h_{0}$. Essentially, the TSMS estimator encodes the logical relationship
\eqref{eq:Mono_Equiv} in a more \emph{literal} way: we simply replace
$h_{0}$ in \eqref{eq:Mono_Equiv} with its estimator $\hat{h}$.
We focus on analyzing the asymptotic properties of the TSMS estimator
in this paper.

~

\noindent The applicability of the TSMS estimator, however, extends
far beyond the binary choice model considered above. Consider any
model such that some nonparametrically identified function of data
$h_{0}$ and a finite-dimensional parameter of interest $\t_{0}$
satisfy the following multi-index monotonicity condition (at zero):
with $X:=\left(X_{1},...,X_{J}\right)$,
\begin{align}
X_{j}^{'}\t_{0}>0\ \text{for every }j=1,...,J\quad & \imp\quad h_{0}\left(X\right)>0,\nonumber \\
X_{j}^{'}\t_{0}<0\ \text{for every }j=1,...,J\quad & \imp\quad h_{0}\left(X\right)<0.\label{eq:Mono_MMI}
\end{align}
Clearly \eqref{eq:Mono_MMI} nests \eqref{eq:Mono_Equiv} as special
case with $J=1$. However, as we move to multi-index settings with
$J\geq2$, the \emph{logical equivalence} relationship between the
sign of $h_{0}\left(X\right)$ and the sign of the parametric indexes
encoded in \eqref{eq:Mono_Equiv} is broken. Instead, \eqref{eq:Mono_MMI}
are stated as \emph{logical implication}s, whose converses may not
be generally true for $J\geq2$:
\begin{align*}
h_{0}\left(X\right)>0 & \quad\centernot\imp\quad X_{j}^{'}\t_{0}>0\ \text{for every }j=1,...,J,\\
h_{0}\left(X\right)<0 & \quad\centernot\imp\quad X_{j}^{'}\t_{0}>0\ \text{for every }j=1,...,J.
\end{align*}
On the other hand, instead of using the logical converses above, we
can leverage the \emph{logical contrapositions} of \eqref{eq:Mono_MMI}
as proposed in \citet*{gao2019robust}:
\begin{align}
h_{0}\left(X\right)>0 & \quad\imp\quad\text{NOT}\ \left(X_{j}^{'}\t_{0}<0\ \text{for every }j=1,...,J\right),\nonumber \\
h_{0}\left(X\right)<0 & \quad\imp\quad\text{NOT}\ \left(X_{j}^{'}\t_{0}>0\ \text{for every }j=1,...,J\right),\label{eq:ID_MMI}
\end{align}

\noindent which serve as identifying restrictions on $\t_{0}$, given
that $h_{0}$ is directly identified and can be nonparametrically
estimated from data. The TSMS estimator in the monotone multi-index
setting can then be formulated as any solution to 
\begin{align}
\max_{\t}\ -\frac{1}{n}\sum_{i=1}^{n}\left\{ \left[\hat{h}\left(X_{i}\right)\right]_{+}\prod_{j=1}^{J}\ind\left\{ X_{ij}^{'}\t<0\right\} +\left[-\hat{h}\left(X_{i}\right)\right]_{+}\prod_{j=1}^{J}\ind\left\{ X_{ij}^{'}\t>0\right\} \right\} ,\label{eq:Q0_MI}
\end{align}
where $\left[\cd\right]_{+}$ is the positive part (or ``rectifier'')
function. It is important to note that the right hand sides of \eqref{eq:ID_MMI}
are not negations of each other, i.e.,
\[
\prod_{j=1}^{J}\ind\left\{ X_{ij}^{'}\t<0\right\} \neq1-\prod_{j=1}^{J}\ind\left\{ X_{ij}^{'}\t>0\right\} ,
\]
thus we have to multiply $\left[\hat{h}\left(X_{i}\right)\right]_{+}$
and $\left[-\hat{h}\left(X_{i}\right)\right]_{+}$ with indicators
of very different sets. Hence, there are no counterparts of the original
maximum score or smoothed maximum score estimators in this setting,
while the TSMS estimator will still be consistent (under conditions
for point identification).

For example, \citet*{gao2019robust} considers a semiparametric panel
multinomial choice model, where infinite-dimensional fixed effects
are allowed to enter into consumer utilities in an additively nonseparble
way. Despite the complexity of the incorporated unobserved heterogeneity,
a certain form of intertemporal differences in conditional choice
probabilities satisfy \eqref{eq:ID_MMI}. In another paper, \citet*{GLX2018network}
study a dyadic network formation with nontransferable utilities, where
the formation of a link requires bilateral consent from the two involved
individuals. With a technique called \emph{logical differencing} that
cancels out the nonadditive unobserved heterogeneity terms in the
model, a nonparametrically estimable function can again be constructed
to satisfy \eqref{eq:ID_MMI}. In both papers, the TSMS estimators
are used to provide consistent estimates for the parameter of interest.
There are likely to be many other applications where the TSMS estimators
can be particularly useful, given that the logical implication relationships
in \eqref{eq:ID_MMI} can arise naturally in economic models that
possess certain monotonicity properties.

~

\noindent Motivated by the reasons discussed above, we seek to analyze
the asymptotic properties of the TSMS estimator in this paper. Since
the key differences between the TSMS estimator and the (smoothed)
maximum score estimator in terms of their \emph{asymptotic properties}
do not really depend on the number of indexes $J$\footnote{The difference in asymptotic properties should not be confused with
the differences in identification strategies, which are discussed
above.}, we first focus on deriving the convergence rate and asymptotic distribution
of the TSMS estimator in a simple binary choice model, where the key
drivers of the non-standard asymptotics for the TSMS estimator can
be best explained and compared.

Using a kernel first-step estimator, we find that the asymptotics
for the TSMS estimator feature two phase transitions, the thresholds
of which depends on the dimensionality and the order of smoothness
built in the model.

First, when the dimension of covariates is low relative to the order
of smoothness, the TSMS estimator is asymptotically equivalent to
the smoothed maximum score estimator, achieving the same convergence
rate and a corresponding normal asymptotic distribution. This is a
case where the first-stage nonparametric estimator serves as a smoothing
function on the discrete indicator function in the\emph{ best possible}
manner, delivering full ``speed-up'' from the $n^{-1/3}$ rate of
the original maximum score estimator and attaining the minimax-optimal
rate of the smooth maximum score estimator.

Second, when the dimension of covariates is moderate, the TSMS estimator
converges at a rate slower than $n^{-2/5}$ but faster than $n^{-1/3}$,
and has an asymptotic distribution characterized by the maximizer
of a Gaussian process plus a linear (bias) and a quadratic drift terms.
This is a scenario where the first-stage nonparametric estimation
plays a \emph{partially effective }role as a smoothing function: it
dampens the effect of the discreteness of the indicator function,
but the estimation error from the first-stage is too large (due to
the dimension of the first-stage estimation) to be negligible. It
turns out that a composite mean-zero error term of partial smoothing
on indicator function is asymptotically at the same order of the bias
from the first-stage estimation, hence leading to a Gaussian process
as well as a bias term in the limit.

Third, when the dimension of covariates is relatively high, the TSMS
estimator converges at a rate slower than $n^{-1/3}$ that decreases
with the dimension of covariates, and its asymptotic distribution
(without debiasing) is degenerate at a bias term. The (mean-zero)
disturbance term stays roughly at $n^{-1/3}$-rate, but it is dominated
by the bias from the first-stage estimation. The result is intuitive,
given that the performance of TSMS must be fundamentally dependent
on the performance of the first-stage nonparametric estimation.

Lastly, we extend the results on convergence rate beyond the binary
choice setting to monotone mult-index models.

~

\noindent As discussed above, our paper contributes to the line of
econometric literature on maximum score or rank-order estimation that
exploits monotonicity restrictions, as studied in \citet{manski1975maximum,manski1985semiparametric},
\citet*{kim1990cube}, \citet*{han1987non}, \citet*{horowitz1992smoothed}
and \citet*{abrevaya2000rank}, for example. Relatedly, the analysis
of the discreteness effects of indicator functions and the feature
of phase transition in asymptotic theories are also present in threshold
and change-point models: e.g. \citet*{banerjee2007confidence}, \citet*{lee2008semiparametric},
\citet*{kosorok2007introduction}, \citet*{song2016asymptotics},
\citet{lee2018oracle}, \citet*{hidalgo2019robust}, \citet*{lee2018factor}
and \citet*{mukherjee2020asymptotic}.

The technical part of this paper builds upon and contributes to the
large line of econometric literature on semi/non-parametric estimation.
General methods and techniques used in this paper are based on \citet*{andrews1994asymptotics},
\citet*{newey1994asymptotic}, \citet*{newey1994large}, \citet*{van1996weak},
\citet*{chen2007sieve}, \citet*{hansen2008uniform} and \citet*{kosorok2007introduction}.
More specifically, the handling of the non-smooth criterion functions
is also studied in \citet*{kim1990cube}, \citet*{chen2003estimation},
\citet*{seo2018local} and \citet*{delsol2020semiparametric}. However,
our asymptotic theory covers an intermediate case of non-smoothness
that leads to a convergence rate faster than cubic-root-style rate
obtained in \citet*{kim1990cube}, \citet*{seo2018local} and the
example considered in \citet*{delsol2020semiparametric}, but faster
than the root-$n$ rate considered by \citet*{chen2003estimation}.
This is due to a pivotal interplay between the smoothing provided
by the first-stage nonparametric estimation and its estimation error,
which appears to be an interesting feature unique to our TSMS estimator.

Lastly, this paper complements the work in \citet*{gao2019robust}
and \citet*{GLX2018network} by providing a formal analysis of the
asymptotic theory for the TSMS estimator.

\section{\label{sec:BinaryChoice}TSMS Estimator in Binary Choice Model}

We start with an analytical illustration of the two-stage maximum
score estimator in a binary choice setting, where the TSMS estimator
can be very clearly related to and compared with existing results
in the literature, in particular \citet{manski1975maximum,manski1985semiparametric},
\citet*{kim1990cube}, \citet*{horowitz1992smoothed} and \citet*{seo2018local}.
To better convey the key ideas, in this section we will impose several
simplifying assumptions that are stronger than necessary. We refer
the readers to Section for a more general treatment.

\subsection{\label{subsec:Setup} Model Setup}

Consider the following model a la \citet{manski1975maximum,manski1985semiparametric}:
\begin{equation}
y_{i}=\ind\left\{ X_{i}^{'}\t_{0}\geq\e_{i}\right\} ,\label{eq:model}
\end{equation}
where $y_{i}$ is an observed binary outcome variable, $X_{i}$ is
a vector of observed covariates taking values in $\R^{d}$, $\t_{0}\in\R^{d}$
is the unknown true parameter, and $\e_{i}$ is an unobserved scalar
random variable that satisfies the conditional median restriction
$\text{med}\left(\rest{\e_{i}}X_{i}\right)=0.$ Defining
\begin{equation}
Q_{0}\left(\t\right):=\E\left[\left(y_{i}-\frac{1}{2}\right)\ind\left\{ X_{i}^{'}\t\geq0\right\} \right],\label{eq:Q0}
\end{equation}
we know by \citet{manski1975maximum,manski1985semiparametric}, under
appropriate conditions, $\t_{0}$ is the unique maximizer of $Q_{0}$
on 
\[
\S^{d-1}:=\left\{ u\in\R^{d}:\norm u=1\right\} ,
\]
based on which the maximum score (MS thereafter) estimator is constructed
as
\begin{equation}
\hat{\t}_{MS}:\in\arg\max_{\t\in\S^{d-1}}\frac{1}{n}\sum_{i=1}^{n}\left(y_{i}-\frac{1}{2}\right)\ind\left\{ X_{i}^{'}\t\geq0\right\} .\label{eq:est_MS}
\end{equation}
\citet*{kim1990cube} demonstrated the cubic-root asymptotics of the
MS estimator $n^{\frac{1}{3}}\left(\hat{\b}_{MS}-\b_{0}\right)\dto\arg\max_{s\in\S^{D-1}}Z\left(s\right).$
Alternatively, \citet*{horowitz1992smoothed} considered the smoothed
maximum score (SMS thereafter) estimator
\begin{equation}
\hat{\t}_{SMS}:=\arg\max_{\t:\left|\t_{1}\right|=1}\frac{1}{n}\sum_{i=1}^{n}\left(y_{i}-\frac{1}{2}\right)\Phi\left(\frac{X_{i}^{'}\t}{b_{n}}\right)\label{eq:SMS}
\end{equation}
under the alternative normalization $\left|\t_{1}\right|=1$, where
$\Phi:\R\to\left[0,1\right]$ is a smooth kernel function and $b_{n}$
is a tuning parameter that shrinks towards $0$ as $n\to\infty$.
By \citet*{horowitz1992smoothed} the SMS estimator is asymptotically
normal with a convergence rate of $n^{-2/5}$ when, say, the kernel
function $\Phi$ is taken to be the CDF of the standard normal distribution.
More precisely, writing $\hat{\t}_{SMS}\equiv\left(\hat{\t}_{1,SMS},\tilde{\t}_{SMS}\right)$,
we have $n^{-\frac{2}{5}}\left(\tilde{\t}_{SMS}-\tilde{\t}_{0}\right)\dto\cN\left(\mu_{SMS},\Sigma_{SMS}\right)$
for some deterministic $\mu_{SMS}$ and $\Sigma_{SMS}$. Moreover,
with high-order kernel functions, the rate could be improved to be
arbitrarily close to $n^{-1/2}$.

In this paper we consider yet another form of estimator, which we
call ``two-step maximum score (TSMS) estimator'', based on exactly
the same population criterion function $Q_{0}$ defined above in \eqref{eq:Q0}.
Observing that $Q_{0}$ can be equivalently written as
\[
Q_{0}\left(\t\right)=\E\left[h_{0}\left(X_{i}\right)\ind\left\{ X_{i}^{'}\t\geq0\right\} \right]
\]
with 
\[
h_{0}\left(x\right):=\E\left[\rest{y_{i}}X_{i}=x\right]-\frac{1}{2},
\]
we define the TSMS estimator as
\begin{equation}
\hat{\t}:\in\arg\max_{\t\in\S^{d-1}}\frac{1}{n}\sum_{i=1}^{n}\hat{h}\left(X_{i}\right)\ind\left\{ X_{i}^{'}\t\geq0\right\} ,\label{eq:est_TSMS}
\end{equation}
where $\hat{h}$ is any first-stage nonparametric estimator of $h_{0}$.
\begin{assumption}
\label{assu:Basic} Write ${\cal X}:=\text{Supp}\left(X_{i}\right)\subseteq\R^{d}$
and suppose $\t_{0}\in\S^{d-1}$. Assume the following:
\begin{itemize}
\item[(a)]  $\left(y_{i},X_{i},\e_{i}\right)_{i=1}^{n}$ is i.i.d. and satisfies
model \eqref{eq:model}.
\item[(b)]  The (unknown) conditional CDF $F\left(\rest{\e}x\right)$ of $\e_{i}$
given $X_{i}=x$ is twice continuously differentiable w.r.t. $\left(\e,x\right)\in\R\times{\cal X}$
with uniformly bounded first and second derivatives (bounded by some
positive constant $M<\infty$).
\item[(c)] The conditional PDF $f\left(\rest{\e}x\right)$ of $\e_{i}$ given
$X_{i}=x$ is strictly positive for any $\e\in\R$ and $x\in{\cal X}$.
\item[(d)]  The conditional median of $\e_{i}$ given $X_{i}=x$ is zero, i.e.,
\[
F\left(\rest 0x\right)=\frac{1}{2},\quad\forall x\in{\cal X}.
\]
\item[(e)]  $X_{i}$ is uniformly distributed with support given by the open
unit ball in $\R^{d}$, i.e.,
\[
{\cal X}=\B^{d}:=\left\{ x\in\R^{d}:\norm x<1\right\} .
\]
\end{itemize}
\end{assumption}
\noindent Under Assumption \eqref{assu:Basic}, it is easy to show
that $\t_{0}$ is point identified as the unique maximizer of $Q_{0}$
over $\S^{d-1}$.

Furthermore, we note that the smoothness condition in Assumption \eqref{assu:Basic}(b)
imply the following smoothness condition on the unknown function $h_{0}\left(x\right):=\E\left[\rest{y_{i}-\frac{1}{2}}X_{i}=x\right]$.
\begin{cor}
\label{cor:h0_dif_bound}Under Assumption \ref{assu:Basic}(b), $h_{0}\left(x\right)$
is twice differentiable w.r.t. $x$ with uniformly bounded first and
second derivatives. 
\end{cor}

\subsection{\label{subsec:Bin_Rate}Asymptotic Theory}

Before presenting the formal results, we first explain how our TSMS
estimator differs from the MS and the SMS estimator, and provide some
intuitions about the key features of the asymptotics of the TSMS estimator.
For this purpose we write
\begin{align*}
g_{i}^{MS}\left(\t\right) & :=\left(y_{i}-\frac{1}{2}\right)\ind\left\{ X_{i}^{'}\t\geq0\right\} ,\\
g_{i}^{SMS}\left(\t\right) & :=\left(y_{i}-\frac{1}{2}\right)\Phi\left\{ \frac{X_{i}^{'}\t}{b_{n}}\right\} ,\\
g_{i}^{TSMS}\left(\t\right) & :=\quad\hat{h}\left(X_{i}\right)\ \ind\left\{ X_{i}^{'}\t\geq0\right\} ,
\end{align*}
which are the (random) functions of $\t$ being averaged into the
sample criterion for the MS, TMS and TSMS estimators above in \eqref{eq:est_MS},
\eqref{eq:SMS} and \eqref{eq:est_TSMS}.

Notice first that the indicator function $\ind\left\{ X_{i}^{'}\t\geq0\right\} $
in $g_{i}^{TSMS}\left(\t\right)$ is not smoothed out by a CDF-type
kernel function as in $g_{i}^{SMS}\left(\t\right)$. Consequently,
our TSMS sample criterion is discontinuous in $\t$ while having zero
derivative with respect to $\t$ almost everywhere, and thus we cannot
characterize the TSMS estimator by first-order conditions as in \citet{horowitz1992smoothed}.
More generally, we cannot directly use existing asymptotic theories
based on the (Lipschitz) continuity and differentiability of the criterion
function in parameters.

In the meanwhile, the TSMS sample criterion is also very different
from the original MS sample criterion, as in $g_{i}^{MS}\left(\t\right)$,
the term $\left(y_{i}-\frac{1}{2}\right)$ is also discrete in addition
to the indicator function $\ind\left\{ X_{i}^{'}\t\geq0\right\} $.
As explained in \citet*{kim1990cube}, for $\t$ close to $\t_{0}$,
the expected squared difference between $g_{i}^{MS}\left(\t\right)$
and $g_{i}^{MS}\left(\t_{0}\right)$:
\begin{align}
\E\left|g_{i}^{MS}\left(\t\right)-g_{i}^{MS}\left(\t_{0}\right)\right|^{2} & =\E\left|\ind\left\{ X_{i}^{'}\t\geq0\right\} -\ind\left\{ X_{i}^{'}\t_{0}\geq0\right\} \right|=O\left(\norm{\t-\t_{0}}\right)\label{eq:MS_Envelop}
\end{align}
is of the same order of magnitude as $\norm{\t-\t_{0}}$, which is
the key driver for the cubic-root asymptotics. However, in our case
\begin{align*}
\E\left|g_{i}^{TSMS}\left(\t\right)-g_{i}^{TSMS}\left(\t_{0}\right)\right|^{2} & =\E\left[\hat{h}^{2}\left(X_{i}\right)\left|\ind\left\{ X_{i}^{'}\t\geq0\right\} -\ind\left\{ X_{i}^{'}\t_{0}\geq0\right\} \right|\right]
\end{align*}
where $\hat{h}^{2}\left(X_{i}\right)$ enters as a weighting on the
discrete difference in indicators. As it turns out, $\hat{h}\left(X_{i}\right)$
will actually help smooth out the indicator function and making the
expected squared difference above to be smaller than $\norm{\t-\t_{0}}$,
even though $\hat{h}\left(X_{i}\right)$ itself does not depend on
$\t$. 

To see this, notice that whenever $\ind\left\{ x^{'}\t\geq0\right\} \neq\ind\left\{ x^{'}\t_{0}\geq0\right\} $
occurs, $0$ must lie between $x^{'}\t$ and $x^{'}\t_{0}$. Consider
first the case of 
\begin{equation}
x^{'}\t_{0}\geq0>x^{'}\t.\label{eq:case_one}
\end{equation}
When $\t$ is close to $\t_{0}$ in the sense of $\norm{\t-\t_{0}}$
being very close to $0$, the difference between $x^{'}\t$ and $x^{'}\t_{0}$
must also be small, since
\[
\left|x^{'}\t-x^{'}\t_{0}\right|\leq\norm x\norm{\t-\t_{0}}\leq\norm{\t-\t_{0}}.
\]
Hence, together with \eqref{eq:case_one} we have
\[
x^{'}\t_{0}\geq0>x^{'}\t=x^{'}\t_{0}+\left(x^{'}\t-x^{'}\t_{0}\right)\geq x^{'}\t_{0}-\norm{\t-\t_{0}},
\]
which implies that
\[
0\leq x^{'}\t_{0}<\norm{\t-\t_{0}},
\]
Now, define
\[
\ol x:=x-\norm{\t-\t_{0}}\t_{0}^{'},
\]
we have $\ol x^{'}\t_{0}=x^{'}\t_{0}-\norm{\t-\t_{0}}<0$ and hence
\[
h_{0}\left(\ol x\right)=F\left(\rest{\ol x^{'}\t_{0}}\ol x\right)-\frac{1}{2}<F\left(\rest 0\ol x\right)-\frac{1}{2}=0.
\]
However, by \eqref{eq:case_one} we have $x^{'}\t_{0}\geq0$ and thus
\[
h_{0}\left(x\right)=F\left(\rest{x^{'}\t_{0}}x\right)-\frac{1}{2}\geq F\left(\rest 0x\right)-\frac{1}{2}=0.
\]
By Lemma \ref{cor:h0_dif_bound}, we then have
\begin{align*}
h_{0}\left(x\right)\geq0>h_{0}\left(\ol x\right) & =h_{0}\left(x\right)+\Dif_{x}h_{0}\left(\tilde{x}\right)\left(\ol x-x_{0}\right)\\
 & >h_{0}\left(x\right)-\sup_{\tilde{x}}\left|\Dif_{x}h_{0}\left(\tilde{x}\right)\right|\cd\norm{\ol x-x_{0}}\\
 & \geq h_{0}\left(x\right)-M\cd\norm{\t-\t_{0}}\cd1
\end{align*}
which implies that
\[
0\leq h_{0}\left(x\right)\leq M\cd\norm{\t-\t_{0}}.
\]
A similar argument applies to the case of
\[
x^{'}\t_{0}<0\leq x^{'}\t,
\]
which implies that
\[
0>h_{0}\left(x\right)>-M\cd\norm{\t-\t_{0}}.
\]
Together, we have 
\begin{align*}
\ind\left\{ x^{'}\t\geq0\right\} \neq\ind\left\{ x^{'}\t_{0}\geq0\right\} \quad & \imp\quad\left|x^{'}\t_{0}\right|\leq\norm{\t-\t_{0}}\\
 & \imp\quad h_{0}\left(x\right)\leq M\norm{\t-\t_{0}}
\end{align*}
and thus
\[
h_{0}\left(x\right)\left|\ind\left\{ x^{'}\t\geq0\right\} -\ind\left\{ x^{'}\t_{0}\geq0\right\} \right|\leq M\norm{\t-\t_{0}},
\]
i.e., $h_{0}\left(x\right)$ automatically shrinks any nonzero difference
between the two indicators $\ind\left\{ x^{'}\t\geq0\right\} $ and
$\ind\left\{ x^{'}\t_{0}\geq0\right\} $ as $\t$ gets closer to $0$.
This results in
\[
\E\left[h_{0}^{2}\left(X_{i}\right)\left|\ind\left\{ X_{i}^{'}\t\geq0\right\} -\ind\left\{ X_{i}^{'}\t_{0}\geq0\right\} \right|\right]=o\left(\norm{\t-\t_{0}}\right),
\]
which contrasts sharply with the $O\left(\norm{\t-\t_{0}}\right)$
magnitude on the right-hand side of \eqref{eq:MS_Envelop}.

The discussion above will be formally captured by Lemma \ref{lem:Term1}. 

~

\noindent We now proceed to a formal development of the TSMS asymptotic
theory. For any $\t\in\T$ and any (deterministic) function $h:\R^{d}\to\R$
in $L_{2\left(X\right)}$, write 
\begin{align*}
g_{\t,h}\left(x\right) & :=h\left(x\right)\ind\left\{ x^{'}\t>0\right\} ,\ \forall x\in\R^{d},\\
Pg_{\t,h} & :=\int g_{\t,h}\left(x\right)dP\left(x\right),\\
\P_{n}g_{\t,h} & :=\frac{1}{n}\sum_{i=1}^{n}g_{\t,h}\left(X_{i}\right).\\
\GG_{n}g_{\t,h} & :=\sqrt{n}\left(\P_{n}g_{\t,h}-Pg_{\t,h}\right)
\end{align*}
so that

\begin{align}
\P_{n}\left(g_{\t,\hat{h}}-g_{\t_{0},\hat{h}}\right) & =\frac{1}{\sqrt{n}}\GG_{n}\left(g_{\t,h_{0}}-g_{\t_{0},h_{0}}\right)\nonumber \\
 & +\frac{1}{\sqrt{n}}\GG_{n}\left(g_{\t,\hat{h}}-g_{\t_{0},\hat{h}}-g_{\t,h_{0}}+g_{\t_{0},h_{0}}\right)\nonumber \\
 & +P\left(g_{\t,\hat{h}}-g_{\t_{0},\hat{h}}\right)\label{eq:Bin_decom}
\end{align}
and we proceed to deal with the three terms on the right hand side
of \eqref{eq:Bin_decom} separately.

Lemma \ref{lem:Term1} below presents a maximal inequality about the
first term, and formalizes our previous discussion that the smoothness
of the function $g_{\t,h_{0}}$ with respect to $\t$ in a small neighborhood
of $\t_{0}$:
\begin{lem}
\label{lem:Term1} Under Assumption \ref{assu:Basic}, for some constant
$M_{1}>0$,
\begin{equation}
P\sup_{\norm{\t-\t_{0}}\leq\d}\left|\GG_{n}\left(g_{\t,h_{0}}-g_{\t_{0},h_{0}}\right)\right|\leq M_{1}\d^{\frac{3}{2}}.\label{eq:Max1}
\end{equation}
\end{lem}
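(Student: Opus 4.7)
The plan is to treat this as a standard maximal inequality for an empirical process indexed by the class
\[
\cF_{\delta}:=\left\{g_{\t,h_{0}}-g_{\t_{0},h_{0}}:\t\in\S^{d-1},\ \norm{\t-\t_{0}}\leq\d\right\},
\]
and to obtain the rate $\delta^{3/2}$ by carefully controlling an envelope $F_\delta$ for $\cF_\delta$ in $L_2(P)$. The workhorse will be a uniform-entropy maximal inequality of the form $P\sup_{f\in\cF_\delta}\abs{\GG_n f}\lesssim J(1,\cF_\delta)\,\norm{F_\delta}_{P,2}$, so that once the class is shown to be VC-subgraph (making $J(1,\cF_\delta)$ a finite constant) and the envelope satisfies $\norm{F_\delta}_{P,2}\lesssim \delta^{3/2}$, the lemma follows.

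The first step is the envelope. I would essentially package the pointwise argument preceding the lemma into the bound
\[
\abs{g_{\t,h_{0}}(x)-g_{\t_{0},h_{0}}(x)}=\abs{h_{0}(x)}\,\abs{\ind\{x'\t\geq 0\}-\ind\{x'\t_{0}\geq 0\}}\leq M\,\d\,\ind\{\abs{x'\t_{0}}\leq \d\}
\]
valid uniformly for $\norm{\t-\t_0}\leq\d$ and $x\in\cX$. Indeed, if the two indicators disagree at $x$ then the inequality $\abs{x'\t-x'\t_0}\leq\norm{\t-\t_0}\leq\d$ forces $\abs{x'\t_0}\leq\d$, while by Lemma~\ref{lem:h0_dif_bound} and the median restriction (which gives $h_0(x)=0$ whenever $x'\t_0=0$), $\abs{h_0(x)}\leq M\abs{x'\t_0}\leq M\d$. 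I would therefore take $F_\delta(x):=M\d\,\ind\{\abs{x'\t_0}\leq\d\}$ as the envelope.

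The second step is to bound $\norm{F_\delta}_{P,2}$. Since $X_i$ is uniformly distributed on $\B^d$ and $\t_0\in\S^{d-1}$, the projection $X_i'\t_0$ has a bounded density on $[-1,1]$, so $P(\abs{X'\t_0}\leq\d)\leq C\d$ for some constant $C>0$ and all small $\d$. Combined with the prefactor $M\d$, this gives
\[
P F_\delta^{2}\leq M^{2}\d^{2}\cdot C\d=C'\d^{3},\qquad\norm{F_\delta}_{P,2}\leq \sqrt{C'}\,\d^{3/2}.
\]

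For the uniform-entropy side, the class $\{x\mapsto\ind\{x'\t\geq 0\}:\t\in\R^{d}\}$ of half-space indicators is a VC class, so the fixed-function weighted family $\{h_0(x)\ind\{x'\t\geq 0\}:\t\in\S^{d-1}\}$ is VC-subgraph, and taking differences against the fixed $\t_0$-member preserves the VC-subgraph property. Hence $\cF_\delta$ admits a uniform-entropy integral $J(1,\cF_\delta,F_\delta)$ that is bounded by an absolute constant independent of $\d$. Applying the standard VC-type maximal inequality (e.g., Theorem~2.14.1 of \citet*{van1996weak}) then yields
\[
P\sup_{\norm{\t-\t_0}\leq\d}\abs{\GG_n(g_{\t,h_0}-g_{\t_0,h_0})}\leq C''\,\norm{F_\delta}_{P,2}\leq M_1\,\d^{3/2},
\]
as required. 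The only substantive obstacle is the envelope computation: the naive envelope $\abs{\ind\{x'\t\geq 0\}-\ind\{x'\t_0\geq 0\}}$ would give only $\norm{F_\delta}_{P,2}=O(\d^{1/2})$ and the cubic-root rate of \citet*{kim1990cube}; it is the extra factor of $\d$ coming from the weight $h_0(x)$ vanishing linearly at the boundary $\{x'\t_0=0\}$ — the same smoothing phenomenon informally described just before the lemma — that upgrades the bound to $\d^{3/2}$.
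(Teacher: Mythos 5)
Your proof is correct and follows essentially the same approach as the paper's: bound an envelope $F_\delta$ for the localized class by $\delta\,\ind\{|x'\t_0|\leq\delta\}$ to get $\|F_\delta\|_{P,2}\lesssim\delta^{3/2}$, and combine with a VC-type uniform entropy bound and Theorem 2.14.1 of \citet*{van1996weak}. The only minor difference is in the derivation of the envelope: you invoke $h_0(x)=0$ on the hyperplane $\{x'\t_0=0\}$ together with the Lipschitz bound of Lemma~\ref{lem:h0_dif_bound} to conclude $|h_0(x)|\leq M|x'\t_0|$, while the paper reaches the same conclusion by sandwiching $0$ between $h_0(x\pm\|\t-\t_0\|\t_0)$ using only the sign relation $h_0(y)\gtrless 0\iff y'\t_0\gtrless 0$ — an approach that is slightly less elementary here but generalizes more readily to the multi-index setting of Lemma~\ref{lem:Term1}'.
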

\noindent The term $\d^{\frac{3}{2}}$ on the right hand side of \eqref{eq:Max1}
is in sharp contrast with, and much smaller than, the corresponding
term $\d^{\frac{1}{2}}$ under the usual setting with $n^{-1/3}$-asymptotics,
such as in \citet{kim1990cube} and \citet*{seo2018local}. In fact,
the smoothing by $h_{0}$ is so strong that $\d^{\frac{3}{2}}$ is
even of a smaller magnitude than $\d$, which corresponds to the standard
$n^{-1/2}$-asymptotics. This implies that, if we \emph{knew} the
true $h_{0}$, then any point estimator from $\arg\max_{\t\in\T}\P_{n}g_{\t,h_{0}}$
would actually converge to $\t_{0}$ at the $n$-rate. Such ``super-consistent''
rate would be reminiscent of the super-consistent least-square estimator
in change-point models \citet*[Section 14.5.1]{kosorok2007introduction,lee2008semiparametric,song2016asymptotics}.
Of course, since $h_{0}$ needs to be estimated in practice, we need
to account for the estimation error as captured by the remaining two
terms in \eqref{eq:Bin_decom}. As it turns out, the term $\d^{\frac{3}{2}}$
is negligible in comparison with those terms.

~

\noindent We now turn to the second term in \eqref{eq:Bin_decom},
which corresponds to the usual stochastic equicontinuity term in the
semiparametric estimation literature. We impose the following standard
smoothness condition on the functional space of $h_{0}$ and the sup-norm
convergence of the first-stage estimator $\hat{h}$. Specifically,
let ${\cal C}_{M}^{\left\lfloor d\right\rfloor +1}\left({\cal X}\right)$
denote a class of functions on ${\cal X}$ that possess uniformly
bounded derivatives up to order $\left\lfloor d\right\rfloor +1$.
\begin{assumption}
\label{assu:FirstStageRate}(i) $h_{0}\in{\cal H}\subseteq{\cal C}_{M}^{\left\lfloor d\right\rfloor +1}\left({\cal X}\right)$
(ii) $\hat{h}\in{\cal H}$ with probability approaching $1$ and (iii)
$\norm{\hat{h}-h_{0}}_{\infty}=O_{p}\left(a_{n}\right)$.
\end{assumption}
\noindent See, for example, \citet{hansen2008uniform}, \citet{belloni2015some}
and \citet*{chen2015optimal} for results on the sup-norm convergence
of kernel and sieve estimators. Lemma \ref{lem:Term2} below then
allows us to control the second term in \eqref{eq:Bin_decom}.
\begin{lem}
\label{lem:Term2} Under Assumptions \ref{assu:Basic}-\ref{assu:FirstStageRate}
with ${\cal H}:={\cal C}_{M}^{\left\lfloor d\right\rfloor +1}\left({\cal X}\right),$
for some constant $M_{2}>0$,
\begin{equation}
P\sup_{\t\in\T,h\in{\cal H}:\norm{\t-\t_{0}}\leq\d,\norm{h-h_{0}}_{\infty}\leq Ka_{n}}\left|\GG_{n}\left(g_{\t,h}-g_{\t_{0},h}-g_{\t,h_{0}}+g_{\t_{0},h_{0}}\right)\right|\leq M_{2}a_{n}\sqrt{\d}.\label{eq:Max2}
\end{equation}
\end{lem}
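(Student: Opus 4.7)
I would apply a bracketing-entropy maximal inequality (Theorem~2.14.2 of van der Vaart and Wellner, 1996) to the function class
\[
{\cal F}_{\d,a_n} := \left\{ g_{\t,h} - g_{\t_{0},h} - g_{\t,h_{0}} + g_{\t_{0},h_{0}} : \norm{\t - \t_{0}} \leq \d,\ \norm{h - h_{0}}_{\infty} \leq K a_{n} \right\},
\]
exploiting the multiplicative structure of its members.

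First I would observe that each $f \in {\cal F}_{\d,a_n}$ factors as
\[
f(x) = \bigl(h(x) - h_{0}(x)\bigr)\bigl(\ind\{x^{'}\t > 0\} - \ind\{x^{'}\t_{0} > 0\}\bigr).
\]
Combining $\norm{h-h_{0}}_{\infty} \leq K a_{n}$ with the geometric argument from the discussion preceding Lemma~\ref{lem:Term1} (whenever the two indicators disagree, $|x^{'}\t_{0}| \leq \norm{\t - \t_{0}} \leq \d$) yields a pointwise envelope
\[
F(x) := K a_{n} \cd \ind\{|x^{'}\t_{0}| \leq \d\}.
\]
Under Assumption~\ref{assu:Basic}(e), $P(|X^{'}\t_{0}| \leq \d) \leq C\d$, so $\norm{F}_{P,2} \lesssim a_{n}\sqrt{\d}$, and in fact every individual $f \in {\cal F}_{\d,a_n}$ likewise satisfies $\norm{f}_{P,2} \lesssim a_{n}\sqrt{\d}$.

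Next I would bound the bracketing entropy of ${\cal F}_{\d,a_n}$ via the product structure. The class $\{h - h_{0} : h \in {\cal H},\ \norm{h - h_{0}}_{\infty} \leq K a_{n}\} \subseteq {\cal C}_{M}^{\left\lfloor d\right\rfloor+1}({\cal X})$ satisfies
\[
\log N_{[\,]}\bigl(\eta, \cdot, L_{2}(P)\bigr) \lesssim (K a_{n}/\eta)^{d/(\left\lfloor d\right\rfloor + 1)}
\]
by Corollary~2.7.2 of van der Vaart and Wellner (1996), while the VC class $\{\ind\{x^{'}\t > 0\} - \ind\{x^{'}\t_{0} > 0\} : \norm{\t - \t_{0}} \leq \d\}$ has polynomial bracketing numbers in $1/\eta$. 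The standard product-of-brackets construction (splitting the smooth factor into positive and negative parts) together with the uniform bounds $Ka_{n}$ and $1$ on the two factors transfers these estimates to ${\cal F}_{\d,a_n}$; crucially, $d/(\left\lfloor d\right\rfloor + 1) < 1$ ensures that the bracketing integral converges.

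Substituting $\norm{F}_{P,2} \lesssim a_{n}\sqrt{\d}$ into the maximal inequality then gives
\[
P \sup_{f \in {\cal F}_{\d,a_n}} |\GG_{n} f| \leq C_{1} J_{[\,]}\bigl(\norm{F}_{P,2}, {\cal F}_{\d,a_n}, L_{2}(P)\bigr) \leq M_{2} \cd a_{n}\sqrt{\d},
\]
for an appropriate constant $M_{2}$. The main obstacle is the entropy calculation for the product class: one must carefully track how the $Ka_{n}$ scale of the smooth factor rescales the bracketing integral, and exploit the fact that \emph{every} $f \in {\cal F}_{\d,a_n}$ has $L_{2}(P)$ norm $\lesssim a_{n}\sqrt{\d}$ (not only the envelope), in order to pin down the target rate $a_{n}\sqrt{\d}$ rather than a slower polynomial in $\d$.
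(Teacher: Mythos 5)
Your strategy matches the paper's at a high level: compute the envelope norm $\norm{F}_{P,2}\lesssim a_{n}\sqrt{\d}$, control the entropy of the class, and multiply via a maximal inequality. Swapping the paper's uniform--entropy route (VW Theorem~2.14.1 together with Kosorok's preservation results) for bracketing (VW Theorem~2.14.2) is in itself harmless. The problem is the entropy step. You claim $\log N_{[\,]}\bigl(\eta,\{h-h_{0}:\norm{h-h_{0}}_{\infty}\leq Ka_{n}\},L_{2}(P)\bigr)\lesssim(Ka_{n}/\eta)^{d/(\lfloor d\rfloor+1)}$ by VW Corollary~2.7.2, but that corollary scales with the H\"older norm of the class, not with a separate sup-norm cap. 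The restriction $\norm{h-h_{0}}_{\infty}\leq Ka_{n}$ shrinks the envelope but leaves the order-$(\lfloor d\rfloor+1)$ smoothness modulus at $O(M)$, so the restricted class is \emph{not} a rescaled copy of ${\cal C}^{\lfloor d\rfloor+1}_{1}$ at scale $Ka_{n}$. At fine scales $\eta\ll Ka_{n}$ the bracketing entropy is still of order $(M/\eta)^{d/(\lfloor d\rfloor+1)}$: one can pack $\exp\{c(M/\eta)^{d/(\lfloor d\rfloor+1)}\}$ functions with $\norm{\cdot}_{\infty}\leq\eta\leq Ka_{n}$ and pairwise $L_{\infty}$ separation $\geq\eta$ by using bumps of amplitude $\eta$ and wavelength $(\eta/M)^{1/(\lfloor d\rfloor+1)}$. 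The sup-norm cap therefore gives no improvement at the scales the bracketing integral actually lives at.

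This is not a cosmetic slip. With the correct entropy $\lesssim(M/\eta)^{d/(\lfloor d\rfloor+1)}$ (or, after absorbing the slab restriction, $\lesssim(M\sqrt{\d}/\eta)^{d/(\lfloor d\rfloor+1)}$), the bracketing integral $\int_{0}^{a_{n}\sqrt{\d}}\sqrt{1+\log N_{[\,]}(\eta)}\,d\eta$ evaluates to order $a_{n}^{1-d/(2\lfloor d\rfloor+2)}\sqrt{\d}$, which is larger than the target $a_{n}\sqrt{\d}$ by a factor $a_{n}^{-d/(2\lfloor d\rfloor+2)}\to\infty$; the proof as written does not close. It is worth flagging that the paper's own step --- invoking a ``bounded uniform entropy integral'' for $\cG_{2,\d,n}$ inherited from $\cG-\cG_{1,\d}$ and then pairing it with the $n$-dependent envelope $G_{2,n,\d}$ --- faces the same tension, since BUEI is defined relative to a specified envelope and a subclass does not automatically inherit BUEI with respect to its own smaller envelope. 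So you cannot simply defer to the paper at this step: either route needs a genuinely sharper argument (for instance, conditioning on $\hat h$ via sample splitting so that the second-stage class is VC-type given the first stage, or exploiting the $(d-1)$-dimensional slab geometry of $\{|x^{'}\t_{0}|\leq\d\}$ in the entropy calculation) rather than an appeal to a rescaled Corollary~2.7.2.
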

\noindent We note that the term $\sqrt{\d}$ due to the non-smoothness
of the indicator function now shows up on the right hand side of \eqref{eq:Max2}
, but it is weighted down by $a_{n}$, the sup-norm rate at which
$\hat{h}$ converges to $h_{0}$.

~

\noindent Lastly, we turn to the third term $P\left(g_{\t,\hat{h}}-g_{\t_{0},\hat{h}}\right)$
in \eqref{eq:Bin_decom}, which is a familiar term in the standard
asymptotic theory for semiparametric estimation. Usually\citep*{newey1994large,chen2003estimation}
such a term can be written into an asymptotically linear form based
on the functional derivative of $g_{\t,h}$ in $h$, contributing
an additional component to the asymptotic variance of the $n^{-1/2}$
asymptotically normal semiparametric estimator. However, this will
not be the case with our current TSMS estimator.

The behavior of the third term can be most clearly illustrated if
we take $\hat{h}$ to be the (adapted) Nadaraya-Watson kernel estimator
defined by 
\begin{align}
\hat{h}\left(x\right) & :=\frac{1}{p_{x}}\cd\frac{1}{nb_{n}^{d}}\sum_{i=1}^{n}\left(y_{i}-\frac{1}{2}\right)\phi_{d}\left(\frac{x-X_{i}}{b_{n}}\right)\label{eq:NW}
\end{align}
where $b_{n}$ is a (sequence of positive) bandwidth parameter shrinking
towards zero, $\phi_{d}$ is taken to be the standard $d$-dimensional
Gaussian PDF, and $p_{x}=\pi^{-d/2}{\displaystyle \Gamma\left(d/2+1\right)}$
is the reciprocal of the volume of the unit ball $\B^{d}$ (with $\text{\textgreek{G}}$
being the Gamma function), since the true density of $X$ is assumed
to be known and uniform on $\B^{d}$.\footnote{The density, if unknown, can be estimated by the standard kernel density
estimator $\hat{p}\left(x\right)=\frac{1}{nb_{n}^{d}}\sum_{i=1}^{n}\phi_{d}\left(\frac{x-X_{i}}{b_{n}}\right)$,
so that $\hat{h}\left(x\right)=\frac{1}{nb_{n}^{d}}\sum_{i=1}^{n}\left(y_{i}-\frac{1}{2}\right)\phi_{d}\left(\frac{x-X_{i}}{b_{n}}\right)\frac{1}{\hat{p}\left(x\right)}.$
We note that the additional density estimation does not change the
convergence rate of $\hat{h}$, so we leave it out for simpler notation.} In this case, 
\begin{align*}
Pg_{\t,\hat{h}} & =\int\hat{h}\left(x\right)\ind\left\{ x^{'}\t\geq0\right\} p_{x}dx\\
 & =\int\frac{1}{nb_{n}^{d}}\sum_{i=1}^{n}\left(y_{i}-\frac{1}{2}\right)\phi_{d}\left(\frac{x-X_{i}}{b_{n}}\right)\ind\left\{ x^{'}\t\geq0\right\} dx\\
 & =\frac{1}{n}\sum_{i=1}^{n}\left(y_{i}-\frac{1}{2}\right)\int\frac{1}{b_{n}^{d}}\ind\left\{ x^{'}\t\geq0\right\} \phi_{d}\left(\frac{x-X_{i}}{b_{n}}\right)dx\\
 & =\frac{1}{nb_{n}^{D}}\sum_{i=1}^{n}\left(y_{i}-\frac{1}{2}\right)\int\phi_{d}\left(u\right)\ind\left\{ \left(X_{i}+b_{n}u\right)^{'}\t\geq0\right\} b_{n}^{d}du\quad\text{with }u:=\frac{x-X_{i}}{b_{n}}\\
 & =\frac{1}{n}\sum_{i=1}^{n}\left(y_{i}-\frac{1}{2}\right)\int\ind\left\{ \left(X_{i}+b_{n}u\right)^{'}\t\geq0\right\} \phi_{d}\left(u\right)du\\
 & =\frac{1}{n}\sum_{i=1}^{n}\left(y_{i}-\frac{1}{2}\right)\int\ind\left\{ u^{'}\t\geq-\frac{X_{i}^{'}\t}{b_{n}}\right\} \phi_{d}\left(u\right)du\\
 & =\frac{1}{n}\sum_{i=1}^{n}\left(y_{i}-\frac{1}{2}\right)\P_{U}\left(U^{'}\t\geq-\frac{X_{i}^{'}\t}{b_{n}}\right)\text{ where }U\sim\cN\left({\bf 0},I_{d}\right)\\
 & =\frac{1}{n}\sum_{i=1}^{n}\left(y_{i}-\frac{1}{2}\right)\P_{\ol U}\left\{ \ol U\geq-\frac{X_{i}^{'}\t}{b_{n}}\right\} \text{ with }\ol U:=U^{'}\t\sim\cN\left(0,\t^{'}\t=1\right)\\
 & =\frac{1}{n}\sum_{i=1}^{n}\left(y_{i}-\frac{1}{2}\right)\left(1-\Phi\left(-\frac{X_{i}^{'}\t}{b_{n}}\right)\right)\\
 & =\frac{1}{n}\sum_{i=1}^{n}\left(y_{i}-\frac{1}{2}\right)\Phi\left(\frac{X_{i}^{'}\t}{b_{n}}\right)
\end{align*}
which is exactly the same as the sample criterion for the SMS estimator
in \eqref{eq:SMS}.

Notably, $Pg_{\t,\hat{h}}$ is now (twice) differentiable in $\t$,
allowing us to exploit the Taylor expansion of $Pg_{\t,\hat{h}}$
around the true parameter $\t_{0}$. Hence, the essence of the asymptotic
theory for the SMS estimator in \citet*{horowitz1992smoothed} applies.
Nevertheless, we formally present the following results, given that
we are working with different normalization and support assumptions
than those in \citet*{horowitz1992smoothed}.\footnote{\citet*{horowitz1992smoothed} normalizes $\left|\t_{1}\right|=1$
and assumes that the conditional distribution of $X_{i,1}$ given
any realization of $\left(X_{i,2},...,X_{i,d}\right)$ has everywhere
positive density on the real line. In contrast, we assume that $\t\in\S^{d-1}$
and $Supp\left(X_{i}\right)=\B^{d}$, and will work with differential
geometry on $\S^{d-1}$.} 

Formally, define $Z_{i}:=\left(y_{i},X_{i}\right)$ and $\psi_{b_{n},\t}\left(z\right):=\left(y-\frac{1}{2}\right)\Phi\left(x^{'}\t/b_{n}\right)$,
and consider the following decomposition:
\begin{align*}
P\left(g_{\t,\hat{h}}-g_{\t_{0},\hat{h}}\right) & =\P_{n}\left(\psi_{n,\t}-\psi_{n,\t_{0}}\right)=\frac{1}{\sqrt{n}}\GG_{n}\left(\psi_{n,\t}-\psi_{n,\t_{0}}\right)+P\left(\psi_{n,\t}-\psi_{n,\t_{0}}\right),
\end{align*}
the right hand side of which can be controlled via the following lemma,
which is very similar to \citet*[Lemma 5]{horowitz1992smoothed}.
\begin{lem}
\label{lem:Term_3}With $\hat{h}$ given by \eqref{eq:NW}, for some
positive constants $M_{3},M_{4},M_{5}$ and $C>0$:
\begin{itemize}
\item[(i)]  $P\sup_{\norm{\t-\t_{0}}\leq\d}\left|\GG_{n}\left(\psi_{n,\t}-\psi_{n,\t_{0}}\right)\right|\leq M_{3}b_{n}^{-1}\left(\d+b_{n}\right)^{\frac{1}{2}}\d.$
\item[(ii)]  Writing $\d:=\norm{\t-\t_{0}}$, 
\begin{align*}
P\left(\psi_{n,\t}-\psi_{n,\t_{0}}\right) & =-\left(\t-\t_{0}\right)^{'}V\left(\t-\t_{0}\right)+b_{n}^{2}A_{1}\left(\t-\t_{0}\right)\\
 & \quad+o\left(\d^{2}\right)+o\left(b_{n}^{2}\d\right)+O\left(b_{n}^{-1}\d^{3}\left(1+b_{n}^{-2}\d^{-2}\right)\right)\\
 & \leq-C\d^{2}+M_{4}b_{n}^{2}\d+M_{5}b_{n}^{-1}\d^{3}\left(1+b_{n}^{-2}\d^{-2}\right)
\end{align*}
where the inequality on the second line holds for sufficiently large
$n$ with some $A_{1}$ and some positive semi-definite matrix $V$
of rank $d-1$.
\end{itemize}
\end{lem}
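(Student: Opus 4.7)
The plan is to leverage the key identity established just before the lemma statement: when $\hat{h}$ is given by the Nadaraya--Watson estimator in \eqref{eq:NW}, one has $Pg_{\t,\hat{h}} = \P_{n}\psi_{n,\t}$, so controlling the third term in \eqref{eq:Bin_decom} reduces to a standard analysis of the smoothed-criterion function $\psi_{n,\t}(z) = (y - 1/2)\Phi(x'\t/b_{n})$. This brings the problem squarely into the framework of \citet{horowitz1992smoothed}, and the proof essentially adapts his Lemma~5 to two twists of our setting: the spherical normalization $\t \in \S^{d-1}$ and the compactly supported design $X \in \B^{d}$.

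For part~(i), the goal is a maximal inequality for the parametric class $\cF_{\d} := \{\psi_{n,\t} - \psi_{n,\t_{0}} : \norm{\t-\t_{0}} \le \d\}$. First I would bound its $L_{2}(P)$-radius. Since $\Phi$ is Lipschitz with derivative at most $\phi(0)$, one has the pointwise bound $|\psi_{n,\t}-\psi_{n,\t_{0}}| \le \phi(0)\norm{\t-\t_{0}}/b_{n}$, but this is tight only on the event $\{|X'\t_{0}| \lesssim b_{n} + \d\}$, outside which $\Phi$ has essentially saturated. Since $X$ is uniform on $\B^{d}$, that event has probability $O(b_{n} + \d)$, giving $P(\psi_{n,\t} - \psi_{n,\t_{0}})^{2} \lesssim \d^{2}(\d + b_{n})/b_{n}^{2}$. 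The class $\cF_{\d}$ is smoothly parametrized in $\t$, hence its uniform entropy integral is uniformly bounded, and a standard maximal inequality (e.g.\ Theorem~2.14.1 of \citet{van1996weak}) yields $\E\sup_{\cF_{\d}}|\GG_{n}f| \lesssim \sigma \lesssim b_{n}^{-1}(\d + b_{n})^{1/2}\d$, which is the required bound.

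For part~(ii), I would compute $P\psi_{n,\t} = \int_{\B^{d}} h_{0}(x)\Phi(x'\t/b_{n}) p_{x}\,dx$ via a two-step Taylor expansion along the lines of \citet{horowitz1992smoothed}. Introducing orthogonal coordinates $x = v\t_{0} + w$ with $v = x'\t_{0}$ and $w \in \t_{0}^{\perp}$, and rescaling $v = b_{n}u$, I would Taylor-expand $h_{0}(b_{n}u\,\t_{0} + w)$ (and the density contribution) around $v = 0$ to second order; the second-order term, integrated against the Gaussian moment $u^{2}\phi(u)$, produces the bias term $b_{n}^{2} A_{1}(\t - \t_{0})$. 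The factor $\Phi(u + x'(\t - \t_{0})/b_{n}) - \Phi(u)$ is then expanded to second order in $x'(\t - \t_{0})/b_{n}$, delivering after integration the quadratic form $-(\t - \t_{0})'V(\t - \t_{0})$. The spherical normalization enters crucially here: $\t'\t_{0} = 1 - \tfrac{1}{2}\norm{\t - \t_{0}}^{2}$, so the component of $(\t - \t_{0})$ along $\t_{0}$ is of order $\d^{2}$ and is absorbed into $o(\d^{2})$, which is exactly why $V$ is positive semi-definite of rank $d - 1$ on the tangent space $\t_{0}^{\perp}$. The higher-order remainder is bounded using uniform smoothness of $h_{0}$ (via Lemma~\ref{lem:h0_dif_bound}) and standard Gaussian tail estimates on $\Phi$ and $\phi$; the two summands in $O(b_{n}^{-1}\d^{3}(1 + b_{n}^{-2}\d^{-2}))$ come respectively from the cubic term in the expansion of $\Phi$ and from the cubic interaction between the $b_{n}$-smoothing and $(\t - \t_{0})$.

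The main obstacle is the careful bookkeeping across the Taylor expansions in the two small parameters $\d$ and $b_{n}$: in particular, verifying the rank-$(d - 1)$ structure of $V$ via the spherical geometry, and correctly tracking the delicate interaction of the smoothing bandwidth with the perturbation to recover the precise remainder form. The second line of the lemma then follows by absorbing all $o(\cdot)$ remainders into the leading $-C\d^{2}$ and $M_{4}b_{n}^{2}\d$ for sufficiently large $n$, using positive definiteness of $V$ on $\t_{0}^{\perp}$ (which is exactly the content of the identification argument behind the maximum-score estimator).
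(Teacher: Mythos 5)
Your proposal tracks the paper's proof closely: for part (i), the envelope bound via the Lipschitz constant $\phi(0)/b_{n}$ that is only tight on a band $\{|x'\t_{0}| \lesssim \d + b_{n}\}$ and decays Gaussianly outside is exactly the paper's $\ol{\phi}_{n,\d}$ construction, and part (ii) uses the same orthogonal-coordinate change, $v = b_{n}u$ rescaling, and crossed double Taylor expansion in $\d$ and $b_{n}$ (the paper expands in $\t$ first via the Riemannian gradient/Hessian of $\S^{d-1}$ and then in $u_{1}$, whereas you interleave them, but both deliver identical terms by the same symmetry and moment cancellations). This is essentially the paper's argument, correctly sketched.
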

\noindent Combining the results from Lemma \ref{lem:Term1}, \ref{lem:Term2}
and \ref{lem:Term_3}, we obtain the following theorem regarding the
convergence rate of the TSMS estimator.
\begin{thm}[Rate of Convergence]
\label{thm:Thm_Bin_Rate} With $\hat{h}$ given by the Nadaraya-Watson
estimator \eqref{eq:NW}, for any $b_{n}\to0$ and $nb_{n}^{d}/\log n\to\infty$,
\begin{equation}
\norm{\hat{\t}-\t_{0}}=O_{p}\left(\max\left\{ b_{n}^{2},\ \left(nb_{n}\right)^{-\frac{1}{2}},\ \left(n^{2}b_{n}^{d}/\log n\right)^{-\frac{1}{3}}\right\} \right).\label{eq:Bin_Rate_b}
\end{equation}
For $d<4$, with the optimal bandwidth choice $b_{n}\sim n^{-\frac{1}{5}}$,
\[
\norm{\hat{\t}-\t_{0}}=O_{p}\left(n^{-2/5}\right).
\]
For $4\leq d<6$, with the optimal (up to log factors) bandwidth choice
$b_{n}\sim n^{-\frac{2}{d+6}}$,
\[
\norm{\hat{\t}-\t_{0}}=O_{p}\left(n^{-\frac{4}{d+6}}\left(\log n\right)^{\frac{1}{3}}\right).
\]
For $d\geq6$, with the optimal (up to log factors) bandwidth choice
$b_{n}\sim\left(n/\log^{2}n\right)^{-\frac{1}{d}}$,
\[
\norm{\hat{\t}-\t_{0}}=O_{p}\left(n^{-\frac{2}{d}}\left(\log n\right)^{\frac{4}{d}}\right).
\]
\end{thm}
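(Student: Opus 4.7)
The plan is to deploy a standard peeling (shell) argument that balances the quadratic lower bound on the population criterion from Lemma~\ref{lem:Term_3}(ii) against the three empirical-process fluctuation bounds from Lemmas~\ref{lem:Term1}, \ref{lem:Term2}, and \ref{lem:Term_3}(i). Starting from the optimality $\P_n(g_{\hat\t,\hat h} - g_{\t_0,\hat h}) \geq 0$ and the decomposition~\eqref{eq:Bin_decom}, refined through the Nadaraya--Watson identity $Pg_{\t,\hat h} = \P_n \psi_{n,\t}$ into
\[
P(g_{\t,\hat h} - g_{\t_0,\hat h}) = n^{-1/2}\GG_n(\psi_{n,\t} - \psi_{n,\t_0}) + P(\psi_{n,\t} - \psi_{n,\t_0}),
\]
I combine all four contributions on the high-probability event $\{\hat h \in {\cal H},\ \|\hat h - h_0\|_\infty \leq K a_n\}$, where $a_n \asymp b_n^2 + \sqrt{\log n/(nb_n^d)}$ is the standard sup-norm rate for the Nadaraya--Watson estimator granted by Assumption~\ref{assu:FirstStageRate}.

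Setting $\d := \|\hat\t - \t_0\|$ and using Lemma~\ref{lem:Term_3}(ii) in the form $P(\psi_{n,\hat\t} - \psi_{n,\t_0}) \leq -C\d^2 + M_4 b_n^2\, \d + (\text{remainder})$, with the remainder absorbed as $o(\d^2)$ once $\d \ll b_n$, the four combined bounds give, with probability approaching one,
\[
(C + o_P(1))\d^2 \;\lesssim\; \frac{\d^{3/2}}{\sqrt n} + \frac{a_n\sqrt\d}{\sqrt n} + \frac{b_n^{-1}(\d+b_n)^{1/2}\,\d}{\sqrt n} + b_n^2\,\d.
\]
Balancing each fluctuation term against $\d^2$ yields the candidate rates $n^{-1}$, $a_n^{2/3}n^{-1/3}$, $(nb_n)^{-1/2}$, and $b_n^2$. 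Substituting the bias/variance decomposition of $a_n$, the super-consistent $n^{-1}$ is always dominated, the bias contribution $b_n^{4/3}n^{-1/3}$ is dominated by $b_n^2$ (under $b_n \gg n^{-1/2}$), while the variance contribution produces $(n^2 b_n^d/\log n)^{-1/3}$. The surviving three rates are exactly those appearing in~\eqref{eq:Bin_Rate_b}.

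To upgrade this into an $O_p$ statement I run a standard peeling argument: preliminary consistency $\hat\t \pto \t_0$ (obtained from uniform convergence of $\t \mapsto \P_n g_{\t,\hat h}$ to $Q_0$, via Lemma~\ref{lem:Term2} together with Assumption~\ref{assu:FirstStageRate}) localizes $\hat\t$ inside a small neighborhood of $\t_0$; I then partition $\{M r_n < \|\hat\t - \t_0\| \leq \eta\}$ into dyadic shells $S_j = \{2^{j-1} r_n < \|\hat\t - \t_0\| \leq 2^j r_n\}$, apply the suprema in Lemmas~\ref{lem:Term1}--\ref{lem:Term_3} on each $S_j$, invoke Markov's inequality, and sum the resulting geometric tail, which is $O(M^{-\alpha})$ for some $\alpha > 0$ once $r_n$ equals the maximum of the three surviving rates. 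The three concrete bandwidth-dependent conclusions then follow from elementary algebra: balancing the dominant pair in $\{b_n^2,\ (nb_n)^{-1/2},\ (n^2 b_n^d/\log n)^{-1/3}\}$ and solving for $b_n$ gives $b_n \sim n^{-1/5}$ (rate $n^{-2/5}$) for $d < 4$; $b_n \sim n^{-2/(d+6)}$ (rate $n^{-4/(d+6)}(\log n)^{1/3}$) for $4 \leq d < 6$; and $b_n \sim (n/\log^2 n)^{-1/d}$ (rate $n^{-2/d}(\log n)^{4/d}$) for $d \geq 6$, the last regime being forced by the constraint $nb_n^d/\log n \to \infty$, which prevents further shrinking $b_n$ to equate $b_n^2$ with $(n^2 b_n^d/\log n)^{-1/3}$.

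The main obstacle will be the rigorous shell argument with the non-standard fluctuation envelope $a_n\sqrt\d$ from Lemma~\ref{lem:Term2}: the $a_n$ prefactor couples the empirical-process fluctuation to the first-stage estimation rate, so verifying the sub-quadratic growth condition (that $\phi_n(\d)/\d^\alpha$ is non-increasing for appropriate $\alpha$) underlying the van der Vaart--Wellner peeling argument requires care with the joint dependence on $(\d, a_n, b_n)$. A secondary book-keeping delicacy is confirming that the remainder $O(b_n^{-1}\d^3(1 + b_n^{-2}\d^{-2}))$ from Lemma~\ref{lem:Term_3}(ii) is genuinely $o(\d^2)$ at the rate $r_n$ in each bandwidth regime, which reduces to a case-by-case check that $r_n \ll b_n$ under the optimal $b_n$ prescribed in each dimensional range.
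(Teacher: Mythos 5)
Your proposal is correct and takes essentially the same route as the paper: you propose to run the peeling/shell argument directly, whereas the paper packages the identical peeling into a verification of Conditions B1--B4 of Delsol and van Keilegom's Theorem 2 (and obtains preliminary consistency via the Glivenko--Cantelli property from Lemma~\ref{lem:J_uni_finite} together with their Theorem 1, rather than from Lemma~\ref{lem:Term2} as you suggest, a minor routine point). The decomposition into the four fluctuation/drift terms, the balancing against the $C\d^2$ quadratic, the substitution $a_n \asymp b_n^2 + (nb_n^d/\log n)^{-1/2}$, and the resulting three bandwidth regimes are identical to the paper's.
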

\noindent If the bandwidth is chosen to optimize the first-stage convergence
rate $a_{n}$, the final convergence rate for $\hat{\t}$ is characterized
by the following Corollary:
\begin{cor}
\label{cor:faster_than_1S}Let $a_{n}^{*}:=n^{-\frac{2}{d+4}}\sqrt{\log n}$
denote the optimal sup-norm convergence rate of $\hat{h}$ to $h$
(with respect to the first-stage estimation only). Then:
\begin{itemize}
\item[(i)]  With $b_{n}$ optimally chosen as in Theorem \ref{thm:Thm_Bin_Rate},
$\norm{\hat{\t}-\t_{0}}=o_{p}\left(a_{n}^{*}\right)$.
\item[(ii)]  With $b_{n}\sim n^{-\frac{1}{d+4}}$ so that $a_{n}=a_{n}^{*}$,
then $\norm{\hat{\t}-\t_{0}}=O_{p}\left(n^{-\frac{2}{d+4}}\right)$.
\end{itemize}
\end{cor}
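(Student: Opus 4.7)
Both parts follow by direct substitution into the rate formula \eqref{eq:Bin_Rate_b} of Theorem \ref{thm:Thm_Bin_Rate}. For part (i), I plug the regime-specific optimal bandwidths from the theorem and compare the resulting rate against $a_n^* = n^{-2/(d+4)}\sqrt{\log n}$ in each dimension regime. For part (ii), I substitute $b_n \sim n^{-1/(d+4)}$ into \eqref{eq:Bin_Rate_b} and identify which of the three terms in the max dominates. The entire proof reduces to comparison of $n$-exponents.

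\textbf{Part (i).} I handle the three regimes of Theorem \ref{thm:Thm_Bin_Rate} in turn. For $d < 4$ the theorem gives rate $n^{-2/5}$; the inequality $2/5 \ge 2/(d+4)$ holds with equality only at $d=1$, so either the $n$-power strictly beats $a_n^*$ or, at $d=1$, the tie in the polynomial part is broken by the $\sqrt{\log n}$ factor in $a_n^*$. For $4 \le d < 6$ the rate is $n^{-4/(d+6)}(\log n)^{1/3}$, and $4/(d+6) > 2/(d+4)$ (equivalent to $d > -2$) supplies a polynomial gap that absorbs the polylog factors. For $d \ge 6$ the rate is $n^{-2/d}(\log n)^{4/d}$, and $2/d > 2/(d+4)$ trivially, so the $n$-power again wins. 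In every regime this yields $\norm{\hat{\t} - \t_0} = o_p(a_n^*)$.

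\textbf{Part (ii).} With $b_n \sim n^{-1/(d+4)}$, the three candidate rates from \eqref{eq:Bin_Rate_b} have $n$-exponents
\[
 -\frac{2}{d+4}, \qquad -\frac{d+3}{2(d+4)}, \qquad -\frac{d+8}{3(d+4)},
\]
with at most a $(\log n)^{1/3}$ factor on the third term. An elementary comparison shows the first exponent is the largest (least negative) for all $d \ge 1$: the inequality $-2/(d+4) \ge -(d+3)/(2(d+4))$ reduces to $d \ge 1$, while $-2/(d+4) \ge -(d+8)/(3(d+4))$ reduces to the trivial $d \ge -2$. Hence $b_n^2 = n^{-2/(d+4)}$ dominates the max and delivers $\norm{\hat{\t} - \t_0} = O_p(n^{-2/(d+4)})$.

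\textbf{Main obstacle.} There is no conceptual difficulty; the proof is entirely exponent bookkeeping. The only borderline point is the case $d=1$ in part (i), where the polynomial parts of the TSMS and first-stage rates coincide and the strict $o_p$ conclusion must be read off the $\sqrt{\log n}$ factor in $a_n^*$; similarly, one should check the regime boundaries $d=4$ and $d=6$ in Theorem \ref{thm:Thm_Bin_Rate} to confirm the rate comparisons remain strict there. A careful tabulation across $d$ handles both.
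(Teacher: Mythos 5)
Your proof is correct and takes the only natural approach: direct substitution into the rate formula \eqref{eq:Bin_Rate_b} of Theorem \ref{thm:Thm_Bin_Rate} followed by comparison of exponents. The paper omits an explicit proof of this corollary, treating it as an immediate consequence of Theorem \ref{thm:Thm_Bin_Rate}; your exponent bookkeeping—especially your handling of the borderline case $d=1$ in part (i), where the polynomial parts tie and the $\sqrt{\log n}$ factor in $a_n^*$ is what delivers the $o_p$—is precisely what fills in that gap.
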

\noindent First, we observe that the bias and variances induced by
$P\left(g_{\t,\hat{h}}-g_{\t_{0},\hat{h}}\right)$ are of order $b_{n}^{2}$
and $\left(nb_{n}\right)^{-1/2}$, which do not depend on the dimension
$d$ as in \citet*{horowitz1992smoothed}. Setting $b_{n}\sim n^{-1/5}$
balances these two terms, $b_{n}^{2}\sim\left(nb_{n}\right)^{-1/2}\sim n^{-2/5}$.
However, in our current setting, we also need $a_{n}$ to be sufficiently
small so as to control the disturbances induced by the first-stage
nonparametric estimation of $h$, whose sup-norm convergence rate
$a_{n}=\left(nb_{n}^{d}/\log n\right)^{-1/2}+b_{n}^{2}$ depends on
the dimension $d$. This leads to the last term $\left(n^{2}b_{n}^{d}\log n\right)^{-\frac{1}{3}}$
in \eqref{eq:Bin_Rate_b}, which in comparison is not required for
the SMS estimator. For $d<4$, this term is negligible with $b_{n}\sim n^{-\frac{1}{5}}$,
but for $d\geq4$ this term becomes pivotal. It turns out that for
$d\geq4$ but $d<6$, the optimal choice of $b_{n}\sim n^{-\frac{2}{d+6}}$
balances $b_{n}^{2}$ with $\left(n^{2}b_{n}^{d}\log n\right)^{-\frac{1}{3}}$
while guaranteeing that the sup-norm consistency of the first-stage
estimator
\[
\left(nb_{n}\right)^{-1/2}<<\norm{\hat{\t}-\t_{0}}\sim b_{n}^{2}<<a_{n}\sim\left(nb_{n}^{d}/\log n\right)^{-1/2}=o\left(1\right).
\]
In other words, the choice of $b_{n}\sim n^{-\frac{2}{d+6}}$ is ``over-smooth''
relative to the SMS optimal bandwidth, while being ``under-smooth''
relative to the optimal $d$-dimensional kernel regression bandwidth.
However, if $d\geq6$, then it is no longer possible to even balance
$b_{n}^{2}$ with $\left(n^{2}b_{n}^{d}\log n\right)^{-\frac{1}{3}}$,
so we minimize $b_{n}^{2}$ subject to the consistency constraint
that $a_{n}=\left(nb_{n}^{d}/\log n\right)^{-1/2}\to0$ by setting
$b_{n}$ to be slightly larger than $n^{-\frac{1}{d}}$. In this case,
the dominant term in $\norm{\hat{\t}-\t_{0}}$ is a deterministic
bias, while the disturbances are still of the order $\left(n^{2}b_{n}^{d}/\log n\right)^{-\frac{1}{3}}\sim\left(n\log n\right)^{-\frac{1}{3}}$.

Lastly, we note in Corollary \eqref{cor:faster_than_1S} that the
optimal rates are all \emph{strictly} faster than the optimal first-stage
convergence rate $a_{n}^{*}$.

~

\noindent We now turn to the asymptotic distribution of $\hat{\t}$,
which has phase transitions at $d=p+2=4$ and $d=3p=6$ (in our current
setting) given the discussion above.
\begin{thm}[Asymptotic Distribution]
\label{thm:Thm_Bin_Dist} There exist positive semi-definite matrix
$V$ and $\O$ that are invertible in the $\left(d-1\right)$-dimensional
tangent space of $\S^{d-1}$ at $\t_{0}$, as well as a constant vector
$A_{1}$ orthogonal to $\t_{0}$, such that:
\begin{itemize}
\item[(i)] If $d<4$ and $b_{n}\sim n^{-1/5}$, then $\hat{\t}$ is asymptotically
normal:
\begin{align}
n^{\frac{2}{5}}\left(I-\t_{0}\t_{0}^{'}\right)\left(\hat{\t}-\t_{0}\right) & \dto\cN\left(V^{-}A_{1},V^{-}\O V^{-}\right).\label{eq:Bin_Dist_s}
\end{align}
\item[(ii)] If $4\leq d<6$ and $b_{n}\sim n^{-\frac{2}{d+6}}$, then
\begin{align}
n^{\frac{4}{d+6}}\left(\log n\right)^{-\frac{1}{3}}\left(I-\t_{0}\t_{0}^{'}\right)\left(\hat{\t}-\t_{0}\right) & \dto\arg\max_{s\in\R^{d}:s^{'}\t_{0}=0}\left(G\left(s\right)+A_{1}^{'}s-\frac{1}{2}s^{'}Vs\right),\label{eq:Bin_Dist_m}
\end{align}
where $G$ is some $d$-dimensional zero-mean Gaussian process.
\item[(iii)] If $d\geq6$ and $b_{n}\sim\left(n/\log^{2}n\right)^{-\frac{1}{d}}$,
then
\begin{equation}
n^{\frac{2}{d}}\left(\log n\right)^{-\frac{4}{d}}\left(I-\t_{0}\t_{0}^{'}\right)\left(\hat{\t}-\t_{0}\right)\pto V^{-}A_{1}.\label{eq:Bin_Dist_l}
\end{equation}
\end{itemize}
\end{thm}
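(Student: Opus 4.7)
The plan is to invoke an argmax continuous mapping theorem (e.g.~Theorem 3.2.2 of \citet{van1996weak}) on a localized, rescaled criterion, with the three regimes arising from which terms in the decomposition \eqref{eq:Bin_decom} dominate. Let $r_{n}$ denote the convergence rate in each case from Theorem \ref{thm:Thm_Bin_Rate}, and parametrize a neighborhood of $\t_{0}$ in $\S^{d-1}$ by tangent vectors $s\in\{s\in\R^{d}:s^{'}\t_{0}=0\}$ via $\t_{n}(s):=(\t_{0}+r_{n}^{-1}s)/\norm{\t_{0}+r_{n}^{-1}s}$, so that $(I-\t_{0}\t_{0}^{'})(\t_{n}(s)-\t_{0})=r_{n}^{-1}s+O(r_{n}^{-2})$. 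Set $M_{n}(s):=r_{n}^{2}\P_{n}(g_{\t_{n}(s),\hat{h}}-g_{\t_{0},\hat{h}})$. Since Theorem \ref{thm:Thm_Bin_Rate} already gives $\hat{s}_{n}:=r_{n}(I-\t_{0}\t_{0}^{'})(\hat{\t}-\t_{0})=O_{p}(1)$, it suffices to establish a weak limit for $M_{n}$ on compacts in $s$ and identify the argmax of that limit.

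Combining \eqref{eq:Bin_decom} with the identity $Pg_{\t,\hat{h}}=\P_{n}\psi_{n,\t}$ derived in the body, I would decompose $M_{n}(s)=A_{n}(s)+B_{n}(s)+C_{n}(s)+D_{n}(s)$ with
\[
A_{n}:=r_{n}^{2}P(\psi_{n,\t_{n}(s)}-\psi_{n,\t_{0}}),\quad B_{n}:=\frac{r_{n}^{2}}{\sqrt{n}}\GG_{n}(\psi_{n,\t_{n}(s)}-\psi_{n,\t_{0}}),
\]
\[
C_{n}:=\frac{r_{n}^{2}}{\sqrt{n}}\GG_{n}(g_{\t_{n}(s),h_{0}}-g_{\t_{0},h_{0}}),\quad D_{n}:=\frac{r_{n}^{2}}{\sqrt{n}}\GG_{n}(g_{\t_{n}(s),\hat{h}}-g_{\t_{0},\hat{h}}-g_{\t_{n}(s),h_{0}}+g_{\t_{0},h_{0}}).
\]
Lemma \ref{lem:Term_3}(ii) expands $A_{n}(s)=-\tfrac{1}{2}s^{'}Vs+(r_{n}b_{n}^{2})A_{1}^{'}s+o_{p}(1)$ uniformly on compacts, while Lemmas \ref{lem:Term1}, \ref{lem:Term_3}(i), and \ref{lem:Term2} respectively give $|C_{n}|\lesssim r_{n}^{1/2}|s|^{3/2}/\sqrt{n}$, $|B_{n}|\lesssim r_{n}|s|/\sqrt{nb_{n}}$, and $|D_{n}|\lesssim r_{n}^{3/2}a_{n}|s|^{1/2}/\sqrt{n}$. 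A direct exponent calculation with the prescribed bandwidths shows: in case (i), $r_{n}b_{n}^{2}$ and $r_{n}/\sqrt{nb_{n}}$ are both $O(1)$ and $D_{n}\to0$; in case (ii), $B_{n}\to0$ while $D_{n}$ is tight at scale $|s|^{1/2}$; in case (iii), both stochastic pieces $B_{n}$ and $D_{n}$ vanish.

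For case (i), the smoothness of $\psi_{n,\t}$ in $\t$ on the Gaussian-kernel scale permits the linearization $\psi_{n,\t_{n}(s)}-\psi_{n,\t_{0}}\approx(y-\tfrac{1}{2})\phi(x^{'}\t_{0}/b_{n})x^{'}(\t_{n}(s)-\t_{0})/b_{n}$, and an empirical-process CLT (in the spirit of Lemma 7 of \citet{horowitz1992smoothed}) yields $B_{n}(s)\dto W^{'}s$ with $W\sim\cN(0,c\O)$. Hence $M_{n}\dto-\tfrac{1}{2}s^{'}Vs+(A_{1}+W)^{'}s$, whose unique argmax $V^{-}(A_{1}+W)$ is Gaussian with mean $V^{-}A_{1}$ and covariance $cV^{-}\O V^{-}$, establishing \eqref{eq:Bin_Dist_s}. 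Case (iii) is immediate: $M_{n}\to-\tfrac{1}{2}s^{'}Vs+A_{1}^{'}s$ deterministically, with unique maximizer $V^{-}A_{1}$, yielding \eqref{eq:Bin_Dist_l}. For case (ii), finite-dimensional convergence of $D_{n}$ to a centered Gaussian process $G$ can be obtained by substituting the Nadaraya-Watson representation of $\hat{h}-h_{0}$ into $\GG_{n}$ and applying a multivariate CLT, while tightness follows from the increment bound of Lemma \ref{lem:Term2}; combined with the deterministic pieces that survive, $M_{n}\dto G(s)+A_{1}^{'}s-\tfrac{1}{2}s^{'}Vs$, whose a.s.~unique argmax is guaranteed by the dominant strictly concave quadratic and sample-path continuity of $G$.

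The principal technical obstacle is case (ii): Lemma \ref{lem:Term2} delivers only an equicontinuity-type bound for $D_{n}$, so weak convergence additionally requires finite-dimensional convergence to a well-defined covariance kernel. Establishing this requires carefully unwinding the dependence of $\hat{h}(X_{i})-h_{0}(X_{i})$ on the $i$-th observation within $\GG_{n}$ and verifying Lindeberg-type conditions for the resulting multilinear statistics. A secondary concern is the careful bookkeeping of log factors and rate exponents in case (ii) so that the deterministic bias $A_{1}^{'}s$ and the Gaussian process $G(s)$ appear at a common scale, which hinges on balancing $r_{n}b_{n}^{2}$ against $r_{n}^{3/2}a_{n}/\sqrt{n}$ through the bandwidth choice. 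Finally, the $O(r_{n}^{-2})$ remainders introduced by the sphere parametrization are absorbed into $o_{p}(1)$ terms throughout.
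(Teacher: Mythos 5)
Your proposal is correct in structure and follows essentially the same route as the paper: localize and rescale the criterion, decompose into the deterministic Taylor-expanded part (Lemma \ref{lem:Term_3}(ii)), the smooth-kernel stochastic part (Lemma \ref{lem:Term_3}(i)), the $h_{0}$-smoothed remainder (Lemma \ref{lem:Term1}), and the first-stage stochastic equicontinuity term (Lemma \ref{lem:Term2}), and then invoke the argmax continuous mapping theorem with the three regimes determined by which pieces survive. Your rate bookkeeping ($r_{n}b_{n}^{2}$, $r_{n}/\sqrt{nb_{n}}$, $r_{n}^{3/2}a_{n}/\sqrt{n}$, $r_{n}^{1/2}/\sqrt{n}$) matches the paper's.

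The one place the routes genuinely diverge is case (ii), and it is worth noting what each side pays. The paper does not try to compute finite-dimensional distributions of $D_{n}$ by unwinding the Nadaraya--Watson kernel as you propose; instead it verifies van der Vaart--Wellner Condition 2.11.21 (envelope boundedness, a Lindeberg condition, and an $L_{2}$-increment condition) for the localized class $\{\g_{n,s}\}$ and invokes Theorem 2.11.23, obtaining the Gaussian-process limit $G$ with covariance $H(s,t)=\lim_{n}(P\g_{n,s}\g_{n,t}-P\g_{n,s}P\g_{n,t})$ in one stroke, without ever needing a V-statistic/Hoeffding-decomposition argument. Your suggested substitution of the kernel representation into $\GG_{n}$ produces a degenerate double sum, so a plain multivariate CLT will not apply directly — you would need a U/V-statistic CLT and careful handling of the diagonal — which makes your proposed route materially heavier. (Both approaches, yours and the paper's, also quietly elide the fact that $\hat{h}$ appearing inside $\GG_{n}$ makes the class random; the paper's envelope bound is only a with-probability-approaching-one statement.) For case (i), your use of the argmax CMT is interchangeable with the paper's invocation of Theorem 3.2.16 of van der Vaart--Wellner, which is a specialization of the same idea to quadratic limits; the paper additionally carries out the change of coordinates $T_{\t_{0}}$ to express $V^{-}$ and $V^{-}\O V^{-}$ through the tangent-space restrictions $V_{u_{-1}}$ and $\O_{\ol u_{-1}}$, which you suppress but would need if you wanted to actually exhibit the pseudoinverse. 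One small inconsistency to flag: you write $A_{n}(s)=-\tfrac{1}{2}s'Vs+\cdots$, while Lemma \ref{lem:Term_3}(ii) and the paper's own proof of case (ii) produce $-s'Vs$ without the $\tfrac{1}{2}$; this is a pre-existing mismatch between the paper's lemma and theorem statements, not something your proof introduces, but be careful that the argmax $V^{-}(A_{1}+W)$ in case (i) depends on which convention you carry through.
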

\noindent As expected, for small $d$ such that the $n^{-2/5}$ convergence
rate is attainable, the influence from the first-stage nonparametric
regression $\hat{h}$ is asymptotically negligible, making the TSMS
estimator asymptotically equivalent to the SMS estimator. The asymptotic
normality result in \eqref{eq:Bin_Dist_s} parallels the \citet*{horowitz1992smoothed}
result, but is stated through projection onto the tangent space of
the unit sphere at $\t_{0}$ (which is essentially $\R^{d-1}$ and
can be locally mapped back to the unit sphere).

For intermediate $4\leq d<6$, the disturbances from the first-stage
estimation of $h_{0}$ kick in, leading to asymptotic randomness in
the form of a Gaussian process. Such disturbances, corresponding to
the term of order $a_{n}\sqrt{\d_{n}}$ in Lemma \ref{lem:Term2},
are the joint product of the first-stage estimation error (of order
$a_{n}$) and the discreteness of the indicator function (or the order
$\sqrt{\d_{n}}$). The magnitude of randomness in the final Gaussian
process $G\left(s\right)$ induced by this term is balanced with the
asymptotic bias $A_{1}^{'}s$ produced by the (optimally chosen level
of) kernel smoothing, both of which survive in the final asymptotic
distribution along with usual quadratic identifying information $\left(-\frac{1}{2}s^{'}Vs\right)$.

In the standard asymptotic theory for $n^{-1/2}$-normal semiparametric
estimators (e.g. \citealp{newey1994large}, and \citealp*{chen2003estimation}),
this term will generally be negligible under the standard version
of stochastic equicontinuity conditions. Moreover, the term $P\left(g_{\t,\hat{h}}-g_{\t,h_{0}}\right)$
can usually be linearized based on its functional derivative with
respect to $h_{0}$ and shown (or assumed) to be $n^{-1/2}$-normal
(Theorem 8.1 in \citealp*{newey1994large}, and Condition 2.6 in \citealp{chen2003estimation})
under the assumption of $a_{n}=o_{p}\left(n^{-1/4}\right)$. In comparison,
we note that in our current setting such $n^{-1/2}$-normality is
unattainable.

On the other hand, the corresponding term in the local cubic-root
asymptotics considered in \citet*{seo2018local} is of the order $\sqrt{a_{n}\d}$,
which is larger than our $a_{n}\sqrt{\d}$ term. Hence, \citet*{seo2018local}
obtain convergence rates generally slower than $n^{-1/3}$ due to
the additional lack of smoothness with respect to the nonparametric
function $h$. The example considered in \citet*{delsol2020semiparametric}
about missing data does not feature non-smoothness with respect to
$h$, but the function $h$ does not serve a ``smoothing role''
on the indicator function involving the finite-dimensional parameter
of interest, thus still achieving an $n^{-1/3}$ convergence rate.
Correspondingly, the asymptotic distributions obtained in their settings
take the form of $\arg\max_{s}G\left(s\right)-s^{'}Vs$, where the
Gaussian noise dominates all other errors or biases. 

In summary, our setting features a pivotal interplay between the smoothing
of $h_{0}$ and the finite estimation error of $h_{0}$, leading to
a partially accelerated rate between $n^{-1/2}$ and $n^{-1/3}$,
and an asymptotic distribution that features both the usual Gaussian
noise component and a bias component.

Finally, for $d\geq6$, the bias actually becomes the dominant term,
resulting in a degenerate asymptotic distribution. In principle, if
we further symmetrize around the asymptotic bias, the disturbances
of the induced mean-zero process would be of the order $n^{-\frac{1}{3}}a_{n}^{\frac{2}{3}}\sim\left(n\log n\right)^{-\frac{1}{3}}$,
or roughly the cubic-root rate. 

~

Of course, in the above we used the Gaussian density kernel as an
illustration. We now explain how the rate of convergence can be improved
if smoothness conditions of order $s$ are imposed along with the
adoption of an order-$s$ kernel.

Clearly, Lemma \ref{lem:Term1} and Lemma \ref{lem:Term2} do not
depend on the specific form of kernels (or nonparametric estimators)
used, so they remain completely unchanged. However, Lemma \ref{lem:Term_3},
which is about the term $Pg_{\t,\hat{h}}-Pg_{\t_{0},\hat{h}},$ would
need to be adapted. Such an adaption is particularly simple if we
take the kernel function to be spherically (radially) symmetric.

We summarize the conditions we impose on the choice of kernel functions
in the following assumption.
\begin{assumption}
\label{assu:HighKernel}Let $K_{d}\left(u\right)\equiv K_{d}\left(\norm u\right)$
be a spherically symmetric kernel function of an even order $s\geq4$,
which satisfies: 
\begin{itemize}
\item (i) $K_{d}$ is uniformly obunded, twice continuously differentiable,
has uniformly bounded first and second derivatives, and vanishes outside
a compact set in $\R^{d}$.
\item (ii) $\int K_{d}\left(u\right)du=1$.
\item (iii)$\int u_{j}^{k}K_{d}\left(u\right)du=0,$ $\forall j$, and $\forall k\in\mathbb{N}$
s.t. $k\leq s-1$.
\item (iv) $R_{s}:=\int u_{j}^{s}K_{d}\left(u\right)du>0,$ $\forall j$.
\end{itemize}
\end{assumption}
Then, based on the Nadaraya-Watson firs stage
\begin{align*}
\hat{h}\left(x\right) & :=\frac{1}{p_{x}}\cd\frac{1}{nb_{n}^{d}}\sum_{i=1}^{n}\left(y_{i}-\frac{1}{2}\right)K_{d}\left(\frac{x-X_{i}}{b_{n}}\right),
\end{align*}
we can write
\begin{align}
Pg_{\t,\hat{h}} & =\int\hat{h}\left(x\right)\ind\left\{ x^{'}\t\geq0\right\} p_{x}dx\nonumber \\
 & =\int\frac{1}{nb_{n}^{d}}\sum_{i=1}^{n}\left(y_{i}-\frac{1}{2}\right)K_{d}\left(\frac{x-X_{i}}{b_{n}}\right)\ind\left\{ x^{'}\t\geq0\right\} dx\nonumber \\
 & =\frac{1}{n}\sum_{i=1}^{n}\left(y_{i}-\frac{1}{2}\right)\int\frac{1}{b_{n}^{d}}\ind\left\{ x^{'}\t\geq0\right\} K_{d}\left(\frac{x-X_{i}}{b_{n}}\right)dx\nonumber \\
 & =\frac{1}{nb_{n}^{d}}\sum_{i=1}^{n}\left(y_{i}-\frac{1}{2}\right)\int K_{d}\left(u\right)\ind\left\{ \left(X_{i}+b_{n}u\right)^{'}\t\geq0\right\} b_{n}^{d}du\quad\text{with }u:=\frac{x-X_{i}}{b_{n}}\nonumber \\
 & =\frac{1}{n}\sum_{i=1}^{n}\left(y_{i}-\frac{1}{2}\right)\int\ind\left\{ \left(X_{i}+b_{n}u\right)^{'}\t\geq0\right\} K_{d}\left(u\right)du\nonumber \\
 & =\frac{1}{n}\sum_{i=1}^{n}\left(y_{i}-\frac{1}{2}\right)\int\ind\left\{ u^{'}\t\geq-\frac{X_{i}^{'}\t}{b_{n}}\right\} K_{d}\left(u\right)du\nonumber \\
 & =\frac{1}{n}\sum_{i=1}^{n}\left(y_{i}-\frac{1}{2}\right)\int\ind\left\{ u_{1}\geq-\frac{X_{i}^{'}\t}{b_{n}}\right\} K_{d}\left(u\right)du\text{ by spherical symmetry of }K_{d}\nonumber \\
 & =\frac{1}{n}\sum_{i=1}^{n}\left(y_{i}-\frac{1}{2}\right)\int\ind\left\{ u_{1}\leq\frac{X_{i}^{'}\t}{b_{n}}\right\} K_{d}\left(u\right)du\text{ by evenness of }K_{d}\nonumber \\
 & =\frac{1}{n}\sum_{i=1}^{n}\left(y_{i}-\frac{1}{2}\right)\Lambda\left(\frac{X_{i}^{'}\t}{b_{n}}\right)\label{eq:HO_Horowitz}
\end{align}
with
\begin{equation}
\L\left(t\right):=\int\ind\left\{ u_{1}\leq t\right\} K_{d}\left(u\right)du.\label{eq:Kernel_Lambda}
\end{equation}

Clearly, \eqref{eq:HO_Horowitz} coincides with definitional formula
of Horowitz's SMS estimator. We now show via the following lemma that,
under Assumption \ref{assu:HighKernel}, the one-dimensional ``CDF-type''
function $\L\left(t\right)$ defined above satisfy the ``higher-order
kernel'' conditions in \citet*{horowitz1992smoothed}.
\begin{lem}
\label{lem:Kernel_1D}Under Assumption \ref{assu:HighKernel}, $\L$
defined in \ref{eq:Kernel_Lambda} satisfies the following conditions:
\begin{itemize}
\item (i) $\L$ is uniformly bounded, twice differentiable, has uniformly
bounded first and second derivatives, and vanishes outside a compact
set in $\R$.
\item (ii) $\lim_{t\to-\infty}\L\left(t\right)=0$ and $\lim_{t\to-\infty}\L\left(t\right)=1$.
\item (iii) Defining $\l\left(t\right):=\frac{d}{dt}\L\left(t\right)$,
\begin{align*}
\int_{-\infty}^{\infty}t^{j}\l\left(t\right)dt=0,\ \forall j\leq s-1,\quad\int_{-\infty}^{\infty}t^{s}\l\left(t\right)dt=R_{s}>0.
\end{align*}
\end{itemize}
\end{lem}
Hence, the results in \citet*{horowitz1992smoothed}, as well as generalizations
of Theorem \ref{thm:Thm_Bin_Rate}, apply. Specifically, the convergence
rate of $\hat{\t}$ would be given by
\[
\norm{\hat{\t}-\t_{0}}=O_{p}\left(\max\left\{ b_{n}^{s},\ \left(nb_{n}\right)^{-\frac{1}{2}},\ \left(n^{2}b_{n}^{d}\log n\right)^{\frac{1}{3}}\right\} \right),
\]
corresponding to an optimal rate of
\begin{align*}
\norm{\hat{\t}-\t_{0}} & \sim\begin{cases}
n^{-\frac{s}{2s+1}}, & \text{for }d<s+2,\\
n^{-\frac{2s}{3s+d}}\left(\log n\right)^{\frac{1}{3}}, & \text{for }s+2\leq d<3s,\\
n^{-\frac{s}{d}}\left(\log n\right)^{\frac{2s}{d}} & \text{for }d\geq3s.
\end{cases}
\end{align*}
Furthermore, the asymptotic normality of $\hat{\t}$ can be established
accordingly when $d<s+2$.
\begin{thm}
If $d<s+2$ and $b_{n}\sim n^{-\frac{1}{2s+1}}$, then
\begin{align*}
n^{\frac{s}{2s+1}}\left(I-\t_{0}\t_{0}^{'}\right)\left(\hat{\t}-\t_{0}\right) & \dto\cN\left(V^{-}A_{s},cV^{-}\O V^{-}\right)
\end{align*}
for some constant $c>0$.
\end{thm}

\section{\label{sec:MultiIndex}TSMS for Multi-Index Single-Crossing Models}

We now turn to the more general setting of multi-index single-crossing
models, where the TSMS estimator naturally arises while there are
no natural analogs of the MS and SMS estimators.

Let $\left(y_{i},{\bf X}_{i}\right)_{i=1}^{n}$ be a random sample
of data with ${\cal X}:=Supp\left({\bf X}_{i}\right)\subseteq\R^{J\times d}$
and the dimension of $y_{i}$ unrestricted. Let $h_{0}:{\cal X}\to\R$
be an unknown function that is directly identified from data. Usually
$h_{0}\left(x\right)$ is defined via a known functional of the conditional
distribution of $y_{i}$ given ${\bf X}_{i}=x$, e.g. $h_{0}\left(x\right)=\E\left[\rest{y_{i}}X_{i}=x\right]-\frac{1}{2}$
in the binary choice model above. Let $\t_{0}\in\T\subseteq\R^{d}$
be an unknown finite-dimensional parameter of interest, which is related
to $h_{0}$ via the following assumption.
\begin{assumption}[Multivariate Single-Crossing Conditions]
\label{assu:Assum_Mono} For any $x=\left(x_{1},...,x_{J}\right)^{'}\in\R^{J\times d}$,
\begin{align}
x_{j}^{'}\t_{0}>0\ \forall j=1,...,J\quad & \imp\quad h_{0}\left(x\right)>0,\nonumber \\
x_{j}^{'}\t_{0}=0\ \forall j=1,...,J\quad & \imp\quad h_{0}\left(x\right)=0,\label{eq:Mono_Zero}\\
x_{j}^{'}\t_{0}<0\ \forall j=1,...,J\quad & \imp\quad h_{0}\left(x\right)<0.\nonumber 
\end{align}
\end{assumption}
\noindent Again we normalize $\t_{0}\in\S^{d-1}$, as \ref{assu:Assum_Mono}
imposes no restriction on the scale of $\t_{0}$.

Based on Assumption \ref{assu:Assum_Mono}, we may define the following
population and sample criterion functions $Q$, $Q_{n}$ by
\begin{align}
Q\left(\t\right) & :=Pg_{\t,h_{0}},\label{eq:MMI_Q}\\
Q_{n}\left(\t\right) & :=\P_{n}g_{\t,\hat{h}},\label{eq:MMI_Qn}
\end{align}
where $\hat{h}$ is again some first-stage nonparametric estimator
of $h_{0}$, and
\begin{align*}
g_{\t,h}\left(x\right) & :=g_{+,\t,h}\left(x\right)+g_{-,\t,h}\left(x\right)\\
g_{+,\t,h}\left(x\right) & :=\left[h\left(x\right)\right]_{+}\l\left(x,\t_{0}\right)\\
g_{-,\t,h}\left(x\right) & :=\left[-h\left(x\right)\right]_{+}\l\left(-x,\t_{0}\right)\\
\l\left(x,\t\right) & :=-\prod_{j=1}^{J}\ind\left\{ {\bf X}_{ij}^{'}\t>0\right\} ,
\end{align*}
with
\[
\left[t\right]_{+}:=\max\left(t,0\right)
\]
denoting the positive part function. The TSMS estimator is again given
by
\[
\hat{\t}:=\arg\max_{\t\in\S^{d-1}}Q_{n}\left(\t\right).
\]
We can then extend our analysis of the asymptotic theory for the TSMS
estimator in the binary choice setting to the current multi-index
setting.

In the following, it would often be convenient to work with the vectorization
$\text{vec}\left({\bf X}_{i}\right)$ of the matrix random variable
${\bf X}_{i}$ in $\R^{J\times d}$.
\begin{assumption}[Regularity Conditions]
\label{assu:Assum_PID} ~
\begin{itemize}
\item[(i)]  ${\bf 0}\in\R^{Jd}$ is an interior point of $\text{vec}\left({\cal X}\right)$,
and ${\cal X}$ is a convex and compact subset of $\R^{J\times d}$.
\item[(ii)]  The probability density function $p\left(X\right)$ of ${\bf X}_{i}$
is uniformly bounded and also uniformly bounded away from zero on
${\cal X}$.
\item[(iii)]  $h_{0}\left(x\right)$ is twice continuously differentiable in $\text{vec}\left(x\right)\in\R^{Jd}$
with uniformly bounded first and second derivatives.
\item[(iv)]  $\Dif_{\text{vec}\left(x\right)}h_{0}\left(x\right)^{'}\left(\ind_{J}\otimes\t_{0}\right)>0$.
\end{itemize}
\end{assumption}
We first explain the intuition why and how Lemma \ref{lem:Term1}
generalizes to multi-index settings. At any given $x=\left(x_{1},...,x_{J}\right)$,
notice that
\[
g_{+,\t_{0},h_{0}}\left(x\right)=\left[h_{0}\left(x\right)\right]_{+}\prod_{j=1}^{J}\ind\left\{ x_{j}^{'}\t_{0}<0\right\} =0
\]
and hence, for $\t$ very close to $\t_{0}$, we have
\begin{align*}
\left|g_{+,\t,h_{0}}\left(x\right)-g_{+,\t_{0},h_{0}}\left(x\right)\right|= & \left[h_{0}\left(x\right)\right]_{+}\prod_{j=1}^{J}\ind\left\{ x_{j}^{'}\t<0\right\} 
\end{align*}
which is nonzero only if $h_{0}\left(x\right)>0$ and $x_{j}^{'}\t<0$
for all $j\in J$. For the event
\[
\prod_{j=1}^{J}\ind\left\{ x_{j}^{'}\t_{0}<0\right\} =0\quad\text{but }\prod_{j=1}^{J}\ind\left\{ x_{j}^{'}\t<0\right\} =1,
\]
to occur, generically one and only one\footnote{Here we only consider this generic case for notational simplicity.
See the formal proof of Lemma \ref{lem:Term1}' in the Appendix for
how we deal with more than one sign changes in the $J$ indexes. } of the $J$ inequalities switch sign from $\t_{0}$ to $\t$, in
which case there exists a unique $j^{*}$ such that
\[
x_{j}^{'}\t<0,\quad\forall j,
\]
but
\[
x_{j^{*}}^{'}\t_{0}>0\quad\text{and}\quad x_{k}^{'}\t_{0}<0,\quad\forall k\neq j^{*}.
\]
Hence, we have 
\begin{align*}
x_{j^{*}}^{'}\t_{0}>0>x_{j}^{'}\t & =x_{j^{*}}^{'}\t_{0}+x_{j}^{'}\left(\t-\t_{0}\right)>x_{j^{*}}^{'}\t_{0}-M\norm{\t-\t_{0}}
\end{align*}
and thus
\[
0<x_{j^{*}}^{'}\t_{0}<M\norm{\t-\t_{0}}.
\]
Now, let $\ol x_{j^{*}}:=x_{j^{*}}-M\norm{\t-\t_{0}}\t_{0}^{'}$ and
$\ol x_{k}:=x_{k}$ for all $k\neq j^{*}$, then we know
\[
\ol x_{j^{*}}^{'}\t_{0}=x_{j^{*}}^{'}\t_{0}-M\norm{\t-\t_{0}}<0,\quad\text{and}\quad\ol x_{k}^{'}\t_{0}=x^{'}\t_{0}<0
\]
and hence, by the single-crossing condition \eqref{eq:Mono_Zero}
\[
h_{0}\left(\ol x\right)<0.
\]
However, we also know that
\[
h_{0}\left(x\right)>0.
\]
Now, since $\ol x$ is close to $x$ by construction and $h_{0}$
is smooth in $x$, the above is only possible when $h_{0}\left(x\right)$
is close to $0$. Formally, we have
\begin{align*}
h_{0}\left(x\right)>0>h_{0}\left(\ol x\right) & =h_{0}\left(x\right)+\Dif h_{0}\left(\tilde{x}\right)\left(\ol x-x\right)\\
 & >h_{0}\left(x\right)-\sup_{\tilde{x}}\left|\Dif h_{0}\left(\tilde{x}\right)\right|\cd\norm{\ol x-x}\\
 & =h_{0}\left(x\right)-\sup_{\tilde{x}}\left|\Dif h_{0}\left(\tilde{x}\right)\right|\cd\norm{M\norm{\t-\t_{0}}\t_{0}^{'}}\\
 & =h_{0}\left(x\right)-CM\cd\norm{\t-\t_{0}}
\end{align*}
and thus
\[
0<h_{0}\left(x\right)<CM\cd\norm{\t-\t_{0}}=O\left(\norm{\t-\t_{0}}\right).
\]

This explains the key intuition why the smoothing effect of $h_{0}$
remains intact under the multi-index setup. In fact, Lemma \ref{lem:Term1}
and Lemma \ref{lem:Term2} generalize without any change to the multi-index
single-crossing model.

\noindent \textbf{Lemma 1'}\customlabel{lemTerm1prime}{$1'$}\emph{For
some constant $M_{1}>0$,
\[
P\sup_{\norm{\t-\t_{0}}\leq\d}\left|\GG_{n}\left(g_{\t,h_{0}}-g_{\t_{0},h_{0}}\right)\right|\leq M_{1}\d^{\frac{3}{2}}.
\]
}

\noindent 
\noindent \textbf{Lemma 2'}\customlabel{lemTerm2prime}{$2'$}\textbf{\emph{
}}\emph{For some constant $M_{2}>0$,}
\[
P\sup_{\t\in\T,h\in{\cal H}:\norm{\t-\t_{0}}\leq\d,\norm{h-h_{0}}_{\infty}\leq Ka_{n}}\left|\GG_{n}\left(g_{\t,h}-g_{\t_{0},h}-g_{\t,h_{0}}+g_{\t_{0},h_{0}}\right)\right|\leq M_{2}a_{n}\sqrt{\d}.
\]

\begin{lem}
\label{lem:LemTerm3Quad}$Pg_{\t,h}$ is twice continuously differentiable
in $\t$ with 
\[
\Dif_{\t}Pg_{\t_{0},h_{0}}={\bf 0},\quad\Dif_{\t\t}Pg_{\t_{0},h_{0}}=-V,
\]
for some positive semi-definite matrix $V$ of rank $d-1$.
\end{lem}
Then
\begin{align*}
P\left(g_{\t,\hat{h}}-g_{\t_{0},\hat{h}}\right) & =\left(\Dif_{\t}Pg_{\t_{0},\hat{h}}\right)^{'}\left(\t-\t_{0}\right)+\frac{1}{2}\left(\t-\t_{0}\right)^{'}\left(\Dif_{\t\t}Pg_{\t_{0},\hat{h}}\right)\left(\t-\t_{0}\right)+o\left(\norm{\t-\t_{0}}^{2}\right)\\
 & =\left(\Dif_{\t}Pg_{\t_{0},\hat{h}}\right)^{'}\left(\t-\t_{0}\right)-\frac{1}{2}\left(\t-\t_{0}\right)^{'}V\left(\t-\t_{0}\right)\\
 & \quad+\frac{1}{2}\left(\t-\t_{0}\right)^{'}\left(\Dif_{\t\t}Pg_{\t_{0},\hat{h}}-\Dif_{\t\t}Pg_{\t_{0},h_{0}}\right)\left(\t-\t_{0}\right)+o\left(\norm{\t-\t_{0}}^{2}\right)
\end{align*}

\begin{lem}[General Bound on the Rate of Convergence]
\label{thm:MMI_Rate} Under Assumptions \ref{assu:Assum_Mono}-\ref{assu:Assum_PID},
\[
\norm{\hat{\t}-\t_{0}}=O_{p}\left(a_{n}\right).
\]
\end{lem}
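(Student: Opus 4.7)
The plan is to take the basic M-estimator inequality $Q_{n}(\hat{\t})\geq Q_{n}(\t_{0})$ and apply the three-term decomposition of $\P_{n}(g_{\hat{\t},\hat{h}}-g_{\t_{0},\hat{h}})$ in \eqref{eq:Bin_decom}, further splitting the mean piece $P(g_{\hat{\t},\hat{h}}-g_{\t_{0},\hat{h}})$ into an identification part $Q(\hat{\t})-Q(\t_{0})$ and a nuisance-bias remainder $R_{n}(\hat{\t}):=P[(g_{\hat{\t},\hat{h}}-g_{\hat{\t},h_{0}})-(g_{\t_{0},\hat{h}}-g_{\t_{0},h_{0}})]$. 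Writing $\d:=\norm{\hat{\t}-\t_{0}}$, the target is the chain
\[
0\leq Q_{n}(\hat{\t})-Q_{n}(\t_{0})\leq -c\d^{2}+O_{p}(a_{n}\d)+O_{p}(\d^{3/2}/\sqrt{n})+O_{p}(a_{n}\sqrt{\d}/\sqrt{n}),
\]
from which the conclusion $\d=O_{p}(a_{n})$ follows by solving the inequality.

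The two empirical-process terms are handled as in the binary case. The term $\GG_{n}(g_{\hat{\t},h_{0}}-g_{\t_{0},h_{0}})/\sqrt{n}$ is bounded directly by Lemma \ref{lem: Term1MMI}, giving $O_{p}(\d^{3/2}/\sqrt{n})$. For the equicontinuity remainder $\GG_{n}(g_{\hat{\t},\hat{h}}-g_{\t_{0},\hat{h}}-g_{\hat{\t},h_{0}}+g_{\t_{0},h_{0}})/\sqrt{n}$, I would prove a multi-index analog of Lemma \ref{lem:Term2}: replacing $\ind\{x^{'}\t\geq0\}$ by $\prod_{j=1}^{J}\ind\{x_{j}^{'}\t>0\}$ leaves the bracketing entropy of the relevant class essentially unchanged (a product of $J$ half-space indicators is still VC, with constants only scaling in $J$), so the same chaining calculation yields the bound $O_{p}(a_{n}\sqrt{\d}/\sqrt{n})$ uniformly over $\hat{h}\in{\cal H}$.

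The heart of the argument is the mean piece. For the nuisance-bias remainder $R_{n}(\t)$, I would combine $|[\hat{h}(x)]_{+}-[h_{0}(x)]_{+}|\leq|\hat{h}(x)-h_{0}(x)|\leq\norm{\hat{h}-h_{0}}_{\infty}=O_{p}(a_{n})$ with the straightforward $L_{1}(P)$-bound $P|\l(\cd,\t)-\l(\cd,\t_{0})|=O(\norm{\t-\t_{0}})$, which holds because each of the $J$ indicator switches has $P$-measure of order $\norm{\t-\t_{0}}$ under the bounded density in Assumption \ref{assu:Assum_PID}(ii); this gives $|R_{n}(\t)|=O_{p}(a_{n}\norm{\t-\t_{0}})$. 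For the quadratic identification bound $Q(\t)-Q(\t_{0})\leq -c\norm{\t-\t_{0}}^{2}$, I would re-run the sign-flipping argument behind Lemma \ref{lem: Term1MMI} at the level of first moments rather than second moments: on the sign-flip set, the single-crossing condition \eqref{eq:Mono_Zero} combined with the smoothness and non-degenerate gradient of Assumption \ref{assu:Assum_PID}(iii)--(iv) force $h_{0}(x)$ to have magnitude of order $\norm{\t-\t_{0}}$ with the unfavorable sign, and integrating against the bounded density produces a quadratic loss with a strictly positive constant.

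Combining the four bounds at $\t=\hat{\t}$, together with a preliminary consistency $\d=o_{p}(1)$ obtained from the uniform convergence of $\P_{n}g_{\cd,\hat{h}}$ to $Q$ (a standard Glivenko--Cantelli step under Assumption \ref{assu:FirstStageRate}), yields $c\d^{2}\leq O_{p}(a_{n}\d)+O_{p}(\d^{3/2}/\sqrt{n})+O_{p}(a_{n}\sqrt{\d}/\sqrt{n})$. Since the three candidate solution rates $a_{n}$, $n^{-1}$, and $(a_{n}/\sqrt{n})^{2/3}$ from the three right-hand-side terms are all dominated by $a_{n}$ whenever $a_{n}\gtrsim n^{-1}$ (a condition always satisfied by any sensible nonparametric first-stage estimator), one concludes $\d=O_{p}(a_{n})$. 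The step I expect to be the main obstacle is the quadratic identification bound: with $J\geq2$ the sign-flip region decomposes according to \emph{which} subset of the $J$ indexes switches sign, and the loss contribution of each pattern must be shown to be non-negative with at least one pattern strictly positive, which is precisely where Assumption \ref{assu:Assum_PID}(iv) must be invoked to pin down a favored direction of $\t_{0}$ on the boundary.
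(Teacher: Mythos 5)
Your proposal matches the paper's argument: the same three-term decomposition, the same multi-index extensions of Lemmas \ref{lem:Term1} and \ref{lem:Term2}, the same split of the mean term into a nuisance-bias piece bounded by $O_{p}(a_{n}\norm{\t-\t_{0}})$ and a quadratic identification piece, followed by the same rate-solving step giving $\max\{n^{-1},\,n^{-1/3}a_{n}^{2/3},\,a_{n}\}=a_{n}$. The one thing worth noting is that the paper simply \emph{asserts} the expansion $P(g_{\t,h_{0}}-g_{\t_{0},h_{0}})=-(\t-\t_{0})^{'}V(\t-\t_{0})+o(\norm{\t-\t_{0}}^{2})$ without deriving it; your closing remark about decomposing the sign-flip region by which subset of the $J$ indexes flips and invoking Assumption \ref{assu:Assum_PID}(iv) to get a strictly positive quadratic constant is a fair diagnosis of what that omitted derivation must contain, and is somewhat more explicit than the paper itself.
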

~

\noindent To obtain sharper bounds on the rate of convergence, we
need to analyze the term $P\left(g_{\t,\hat{h}}-g_{\t_{0},\hat{h}}\right)$
more closely.
\begin{thm}
\label{thm:MMI_3rates}Suppose Assumptions \ref{assu:Assum_Mono}-\ref{assu:Assum_PID}
hold and furthermore
\begin{align*}
P\left(g_{\t,\hat{h}}-g_{\t_{0},\hat{h}}\right) & =u_{n}A\left(\t-\t_{0}\right)+v_{n}W_{n}\left(\t-\t_{0}\right)-\left(\t-\t_{0}\right)^{'}V\left(\t-\t_{0}\right)+o_{p}\left(u_{n}\d+v_{n}\d+\d^{2}\right)
\end{align*}
with $A$ and $V$ being constant vector and matrix, $W_{n}=O_{p}\left(1\right)$,
and $u_{n},v_{n}=o\left(1\right)$. Then: 
\[
\norm{\hat{\t}-\t_{0}}=\max\left\{ n^{-\frac{1}{3}}a_{n}^{\frac{2}{3}},\ u_{n},\ v_{n}\right\} .
\]
\end{thm}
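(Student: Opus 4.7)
The plan is to exploit the M-estimator basic inequality $\P_n(g_{\hat\t,\hat h}-g_{\t_0,\hat h})\geq 0$ together with the three-way decomposition \eqref{eq:Bin_decom}, applied at the random point $\hat\t$, to obtain a polynomial inequality in $\d_n:=\norm{\hat\t-\t_0}$ whose solution yields the advertised rate. The decomposition produces three pieces: (i) a smoothed empirical process $\GG_n(g_{\t,h_0}-g_{\t_0,h_0})$, controlled by Lemma~\ref{lem: Term1MMI}; (ii) a stochastic-equicontinuity residual, controlled by the multi-index counterpart of Lemma~\ref{lem:Term2}; and (iii) the population term, controlled by the expansion assumed in the theorem.

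First I would localize the problem via Lemma~\ref{thm:MMI_Rate}, which yields the preliminary rate $\d_n=O_p(a_n)=o_p(1)$; this ensures that with probability approaching one $\hat\t$ lies within an arbitrarily small neighborhood of $\t_0$ and $\hat h\in{\cal H}$ with $\norm{\hat h-h_0}_\infty\leq K a_n$. Lemma~\ref{lem: Term1MMI} then gives $\sup_{\norm{\t-\t_0}\leq \d}\abs{\GG_n(g_{\t,h_0}-g_{\t_0,h_0})}=O_p(\d^{3/2})$, and the multi-index version of Lemma~\ref{lem:Term2} (whose proof is essentially unchanged, since the product of indicators $\prod_{j=1}^J\ind\{x_j'\t\gtrless 0\}$ still generates a VC class with the same entropy rate) gives the equicontinuity bound $O_p(a_n\d^{1/2})$. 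A standard peeling / shell argument transfers these uniform bounds to the random $\hat\t$. Combining with the assumed expansion of $P(g_{\t,\hat h}-g_{\t_0,\hat h})$ and the positive-definiteness of $V$ on the tangent space of $\S^{d-1}$ at $\t_0$, the basic inequality reduces to
\[
c\,\d_n^2\leq O_p\bigl(n^{-1/2}\d_n^{3/2}+n^{-1/2}a_n\d_n^{1/2}\bigr)+O_p(u_n+v_n)\,\d_n+o_p(u_n\d_n+v_n\d_n+\d_n^2),
\]
for some $c>0$. Balancing each right-hand-side term against $c\d_n^2$ yields the three candidate rates $n^{-1}$, $n^{-1/3}a_n^{2/3}$, and $u_n+v_n$; since any reasonable nonparametric rate satisfies $a_n\gg n^{-1}$, the $n^{-1}$ candidate is dominated, and the overall rate is $\d_n=O_p(\max\{n^{-1/3}a_n^{2/3},\,u_n,\,v_n\})$, as claimed.

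The main obstacle I expect is ensuring that $-(\t-\t_0)'V(\t-\t_0)$ genuinely supplies curvature of order $-c\d_n^2$ when $\hat\t,\t_0\in\S^{d-1}$, since $V$ is only required to be positive-definite on the $(d-1)$-dimensional tangent space at $\t_0$. This is handled by observing that along $\S^{d-1}$ the chord $\t-\t_0$ is tangent to $\t_0$ to first order, with a normal component of size $O(\d^2)$ that can be absorbed into the $o_p(\d^2)$ remainder. A secondary technical issue is verifying the multi-index analog of Lemma~\ref{lem:Term2}; this should follow from essentially the same entropy/bracketing calculation because the product of $J$ half-space indicators is still VC with dimension depending only on $d$ and $J$, so the chaining argument applies uniformly to the envelope-weighted process.
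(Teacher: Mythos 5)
Your proposal is correct and follows essentially the same route as the paper: decompose $\P_n(g_{\hat\t,\hat h}-g_{\t_0,\hat h})$ via \eqref{eq:Bin_decom}, control the first two terms by Lemma~\ref{lem: Term1MMI} and the multi-index analogue of Lemma~\ref{lem:Term2} (giving $\Phi_n(\d)\sim\d^{3/2}+a_n\sqrt\d$), plug in the assumed expansion for the population term, and balance; the paper simply packages the basic-inequality/peeling step by invoking Conditions B1--B4 of Delsol--Van Keilegom's Theorem~2, whereas you unroll that same argument explicitly. Your treatment of the two anticipated obstacles also matches the paper: the tangent-space curvature bound is handled as in \eqref{eq:Quad_info}, and the multi-index entropy calculation goes through unchanged since products of half-space indicators remain a VC class (the paper states this in the proof of Lemma~\ref{thm:MMI_Rate}).
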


\section{Simulation}

In this section, we evaluate the finite-sample performance of our
TSMS estimator through a Monte Carlo Simulation. We derive our estimator
based on the criterion funtion \eqref{eq:Q0_MI}.

\subsection{Single-Index Setting}

We first consider the standard binary choice model
\[
y=\ind\left\{ X^{'}\t_{0}\geq\e\right\} ,
\]
which falls under the single-index setting. We set the dimension of
covariates $d=3$ and the true parameter $\theta_{0}=\left[\frac{1}{\sqrt{3}},\frac{1}{\sqrt{3}},\frac{1}{\sqrt{3}}\right]^{'}$.
Each of our covariates $X$ is drawn independently from a uniform
distribution on $[-5,5]$. We compare different first-stage estimators;
in particular, we estimate the true function $h_{0}$ using Gaussian
kernel, probit model, and OLS. When implementing Gaussian kernel,
we use a 5-fold cross-validation to tune the bandwidth parameter.
In addition, we include a benchmark case where the true $h_{0}$ is
used without first-stage estimation. After creating the first-step
estimator $\hat{h}$ of $h_{0}$, we construct our estimator $\hat{\theta}$
based on the adaptive-grid search algorithm developed in \citet{gao2019robust},
which searches for the optimizer of \eqref{eq:Q0_MI} on the unit
sphere. Moreover, we test the performance of all the estimators under
different error distributions for $\e$: (i) standard normal distribution
$\cN(0,1)$, and (ii) a de-medianed version of the log-normal distribution
$\log\cN(0,1)$. We also vary the sample size $n$ to investigate
the convergence rate of our estimator. 

\begin{table}
\centering{}\caption{\label{tab:EstErr_BCM}Estimation error in binary choice model}
\emph{}%
\begin{tabular}{cccccc}
\toprule 
 &  & \multicolumn{4}{c}{RMSE}\tabularnewline
\cmidrule{3-6} \cmidrule{4-6} \cmidrule{5-6} \cmidrule{6-6} 
Error & $n$ & True & Kernel & Probit & OLS\tabularnewline
\midrule 
\multirow{3}{*}{Gaussian} & 100 & 0.0336 & 0.1261 & 0.0871 & 0.1309\tabularnewline
\cmidrule{2-6} \cmidrule{3-6} \cmidrule{4-6} \cmidrule{5-6} \cmidrule{6-6} 
 & 500 & 0.0068 & 0.0538 & 0.0350 & 0.0544\tabularnewline
\cmidrule{2-6} \cmidrule{3-6} \cmidrule{4-6} \cmidrule{5-6} \cmidrule{6-6} 
 & 1000 & 0.0038 & 0.0299 & 0.0233 & 0.0363\tabularnewline
\midrule 
\multirow{3}{*}{Log-normal} & 100 & 0.0336 & 0.1470 & 0.1242 & 0.1416\tabularnewline
\cmidrule{2-6} \cmidrule{3-6} \cmidrule{4-6} \cmidrule{5-6} \cmidrule{6-6} 
 & 500 & 0.0069 & 0.0596 & 0.0587 & 0.0684\tabularnewline
\cmidrule{2-6} \cmidrule{3-6} \cmidrule{4-6} \cmidrule{5-6} \cmidrule{6-6} 
 & 1000 & 0.0038 & 0.0402 & 0.0433 & 0.0481\tabularnewline
\bottomrule
\end{tabular}
\end{table}

Table \ref{tab:EstErr_BCM} presents the root mean squared errors
(RMSE) of all the estimators given different error distributions and
sample sizes. Our TSMS estimator that implements Gaussian kernel in
the first stage converges fast and is robust against different error
distributions in our simulation. Under Gaussian noises, it is not
surprising to find that using probit model in the first stage leads
to the best performance. Indeed, the probit model matches the parametric
form of $h_{0}$ under Gaussian noises and hence achieves parametric
rate of convergence in the first stage. However, parametric methods
such as probit and OLS rely on correct specification, and are not
as robust as a nonparametric first stage. In particular, our Gaussian
kernel first stage outperforms both probit and OLS with a moderate
number of samples when the errors are drawn from a log-normal distribution.
It is also worth noting that, despite potential biases caused by misspecification,
using a parametric first stage such as probit and OLS can still produce
decent second-stage estimates. 

\subsection{Multi-index Model}

Next, we analyze our TSMS estimator in a multi-index setting by .
We set $J=2$, and consider the model
\[
y=\ind\left\{ X_{1}^{'}\theta_{0}>\epsilon_{1}\right\} \cd\ind\left\{ X_{2}^{'}\theta_{0}>\epsilon_{2}\right\} 
\]
where $X_{1}$ and $X_{2}$ are both three dimensional random variables,
while $\epsilon_{1i}$ and $\e_{i2}$ are independently drawn from
some chosen distributions. We take $h_{0}(x)=\E\left[y_{i}-\frac{1}{4}|X_{i}=x\right]$,
which can be verified to satisfy Assumption \ref{assu:Assum_Mono}
and also easily computed under the design that $\epsilon_{1}$ is
drawn independently from $\e_{2}$. Then, we run Gaussian kernel,
probit and OLS to estimate $h_{0}$ using the 6 covariates all together,
and then obtain the second-stage estimator of $\t_{0}$.

\begin{table}
\centering{}\caption{\label{tab:EstErr_MI}Estimation error in multi-index model}
\emph{}%
\begin{tabular}{cccccc}
\toprule 
 &  & \multicolumn{4}{c}{RMSE}\tabularnewline
\cmidrule{3-6} \cmidrule{4-6} \cmidrule{5-6} \cmidrule{6-6} 
Error & $n$ & True & Kernel & Probit & OLS\tabularnewline
\midrule 
\multirow{3}{*}{Gaussian} & 100 & 0.0874 & 0.2118 & 0.2392 & 0.2309\tabularnewline
\cmidrule{2-6} \cmidrule{3-6} \cmidrule{4-6} \cmidrule{5-6} \cmidrule{6-6} 
 & 500 & 0.0270 & 0.0875 & 0.1321 & 0.0982\tabularnewline
\cmidrule{2-6} \cmidrule{3-6} \cmidrule{4-6} \cmidrule{5-6} \cmidrule{6-6} 
 & 1000 & 0.0212 & 0.0622 & 0.0968 & 0.0727\tabularnewline
\midrule 
\multirow{3}{*}{Log-normal} & 100 & 0.0950 & 0.1990 & 0.2310 & 0.2422\tabularnewline
\cmidrule{2-6} \cmidrule{3-6} \cmidrule{4-6} \cmidrule{5-6} \cmidrule{6-6} 
 & 500 & 0.0350 & 0.0978 & 0.1216 & 0.1178\tabularnewline
\cmidrule{2-6} \cmidrule{3-6} \cmidrule{4-6} \cmidrule{5-6} \cmidrule{6-6} 
 & 1000 & 0.0288 & 0.0734 & 0.0907 & 0.0844\tabularnewline
\bottomrule
\end{tabular}
\end{table}

Table \ref{tab:EstErr_MI} lists the RMSE for all our estimators of
$\t_{0}$ with varying error distributions and sample sizes. Overall,
the TSMS estimator still performs well in finite sample under the
multi-index setting. Most notably in comparison with the single-index
setting (Table \ref{tab:EstErr_BCM}), our estimator with the kernel
first stage now outperforms the one based on the probit first stage
for any sample size and error distribution, since the probit model
is now misspecified under the multi-index setting. 

\section{Conclusion}

\noindent This paper considers the asymptotic theory of the TSMS estimator
that is applicable in semiparametric models that a general form of
monotonicity in one or several parametric indexes. We show that the
first-stage nonparametric estimator effectively serves as an imperfect
smoothing function on a non-smooth criterion function, leading to
the pivotality of the first-stage estimation error with respect to
the second-stage convergence rate and asymptotic distribution.

The current analysis is mostly focused on a kernel first-stage regression,
but it would be interesting and informative to replicate the analysis
with a sieve first stage, say, based on the general results obtained
in \citet{belloni2015some} and \citet*{chen2015optimal}. Moreover,
a full-fledged distribution theory and inferential procedure that
fully accommodates the dimension $d$, the smoothness $s$, and various
kernel/sieve first-stage estimators still require considerable work
to be developed.

\bibliographystyle{ecta}
\bibliography{TSMS}

\appendix

\section*{Appendix}

\section{\label{sec:App_Proof}Proofs}

\subsection{\label{subsec:pf_Entropy}Lemmas on Entropy Integrals}

Define ${\cal G}:=\left\{ g_{\t,h}-g_{\t_{0},h}:\ \t\in\T,h\in{\cal H}\right\} $,
which is uniformly bounded since ${\cal H}$ is uniformly bounded.
We first establish the finiteness of the following uniform entropy
integral.
\begin{lem}
\label{lem:J_uni_finite}$J:=\sup_{Q}\int_{0}^{1}\sqrt{\log\mathscr{N}\left(\e,\cG,L_{2}\left(Q\right)\right)}d\e<\infty.$
\end{lem}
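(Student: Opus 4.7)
The plan is to bound $\log\mathscr{N}(\e,\cG,L_{2}(Q))$ uniformly in $Q$ by decomposing each $g_{\t,h}-g_{\t_{0},h}\in\cG$ into a finite sum of products of functions drawn from two simple function classes---smooth functions and half-space indicators---whose uniform entropy integrals can be controlled by standard results. In the binary-choice case each element equals $h(x)\bigl(\ind\{x'\t>0\}-\ind\{x'\t_{0}>0\}\bigr)$, while in the multi-index case expanding $g_{+,\t,h}$ and $g_{-,\t,h}$ gives at most $2\cd 2^{J}$ terms of the form $[\pm h(x)]_{+}\prod_{j=1}^{J}\ind\{\pm x_{j}'\t>0\}$ (and analogously with $\t_{0}$). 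Each term is a product of a smooth factor and an indicator (intersection) factor.

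For the smooth factor, by hypothesis ${\cal H}\subseteq{\cal C}_{M}^{\lfloor d\rfloor+1}({\cal X})$ with ${\cal X}$ bounded and with smoothness index $\lfloor d\rfloor+1>d$. The classical H\"older-class bracketing bound (e.g.\ Corollary 2.7.2 of \citealp{van1996weak}) gives
\[
\log\mathscr{N}_{[\,]}\bigl(\e,\,{\cal H},\,\norm\cd_{\infty}\bigr)\leq K\,\e^{-d/(\lfloor d\rfloor+1)},
\]
whose exponent is strictly less than $1$, hence less than $2$, so the entropy integral is finite. Since an $\e$-bracket in sup-norm is also an $\e$-cover in $L_{2}(Q)$ for \emph{every} $Q$, this yields a finite uniform entropy integral for ${\cal H}$, and the $1$-Lipschitz property of $[\cd]_{+}$ carries the same bound over to $\{[\pm h]_{+}:h\in{\cal H}\}$. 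For the indicator factor, each class $\{\ind\{\pm x_{j}'\t>0\}:\t\in\S^{d-1}\}$ is VC-subgraph (half-spaces in $\R^{d}$ have VC dimension $d+1$), so by Theorem 2.6.7 of \citet{van1996weak},
\[
\sup_{Q}\log\mathscr{N}\bigl(\e,\,\{\ind\{\pm x_{j}'\t>0\}\},\,L_{2}(Q)\bigr)\lesssim\log(1/\e),
\]
and the same polynomial bound transfers to the finite intersections $\prod_{j=1}^{J}\ind\{\pm x_{j}'\t>0\}$ (e.g.\ via Lemma 2.6.17 of \citealp{van1996weak}), whose $\sqrt{\log}$ is integrable on $(0,1]$.

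The two types of factors are combined using the standard product rule for covering numbers: if ${\cal F}_{1},{\cal F}_{2}$ are uniformly bounded by $M<\infty$, then $|f_{1}g_{1}-f_{2}g_{2}|\leq M|f_{1}-f_{2}|+M|g_{1}-g_{2}|$ gives
\[
\mathscr{N}\bigl(\e,{\cal F}_{1}{\cal F}_{2},L_{2}(Q)\bigr)\leq\mathscr{N}\bigl(\tfrac{\e}{2M},{\cal F}_{1},L_{2}(Q)\bigr)\cd\mathscr{N}\bigl(\tfrac{\e}{2M},{\cal F}_{2},L_{2}(Q)\bigr),
\]
so the sum of $\sqrt{\log\mathscr{N}}$'s remains integrable, and the same preservation holds under finite sums of uniformly bounded classes. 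Applying these rules to the term-by-term decomposition above yields $J<\infty$. The only real obstacle is the bookkeeping for the sign-pattern terms in the multi-index expansion, but every such term fits the same smooth-times-VC template, so the argument is uniform across cases.
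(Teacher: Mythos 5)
Your proof is correct and follows essentially the same route as the paper's: identify the half-space indicator factor as a VC-subgraph class (hence polynomial covering number uniformly in $Q$), use the H\"older-class entropy bound from van der Vaart and Wellner (the paper cites Theorem 2.7.1, you cite Corollary 2.7.2 --- either works, since sup-norm covering/bracketing are interchangeable) to bound the entropy of $\cal H$, and then combine. The only cosmetic difference is that the paper invokes Kosorok's Theorem 9.15 (a black-box BUEI preservation theorem for products and finite algebraic combinations of BUEI classes) where you prove the product-covering-number inequality by hand via the triangle bound $|f_1g_1-f_2g_2|\le M|f_1-f_2|+M|g_1-g_2|$; your route is a bit more self-contained and also spells out the multi-index expansion, which the paper leaves implicit.
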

\begin{proof}
The collection of indicators for half spaces $\ind\left\{ x_{j}^{'}\t\geq0\right\} $
across $\t\in\S^{d-1}$ is a VC-subgraph class of functions with VC
dimension $d+2$, so by VW Lemma 2.6.18,
\begin{align*}
 & \left\{ \prod_{j\in J}\ind\left\{ x^{'}\t\geq0\right\} -\prod_{j\in J}\ind\left\{ x^{'}\t_{0}\geq0\right\} :\t\in\T\right\} \\
= & \left\{ \bigwedge_{j\in J}\ind\left\{ x^{'}\t\geq0\right\} -\bigwedge_{j\in J}\ind\left\{ x^{'}\t_{0}\geq0\right\} :\t\in\T\right\} 
\end{align*}
is also VC-subgraph class, which thus have bounded uniform entropy
integrals. Moreover, since ${\cal H}\subseteq{\cal C}_{M}^{\left\lfloor d/2\right\rfloor +1}\left({\cal X}\right)$,
we know by VW Theorem 2.7.1 that $\log\mathscr{N}\left(\d,{\cal H},\norm{\cd}_{\infty}\right)\leq C\d^{-d/\left(\left\lfloor d\right\rfloor +1\right)}$
and thus also have bounded uniform entropy integrals 
\[
\int_{0}^{1}\sup_{Q}\sqrt{1+\log\mathscr{N}\left(\e,\cG_{2},L_{2}\left(Q\right)\right)}d\e<\infty.
\]
By \citet*{kosorok2007introduction} Theorem 9.15, we deduce $\cG$
also has uniformly bounded entropy integral.
\end{proof}
Alternatively, we could follow \citet*{chen2003estimation} and work
with the following bracketing integral.
\begin{lem}
\label{lem:J_br_finite}$J_{[]}:=\int_{0}^{1}\sqrt{1+\log\mathscr{N}_{[]}\left(\e,\cG,L_{2}\left(P\right)\right)}d\e<\infty.$
\end{lem}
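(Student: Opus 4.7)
The approach is to factor $\cG$ as a product of two simpler classes, bound the bracketing entropy of each factor separately, then combine. Since $g_{\t,h}-g_{\t_{0},h}=h(x)\phi_{\t}(x)$ with $\phi_{\t}(x):=\ind\{x'\t>0\}-\ind\{x'\t_{0}>0\}$, every element of $\cG$ is a product $h\cd\phi_{\t}$ where $h\in{\cal H}$ is bounded by $M$ and $\phi_{\t}\in\cF:=\{\phi_{\t}:\t\in\T\}$ is bounded by $1$; thus $\cG\subseteq{\cal H}\cdot\cF$ with uniformly bounded envelopes.

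For the smoothness factor, I invoke the H\"older-class bracketing bound of \citet*{van1996weak} (Corollary 2.7.2): since ${\cal H}\subseteq{\cal C}_{M}^{\lfloor d\rfloor+1}({\cal X})$ on the bounded convex set ${\cal X}=\B^{d}$,
\[
\log\mathscr{N}_{[]}(\e,{\cal H},L_{2}(P))\lesssim\e^{-d/(\lfloor d\rfloor+1)}.
\]
Because $\lfloor d\rfloor+1>d$, the exponent is strictly less than $1$, so $\int_{0}^{1}\sqrt{\log\mathscr{N}_{[]}(\e,{\cal H},L_{2}(P))}\,d\e<\infty$.

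For the indicator factor, I construct brackets directly. Cover $\T\subseteq\S^{d-1}$ by $\lesssim\eta^{-(d-1)}$ Euclidean balls of radius $\eta>0$, and for the $k$-th center $\t_{k}$ take
\[
\ell_{k}(x):=\ind\{x'\t_{k}\geq\eta\}-\ind\{x'\t_{0}>0\},\qquad u_{k}(x):=\ind\{x'\t_{k}>-\eta\}-\ind\{x'\t_{0}>0\}.
\]
Since $\norm x\leq1$ implies $|x'\t-x'\t_{k}|\leq\eta$ for every $\t$ in the $k$-th ball, one checks $\ell_{k}\leq\phi_{\t}\leq u_{k}$. Under Assumption \ref{assu:Basic}(e) the density of $X_i$ is bounded, so $P(|x'\t_{k}|\leq\eta)\lesssim\eta$ and hence $\norm{u_{k}-\ell_{k}}_{L_{2}(P)}\lesssim\sqrt{\eta}$. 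Choosing $\eta\sim\e^{2}$ then yields $\log\mathscr{N}_{[]}(\e,\cF,L_{2}(P))\lesssim(d-1)\log(1/\e)$, for which $\int_{0}^{1}\sqrt{\log(1/\e)}\,d\e<\infty$.

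To combine the two factors, I use a standard product-of-brackets argument: decomposing $h=h_{+}-h_{-}$ and $\phi_{\t}=\phi_{\t,+}-\phi_{\t,-}$, the product $h\phi_{\t}$ becomes an alternating sum of four products of non-negative functions, each bracketable coordinatewise. A triangle-inequality computation bounds the $L_{2}(P)$-width of the resulting bracket for $h\phi_{\t}$ by a constant multiple of $\norm{h^{U}-h^{L}}_{L_{2}(P)}+\norm{u-\ell}_{L_{2}(P)}$, giving
\[
\log\mathscr{N}_{[]}(\e,\cG,L_{2}(P))\lesssim\log\mathscr{N}_{[]}(\e/c,{\cal H},L_{2}(P))+\log\mathscr{N}_{[]}(\e/c,\cF,L_{2}(P)),
\]
from which $J_{[]}<\infty$ follows by the two preceding bounds and $\sqrt{a+b}\leq\sqrt{a}+\sqrt{b}$. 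The main obstacle is the sign-handling in this last step, since both $h$ and $\phi_{\t}$ can take either sign; the positive/negative decomposition resolves it at the cost of only a constant-factor inflation in the bracket count, which is harmless for the entropy integral.
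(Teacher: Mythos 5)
Your proof is correct, but it takes a genuinely different route from the paper. You factor $\cG$ as a product class $\mathcal{H}\cdot\mathcal{F}$, bracket the smooth factor $\mathcal{H}$ using the H\"older-class bound from \citet{van1996weak}, build explicit half-space brackets for the indicator-difference class $\mathcal{F}$ by hand (an $\eta$-cover of $\S^{d-1}$ with $\eta\sim\e^{2}$), and then glue the two via a product-of-brackets lemma, handling signs by decomposing into positive and negative parts. The paper instead works directly with the class $\cG$ indexed jointly by $(\t,h)$: it first derives the joint $L_{2}(P)$ modulus
$P\left(\left(g_{\t,h}-g_{\t_{0},h}\right)-\left(g_{\tilde{\t},\tilde{h}}-g_{\t_{0},\tilde{h}}\right)\right)^{2}\lesssim\norm{\t-\tilde{\t}}+\norm{h-\tilde{h}}_{\infty}^{2}$, and then converts that modulus into a bracketing bound via the Theorem-3 machinery of \citet{chen2003estimation}, reducing everything to a Euclidean cover of $\T$ and a sup-norm cover of $\mathcal{H}$ (both at scale $\e^{2}$). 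The two routes terminate in comparable entropy bounds, with the dominant term $\e^{-cd/(\lfloor d\rfloor+1)}$ for a constant $c\le 2$, and both integrate since $\lfloor d\rfloor+1>d$. Your approach is more modular and the indicator-factor construction is more explicit; the paper's is more compact because the interaction between the two sources of non-uniformity is absorbed into one modulus estimate rather than managed through the product-bracketing combination step. Two cosmetic points in yours: use strict inequality $\ind\{x^{'}\t_{k}>\eta\}$ in $\ell_{k}$ so that $\ell_{k}\le\phi_{\t}$ holds pointwise rather than $P$-a.e., and note that the positive-part and negative-part operations are $1$-Lipschitz so the $\mathcal{H}$-brackets and $\mathcal{F}$-brackets pass through the sign decomposition with no inflation, which keeps the final bracket count within a constant factor as you claim.
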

\begin{proof}
Since ${\cal H}\subseteq{\cal C}_{M}^{\left\lfloor d/2\right\rfloor +1}\left({\cal X}\right)$,
we know by VW Theorem 2.7.1 that $\log\mathscr{N}\left(\d,{\cal H},\norm{\cd}_{\infty}\right)\leq C\d^{-d/\left(\left\lfloor d\right\rfloor +1\right)}$
so that $\int_{0}^{1}\sqrt{1+\log\mathscr{N}\left(\e,\cG_{2},L_{2}\left(P\right)\right)}d\e<\infty$.
Moreover, for any $\left(\t,h\right),\left(\tilde{\t},\tilde{h}\right)\in\T\times{\cal H}$,
we have
\begin{align*}
 & \left|\left(g_{\t,h}-g_{\t_{0},h}\right)-\left(g_{\tilde{\t},\tilde{h}}-g_{\t_{0},\tilde{h}}\right)\right|\\
\leq & \left|g_{\t,h}-g_{\tilde{\t},h}\right|+\left|\left(g_{\tilde{\t},h}-g_{\t_{0},h}\right)-\left(g_{\tilde{\t},\tilde{h}}-g_{\t_{0},\tilde{h}}\right)\right|\\
\leq & \left|h\left(x\right)\right|\sum_{j\in J}\ind\left\{ \left|x_{j}^{'}\t\right|\leq\norm{x_{j}}\norm{\tilde{\t}-\t}\right\} +\left|h\left(x\right)-\tilde{h}\left(x\right)\right|\cd1\\
\leq & M\sum_{j\in J}\ind\left\{ \left|x_{j}^{'}\t\right|\leq\norm{x_{j}}\norm{\tilde{\t}-\t}\right\} +J\norm{\tilde{h}-h}_{\infty}
\end{align*}
so that
\begin{align*}
 & P\left(\left(g_{\t,h}-g_{\t_{0},h}\right)-\left(g_{\tilde{\t},\tilde{h}}-g_{\t_{0},\tilde{h}}\right)\right)^{2}\\
\leq & P\left(\left(M^{2}+2M\norm{\tilde{h}-h}_{\infty}\right)\sum_{j\in J}\ind\left\{ \left|x_{j}^{'}\t\right|\leq\norm{x_{j}}\norm{\tilde{\t}-\t}\right\} +\norm{\tilde{h}-h}_{\infty}^{2}\right)\\
= & \left(M^{2}+2M\norm{\tilde{h}-h}_{\infty}\right)\sum_{j\in J}P\left\{ \left|x_{j}^{'}\t\right|\leq\norm{x_{j}}\norm{\tilde{\t}-\t}\right\} +\norm{\tilde{h}-h}_{\infty}^{2}\\
\leq & M^{'}\norm{\tilde{\t}-\t}+\norm{\tilde{h}-h}_{\infty}^{2}
\end{align*}
Hence, following the proof of Theorem 3 (with Conditions 3.2 and 3.3)
in \citet*{chen2003estimation}, for any $\T_{\e}$ that is an $\e$-cover
of $\T$ and ${\cal H}_{\e}$ that is an $\e$-cover of ${\cal H}$,
we deduce that $\T_{\e}\times{\cal H}_{\e}$ is a $\sqrt{M^{'}\e+\e^{2}}\leq\sqrt{M^{''}\e}$
bracket for $\left(\cG,L_{2}\left(P\right)\right)$, implying that
\[
\log\mathscr{N}_{[]}\left(\e,\cG,\norm{\cd}_{\infty}\right)\leq\log\mathscr{N}\left(\e^{2},{\cal \T},\norm{\cd}\right)+\log\mathscr{N}\left(\e^{2},{\cal H},\norm{\cd}_{\infty}\right)\leq2d\left(C-\log\e\right)+\e^{-\frac{2d}{\left\lfloor d\right\rfloor +1}}
\]
and hence 
\begin{align*}
J & :=\int_{0}^{1}\sqrt{1+\log\mathscr{N}_{[]}\left(\e,\cG_{2},L_{2}\left(P\right)\right)}d\e\leq\int_{0}^{1}\sqrt{2d\left(C-\log\e\right)+\e^{-\frac{2d}{\left\lfloor d\right\rfloor +1}}}d\e\leq C^{'}\int_{0}^{1}\e^{-\frac{d}{\left\lfloor d\right\rfloor +1}}d\e<\infty.
\end{align*}
\end{proof}

\subsection{Proof of Lemma \ref{lemTerm1prime}}
\begin{proof}
At any given $x=\left(x_{1},...,x_{J}\right)\in{\cal X}$, notice
that
\[
g_{+,\t_{0},h_{0}}\left(x\right)=\left[h_{0}\left(x\right)\right]_{+}\prod_{j=1}^{J}\ind\left\{ x_{j}^{'}\t_{0}\le0\right\} =0
\]
and thus
\begin{align*}
\left|g_{+,\t,h_{0}}\left(x\right)-g_{+,\t_{0},h_{0}}\left(x\right)\right|= & \left[h_{0}\left(x\right)\right]_{+}\prod_{j=1}^{J}\ind\left\{ x_{j}^{'}\t\leq0\right\} ,
\end{align*}
which is nonzero if and only if
\begin{equation}
h_{0}\left(x\right)>0\quad\text{and}\quad x_{j}^{'}\t\leq0\ \forall j\in J.\label{eq:MMI_dif_not_0}
\end{equation}
Let $x$ be such that \eqref{eq:MMI_dif_not_0} holds. Then the set
\begin{align*}
J_{+} & :=\left\{ j:1\leq j\leq J\text{ and }x_{j}^{'}\t_{0}>0\right\} 
\end{align*}
is nonempty, since \eqref{eq:MMI_dif_not_0} and $J_{+}=\es$ would
imply that $g_{+,\t_{0},h_{0}}\left(x\right)>0$, which is not possible.

Now, define
\[
\ol x_{j}:=\begin{cases}
x_{j}-M\norm{\t-\t_{0}}\t_{0}^{'}, & \forall j\in J_{+},\\
x_{j}, & \forall j\notin J_{+}.
\end{cases}
\]
Then by \eqref{eq:MMI_dif_not_0} and the definition of $J_{+},$
\[
\ol x_{j}^{'}\t_{0}=\begin{cases}
x_{j}^{'}\t_{0}-M\norm{\t-\t_{0}}\leq x_{j}^{'}\t_{0}+x^{'}\left(\t-\t_{0}\right)=x^{'}\t\leq0, & \text{if }j\in J_{+},\\
x_{j}^{'}\t_{0}\leq0, & \text{if }j\notin J_{+},
\end{cases}
\]
or equivalently,
\[
\ol x_{j}^{'}\t_{0}\leq0,\quad\forall j\in J,
\]
which, by the multi-index single-crossing condition \eqref{eq:Mono_MMI},
implies that
\[
h_{0}\left(\ol x\right)\leq0.
\]

Now we have
\begin{align*}
h_{0}\left(x\right)>0\geq h_{0}\left(\ol x\right) & =h_{0}\left(x\right)+h_{0}\left(\ol x\right)-h_{0}\left(x\right)\\
 & \geq h_{0}\left(x\right)-\left|\sup_{\tilde{x}}\Dif_{\text{vec}\left(\tilde{x}\right)}h_{0}\left(\tilde{x}\right)\right|\cd\norm{\text{vec}\left(\ol x\right)-\text{vec}\left(x\right)}\\
 & \geq h_{0}\left(x\right)-M\cd M\cd\norm{\t-\t_{0}}\cd\norm{\left(\sum_{j\in J_{+}}e_{j}\right)\otimes\ind_{d}},\\
 & \geq h_{0}\left(x\right)-\sqrt{\#\left(J_{+}\right)}M^{2}\norm{\t-\t_{0}}\\
 & \geq h_{0}\left(x\right)-\sqrt{J}M^{2}\norm{\t-\t_{0}}
\end{align*}
and thus
\[
0<h_{0}\left(x\right)<\sqrt{J}M^{2}\cd\norm{\t-\t_{0}}=O\left(\norm{\t-\t_{0}}\right).
\]
Hence, for $\norm{\t-\t_{0}}\leq\d$
\begin{align*}
\left|g_{+,\t,h_{0}}\left(x\right)-g_{+,\t_{0},h_{0}}\left(x\right)\right| & =\left[h_{0}\left(x\right)\right]_{+}\cd\prod_{j=1}^{J}\ind\left\{ x_{j}^{'}\t\leq0\right\} .\\
 & \leq C\norm{\t-\t_{0}}\cd\sum_{j=1}^{J}\ind\left\{ x_{j}^{'}\t\leq0<x_{j}^{'}\t_{0}\right\} \\
 & \leq C\d\sum_{j=1}^{J}\ind\left\{ \left|x^{'}\t_{0}\right|\leq\norm x\d\right\} .
\end{align*}
Similarly, the arguments above can be adapted to bound $\left|g_{-,\t,h_{0}}\left(x\right)-g_{-,\t_{0},h_{0}}\left(x\right)\right|$.

Define ${\cal \cG}_{1,\d}:=\left\{ g_{\t,h_{0}}-g_{\t_{0},h_{0}}:\ \norm{\t-\t_{0}}\leq\d\right\} $.
By the arguments above, ${\cal \cG}_{1,\d}$ has an envelope $G_{1,\d}$
givcen by
\[
\left|g_{\t,h_{0}}\left(x\right)-g_{\t_{0},h_{0}}\left(x\right)\right|\leq C\d\sum_{j=1}^{J}\ind\left\{ \left|x^{'}\t_{0}\right|\leq\norm x\d\right\} =:G_{1,\d}.
\]
Moreover,
\begin{align*}
PG_{1,\d}^{2} & =\E\left[C^{2}\d^{2}\sum_{j=1}^{J}\ind\left\{ \left|X_{ij}^{'}\t_{0}\right|\leq\norm{X_{ij}}\d\right\} \right]\\
 & =JC^{2}\d^{2}\P\left(\left|\frac{X_{ij}^{'}}{\norm{X_{ij}}}\t_{0}\right|\leq\d\right)\leq C^{2}\d^{3}.
\end{align*}
Now, since ${\cal \cG}_{1,\d}\subseteq\cG$, we have $\mathscr{N}\left(\e,\cG_{1,\d},L_{2}\left(P\right)\right)\leq\mathscr{N}\left(\e,\cG,L_{2}\left(P\right)\right)$
and by Lemma \ref{lem:J_uni_finite}
\[
J_{1,\d}:=\int_{0}^{1}\sqrt{1+\log\mathscr{N}\left(\e,\cG_{1,},L_{2}\left(P\right)\right)}d\e\leq J<\infty.
\]
Then, by VW Theorem 2.14.1, we have
\[
P\sup_{g\in{\cal \cG}_{1,\d}}\left|\GG_{n}\left(g\right)\right|\leq J_{1,\d}\sqrt{PG_{1,\d}^{2}}\leq J_{1}C\d^{\frac{3}{2}}=M_{1}\d^{\frac{3}{2}}.
\]
\end{proof}

\subsection{Proof of Lemma \ref{lemTerm2prime}}
\begin{proof}
Define ${\cal \cG}_{2,\d,n}:=\left\{ g_{\t,h}-g_{\t_{0},h}-g_{\t,h_{0}}+g_{\t_{0},h_{0}}:\ \norm{\t-\t_{0}}\leq\d,\norm{h-h_{0}}_{\infty}\leq Ka_{n}\right\} $.
Then we have
\begin{align*}
 & \left|g_{+,\t,h}-g_{+,\t_{0},h}-g_{+,\t,h_{0}}+g_{+,\t_{0},h_{0}}\right|\\
= & \left|\left[h\left(x\right)\right]_{+}-\left[h_{0}\left(x\right)\right]_{+}\right|\left|\prod_{j}\ind\left\{ x_{j}^{'}\t\leq0\right\} -\prod_{j}\ind\left\{ x_{j}^{'}\t_{0}\leq0\right\} \right|\\
\leq & Ka_{n}\sum_{j=1}^{J}\ind\left\{ \left|x_{j}^{'}\t_{0}\right|\leq\norm{x_{j}}\d\right\} 
\end{align*}
and similarly for $g_{-,\t,h}$. Hence, an envelope function $G_{2,\d,n}$
for ${\cal \cG}_{2,\d,n}$ is given by
\begin{align*}
 & \left|g_{\t,h}-g_{\t_{0},h}-g_{\t,h_{0}}+g_{\t_{0},h_{0}}\right|\\
\leq & =:Ka_{n}\sum_{j=1}^{J}\ind\left\{ \left|x_{j}^{'}\t_{0}\right|\leq\norm{x_{j}}\d\right\} =:G_{2,n,\d}
\end{align*}
with
\begin{align*}
PG_{2,n,\d}^{2} & =K^{2}a_{n}^{2}\sum_{j=1}^{J}\P\left(\left|\frac{X_{ij}^{'}}{\norm{X_{ij}}}\t_{0}\right|\leq\d\right)\leq Ca_{n}^{2}\d.
\end{align*}
Since ${\cal \cG}_{2,\d,n}\subseteq\cG-{\cal \cG}_{1,\d}:=\left\{ g-\tilde{g}:g\in\cG,\tilde{g}\in\cG_{1,\d}\right\} $,
by Lemma 9.14 of \citet{kosorok2007introduction}, ${\cal \cG}_{2,\d,n}$
must also have bounded uniform entropy integrals. Hence,
\[
J_{2}:=\int_{0}^{1}\sqrt{1+\log\mathscr{N}\left(\e,\cG_{2},L_{2}\left(P\right)\right)}d\e<\infty,
\]
and by VW Theorem 2.14.1, 
\[
P\sup_{g\in{\cal \cG}_{2,\d,n}}\norm{\GG_{n}\left(g\right)}\leq J_{2,\d}\sqrt{PG_{2,n,\d}^{2}}\leq J_{2}Ca_{n}\sqrt{\d}=Ma_{n}\sqrt{\d}.
\]
\end{proof}

\subsection{\label{subsec:pf_Term3i}Proof of Lemma \ref{lem:Term_3}}

We first cite the following result in \citet*{absil2013extrinsic}
about the extrinsic representation of the Riemannian (surface) gradients
and Hessians on $\S^{d-1}$ via standard gradients and Hessians in
the ambient space $\R^{d}$ of $\S^{d}$.
\begin{lem}[Riemannian (Surface) Gradient and Hessian]
\label{lem:SurfGrad}Let $\Psi:\R^{d}\to\R$ be a differentiable
function in the standard sense, and let $\psi:\S^{d-1}\to\R$ be the
restriction of $\Psi$ on $\S^{d-1}$:
\[
\psi\left(\t\right)=\Psi\left(\t\right),\quad\forall\t\in\S^{d-1},
\]
Let $\Dif_{\t},\Dif_{\t\t}$ denote the standard gradient and Hessian
in $\R^{d}$. Let $\Dif_{\t}^{{\cal S}},\Dif_{\t\t}^{{\cal S}}$ denotes
the Riemannian (surface) gradient and Hessian on $\S^{d-1}$. Then,
for any $\t_{0}\in\S^{d-1}$,
\begin{align*}
\Dif_{\t}^{{\cal S}}\psi\left(\t_{0}\right) & =\Dif_{\t}\Psi\left(\t_{0}\right)-\left\langle \t_{0},\Dif_{\t}\Psi\left(\t_{0}\right)\right\rangle \t_{0}^{'}=\Dif_{\t}\Psi\left(\t_{0}\right)\left(I_{d}-\t_{0}\t_{0}^{'}\right)
\end{align*}
\begin{align*}
\Dif_{\t\t}^{{\cal S}}\psi\left(\t_{0}\right) & =\left(I_{d}-\t_{0}\t_{0}^{'}\right)\Dif_{\t\t}\Psi\left(\t_{0}\right)\left(I_{d}-\t_{0}\t_{0}^{'}\right)-\Dif_{\t}\Psi\left(\t_{0}\right)\t_{0}\left(I_{d}-\t_{0}\t_{0}^{'}\right)
\end{align*}
with $\Dif_{\t}^{{\cal S}}psi\left(\t_{0}\right),\Dif_{\t}\Psi\left(\t_{0}\right)$
written as $1\times d$ row vectors\footnote{Hence $\Dif_{\t}\Psi\left(\t_{0}\right)\left(\t-\t_{0}\right)$ is
a scalar as $\t-\t_{0}$ is a column vector. To clarify, all vectors
are by default column vectors in this paper unless otherwise noted.}, $\Dif_{\t\t}^{{\cal S}}\psi\left(\t_{0}\right),\Dif_{\t\t}\Psi\left(\t_{0}\right)$
as $d\times d$ matrices, and $I_{d}$ denoting the $d\times d$ identity
matrix.
\end{lem}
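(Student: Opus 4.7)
The plan is to derive both identities directly from the extrinsic definitions of the Riemannian gradient and Hessian on the embedded submanifold $\S^{d-1}\subset\R^{d}$. The key geometric inputs are that the outward unit normal to $\S^{d-1}$ at $\t_{0}$ is $\t_{0}$ itself, and that the orthogonal projector onto the tangent space $T_{\t_{0}}\S^{d-1}=\{v\in\R^{d}:v^{'}\t_{0}=0\}$ is $P:=I_{d}-\t_{0}\t_{0}^{'}$.

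For the gradient identity, I would use the defining property of $\Dif_{\t}^{{\cal S}}\psi(\t_{0})$ as the unique tangent covector satisfying $\Dif_{\t}^{{\cal S}}\psi(\t_{0})\,v=d\psi_{\t_{0}}(v)$ for every $v\in T_{\t_{0}}\S^{d-1}$. Since $\psi$ is the restriction of $\Psi$ to $\S^{d-1}$, the ambient chain rule along any smooth curve $\gamma\subset\S^{d-1}$ with $\gamma(0)=\t_{0}$ and $\gamma^{'}(0)=v$ gives $d\psi_{\t_{0}}(v)=\Dif_{\t}\Psi(\t_{0})v$. The unique row vector in $T_{\t_{0}}\S^{d-1}$ representing this linear functional is the orthogonal projection of $\Dif_{\t}\Psi(\t_{0})$ onto the tangent space, namely $\Dif_{\t}\Psi(\t_{0})(I_{d}-\t_{0}\t_{0}^{'})$; the alternative expression in the lemma is just the algebraic expansion of this projection.

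For the Hessian identity, I would use the geodesic characterization: for any tangent $v$ and any geodesic $\gamma$ with $\gamma(0)=\t_{0}$ and $\gamma^{'}(0)=v$, one has $v^{'}\Dif_{\t\t}^{{\cal S}}\psi(\t_{0})\,v=(\Psi\circ\gamma)^{''}(0)$. The great-circle geodesic on $\S^{d-1}$ is $\gamma(t)=\cos(t\norm v)\t_{0}+\sin(t\norm v)\,v/\norm v$, for which $\gamma^{'}(0)=v$ and $\gamma^{''}(0)=-\norm v^{2}\t_{0}$. The ambient chain rule then yields
\[
(\Psi\circ\gamma)^{''}(0)=v^{'}\Dif_{\t\t}\Psi(\t_{0})v+\Dif_{\t}\Psi(\t_{0})\gamma^{''}(0)=v^{'}\Dif_{\t\t}\Psi(\t_{0})v-\norm v^{2}\,\Dif_{\t}\Psi(\t_{0})\t_{0}.
\]
Using $v=Pv$ and $\norm v^{2}=v^{'}Pv$ for tangent $v$, I would rewrite the right-hand side as $v^{'}\bigl[P\,\Dif_{\t\t}\Psi(\t_{0})P-(\Dif_{\t}\Psi(\t_{0})\t_{0})P\bigr]v$. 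The symmetric $d\times d$ matrix in brackets has the correct action on tangent directions and annihilates the normal direction $\t_{0}$, so it is the canonical extrinsic representation of $\Dif_{\t\t}^{{\cal S}}\psi(\t_{0})$ as stated.

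The part requiring real care is correctly identifying the second-fundamental-form correction $-(\Dif_{\t}\Psi(\t_{0})\t_{0})P$ in the Hessian. Conceptually, because the Levi-Civita connection on $\S^{d-1}$ equals the tangent projection of the Euclidean connection, covariant differentiation along a curve on the sphere discards a normal component; the geodesic approach reabsorbs this normal loss into a single explicit term $\Dif_{\t}\Psi(\t_{0})\gamma^{''}(0)=-\norm v^{2}\Dif_{\t}\Psi(\t_{0})\t_{0}$, sidestepping any need to compute the shape operator directly. All curvature information of the sphere enters through that one normal-acceleration contribution, which is why the geodesic route is the cleanest path to the stated formula.
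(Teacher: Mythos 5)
The paper does not actually prove this lemma: it is cited verbatim from \citet{absil2013extrinsic} with no in-text derivation, so there is no proof in the paper to compare against. Your geodesic derivation is correct and self-contained. The gradient part is the standard tangent-projection argument and is unobjectionable. For the Hessian, the geodesic characterization $v^{'}\Dif_{\t\t}^{\mathcal{S}}\psi(\t_{0})v=(\Psi\circ\gamma)^{''}(0)$ along a great circle is exactly the right tool; your computation $\gamma^{''}(0)=-\norm{v}^{2}\t_{0}$ and the subsequent rewrite $\norm{v}^{2}=v^{'}Pv$ with $P=I_{d}-\t_{0}\t_{0}^{'}$ reproduce the stated matrix $P\,\Dif_{\t\t}\Psi(\t_{0})P-\bigl(\Dif_{\t}\Psi(\t_{0})\t_{0}\bigr)P$ on the nose, and symmetry plus polarization lifts the quadratic-form identity to the full bilinear form. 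The identification of the resulting $d\times d$ matrix with the extrinsic representation is justified by the observation that it is symmetric, annihilates $\t_{0}$, and agrees with the intrinsic Hessian on the tangent space, which pins it down uniquely in that normalization.

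The one contrast worth noting is methodological. The cited source (Absil, Mahony, Trumpf) obtains this formula by projecting the ambient Euclidean connection and explicitly invoking the Weingarten map (shape operator) of the hypersurface; for $\S^{d-1}$ the shape operator with respect to the outward normal is $-I$ on the tangent space, which produces the correction term $-\bigl(\Dif_{\t}\Psi(\t_{0})\t_{0}\bigr)P$. Your route reaches the same correction through the normal acceleration of the great-circle geodesic, avoiding any separate computation of the shape operator. For a general embedded submanifold the Weingarten route is the more systematic one, but for the sphere the geodesic computation is shorter and arguably more transparent, since the normal acceleration is available in closed form. Either way the result is the same, and your argument would serve as a valid self-contained proof where the paper currently only cites.
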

\noindent We also state the following elementary results on change
of coordinates with respect to an orthonormal basis in $\R^{d}$,
which will be heavily exploited subsequently.
\begin{defn}[Change of Coordinates]
\label{def:BasisT} Let $\left\{ \t_{0},\tilde{e}_{2},..,\tilde{e}_{d}\right\} $
be an orthonormal basis in $\R^{d}$. Define $T_{\t_{0}}$ to be the
$d\times d$ basis transformation matrix
\[
T_{\t_{0}}:=\left(\t_{0},\tilde{e}_{2},..,\tilde{e}_{d}\right).
\]
so that $T_{\t_{0}}^{'}x=\left(\t_{0}^{'}x,\tilde{e}_{2}^{'}x,..,\tilde{e}_{d}^{'}x\right)$.
\end{defn}
\begin{lem}
\label{lem:ChangeBasis} (i) $T_{\t_{0}}^{'}=T_{\t_{0}}^{-1}$. (ii)
$\left|\text{det}\left(T_{\t_{0}}\right)\right|=1$, (iii) $u^{'}T_{\t_{0}}^{'}\t_{0}=u_{1}$
and 
\[
\left(I-\t_{0}\t_{0}^{'}\right)T_{\t_{0}}u\equiv\left(I-\t_{0}\t_{0}^{'}\right)T_{\t_{0}}\ol u_{-1},\ \forall u\in\R^{d}
\]
where $\ol u_{-1}:=\left(0,u_{-1}^{'}\right)^{'}\in\R^{d}$ and $u_{-1}:=\left(u_{2},...,u_{d}\right)^{'}\in\R^{d-1}$.
\end{lem}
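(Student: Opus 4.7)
The plan is to note that this lemma is really just a packaging of elementary facts about orthonormal matrices, and to prove the three parts in order using orthonormality as the single key input.

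First, for part (i), I would observe that the columns of $T_{\t_{0}}=\left(\t_{0},\tilde{e}_{2},\dots,\tilde{e}_{d}\right)$ are by hypothesis orthonormal, so the $\left(j,k\right)$ entry of $T_{\t_{0}}^{'}T_{\t_{0}}$ is the inner product of two basis vectors, which is $\d_{jk}$. Hence $T_{\t_{0}}^{'}T_{\t_{0}}=I_{d}$, which gives $T_{\t_{0}}^{-1}=T_{\t_{0}}^{'}$ directly. For part (ii), since $T_{\t_{0}}$ is orthogonal, $1=\det\left(I_{d}\right)=\det\left(T_{\t_{0}}^{'}T_{\t_{0}}\right)=\det\left(T_{\t_{0}}\right)^{2}$, so $\left|\det\left(T_{\t_{0}}\right)\right|=1$. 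Both parts are immediate and require no further machinery.

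For part (iii), I would compute $T_{\t_{0}}^{'}\t_{0}$ row by row: the $j$-th entry is the inner product of the $j$-th basis vector with $\t_{0}$, which equals $1$ for $j=1$ and $0$ for $j\geq2$ by orthonormality. Thus $T_{\t_{0}}^{'}\t_{0}=e_{1}$, the first standard basis vector in $\R^{d}$, so $u^{'}T_{\t_{0}}^{'}\t_{0}=u^{'}e_{1}=u_{1}$.

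For the second identity in (iii), I would expand $T_{\t_{0}}u=u_{1}\t_{0}+\sum_{j=2}^{d}u_{j}\tilde{e}_{j}$. Applying the projector $\left(I-\t_{0}\t_{0}^{'}\right)$, the first summand is annihilated since $\left(I-\t_{0}\t_{0}^{'}\right)\t_{0}=\t_{0}-\t_{0}\norm{\t_{0}}^{2}=0$. The remaining sum is exactly $T_{\t_{0}}\ol u_{-1}$ by definition of $\ol u_{-1}=\left(0,u_{2},\dots,u_{d}\right)^{'}$, and applying $\left(I-\t_{0}\t_{0}^{'}\right)$ to that recovers the right-hand side. There is no real obstacle here: every claim is a one-line consequence of orthonormality, and the only mild care needed is to keep the role of the first coordinate separate from the remaining $d-1$ coordinates when expanding $T_{\t_{0}}u$.
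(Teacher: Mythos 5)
Your proof is correct and takes essentially the same approach as the paper: parts (i)–(ii) are the standard orthogonal-matrix facts, and (iii) follows from $T_{\t_{0}}^{'}\t_{0}=e_{1}$ together with the observation that $\left(I-\t_{0}\t_{0}^{'}\right)$ annihilates the $\t_{0}$-direction in $T_{\t_{0}}u$. Your explicit expansion $T_{\t_{0}}u=u_{1}\t_{0}+\sum_{j\geq2}u_{j}\tilde{e}_{j}$ is if anything slightly cleaner than the paper's matrix computation of $\left(I-\t_{0}\t_{0}^{'}\right)T_{\t_{0}}=\left(0,\tilde{e}_{2},\ldots,\tilde{e}_{d}\right)$, but the content is identical.
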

\begin{proof}
(i)(ii) are elementary. (iii)(iv) follow from the observation that
$T_{\t_{0}}^{'}\t_{0}=\left(1,0,...,0\right)^{'}$ and
\begin{align*}
\left(I-\t_{0}\t_{0}^{'}\right)T_{\t_{0}} & =\left(\t_{0},\tilde{e}_{2},..,\tilde{e}_{d}\right)-\left(\t_{0},\tilde{e}_{2},..,\tilde{e}_{d}\right)\left(\begin{array}{c}
1\\
0\\
\vdots\\
0
\end{array}\right)=\left(0,\tilde{e}_{2},..,\tilde{e}_{d}\right).
\end{align*}
\end{proof}

\subsubsection*{Alternative Representation of $h_{0}\left(x\right)$}

Under the change of coordinate from $x$ to $u=T_{\t_{0}}^{'}x$,
the function $h_{0}\left(x\right)$ can be equivalently written as
a function of $u$ as
\[
h_{0u}\left(u\right):=h_{0}\left(T_{\t_{0}}u\right).
\]
Under this change of coordinate, several important properties of $h_{0}$
will be inherited by $h_{0u}$.
\begin{lem}
\label{lem:h0u} $h_{0u}$ has the following properties:
\begin{itemize}
\item i) $h_{0u}$ is twice differentiable with uniformly bounded derivatives.
\item ii) $h_{0u}\left(u_{1},u_{-1}\right)\lesseqqgtr0$ if and only if
$u_{1}\lesseqqgtr0$, for any $u_{-1}$.
\item iii) $\Dif_{u_{1}}h_{0u}\left(u_{1},u_{-1}\right)=\Dif_{x}h_{0}\left(x\right)^{'}\t_{0}$.
\end{itemize}
\end{lem}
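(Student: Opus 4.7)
The three parts of Lemma \ref{lem:h0u} are routine translations of known properties of $h_0$ through the linear change of coordinates $x = T_{\theta_0} u$, so my plan is to dispatch each via the chain rule plus the explicit structure of $T_{\theta_0}$ recorded in Definition \ref{def:BasisT} and Lemma \ref{lem:ChangeBasis}.

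For part (i), I would invoke the chain rule. Since $h_0$ is twice continuously differentiable with uniformly bounded first and second derivatives by Lemma \ref{lem:h0_dif_bound}, and since $T_{\theta_0}$ is orthogonal by Lemma \ref{lem:ChangeBasis}(i) (so its operator norm and all its entries are bounded by $1$), the composition $h_{0u}(u) = h_0(T_{\theta_0} u)$ is automatically twice continuously differentiable. Explicitly, $\nabla_u h_{0u}(u) = \nabla_x h_0(T_{\theta_0} u)\, T_{\theta_0}$ and $\nabla_{uu} h_{0u}(u) = T_{\theta_0}^{'}\nabla_{xx} h_0(T_{\theta_0} u)\, T_{\theta_0}$, so the uniform bounds on the derivatives of $h_0$ transfer to $h_{0u}$, with the bounding constant multiplied at most by $\|T_{\theta_0}\|_{\mathrm{op}}^2 = 1$.

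For part (ii), I would combine two facts. First, under Assumption \ref{assu:Basic}(b)--(d), we have $h_0(x) = F(x^{'}\theta_0 \mid x) - \tfrac{1}{2}$ with $F(\cdot \mid x)$ strictly increasing (since $f > 0$) and $F(0 \mid x) = \tfrac{1}{2}$, so $h_0(x) \lesseqqgtr 0$ if and only if $x^{'}\theta_0 \lesseqqgtr 0$. Second, by Lemma \ref{lem:ChangeBasis}(iii), $x^{'}\theta_0 = (T_{\theta_0} u)^{'}\theta_0 = u^{'} T_{\theta_0}^{'}\theta_0 = u_1$. Chaining these identities immediately identifies the sign of $h_{0u}(u_1, u_{-1})$ with the sign of $u_1$, uniformly in $u_{-1}$. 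For part (iii), I would reuse the chain rule expression $\nabla_u h_{0u}(u) = \nabla_x h_0(x)\, T_{\theta_0}$ with $x = T_{\theta_0} u$; since the first column of $T_{\theta_0}$ is $\theta_0$ by Definition \ref{def:BasisT}, the first component of this row vector is exactly $\nabla_x h_0(x)\,\theta_0 = \nabla_x h_0(x)^{'}\theta_0$, which is the claimed formula for $\nabla_{u_1} h_{0u}$.

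There is no genuine obstacle here; the whole lemma is effectively bookkeeping on a rotation of coordinates. The only point requiring minor care is keeping the row/column vector conventions straight in the chain rule so that the first-coordinate extraction in part (iii) lines up with the statement, and being explicit that the strict monotonicity in part (ii) is inherited from the strict positivity of the conditional density $f$ rather than from any property of $T_{\theta_0}$ itself.
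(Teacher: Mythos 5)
Your proposal is correct and follows essentially the same route as the paper, which declares parts (i) and (ii) trivial and establishes (iii) by the same chain-rule calculation $\nabla_{u_1} h_{0u}(u) = \nabla_x h_0(T_{\theta_0}u)^{'}T_{\theta_0}e_1 = \nabla_x h_0(T_{\theta_0}u)^{'}\theta_0$. You merely spell out the details the paper omits: that orthogonality of $T_{\theta_0}$ preserves the uniform derivative bounds for (i), and that (ii) chains the sign equivalence $h_0(x)\lesseqqgtr 0 \iff x^{'}\theta_0\lesseqqgtr 0$ (from strict positivity of $f$ and the median normalization) with the identity $x^{'}\theta_0 = u_1$ from Lemma \ref{lem:ChangeBasis}(iii).
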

\begin{proof}
i) and ii) are trivial. iii) follows from the chain rule:
\begin{align*}
\Dif_{u_{1}}h_{0u}\left(u_{1},u_{-1}\right) & =\Dif_{u}h_{0u}\left(u\right)^{'}e_{1}=\Dif_{u}h_{0}\left(T_{\t_{0}}u\right)^{'}e_{1}\\
 & =\Dif_{x}h_{0}\left(T_{\t_{0}}u\right)^{'}T_{\t_{0}}e_{1}=\Dif_{x}h_{0}\left(T_{\t_{0}}u\right)^{'}\t_{0}.
\end{align*}
\end{proof}
We emphasize the following intuitive property about $h_{0}$ and $h_{0u}$.
\begin{lem}
\label{lem:FDnot0}Under Assumption \ref{assu:Basic}(c)(d), for any
$x\in{\cal X}$ s.t. $x^{'}\t_{0}=0$, or equivalently for any $u$
s.t. $u_{1}=0$, we have
\[
\Dif_{u_{1}}h_{0u}\left(0,u_{-1}\right)=\Dif_{x}h_{0}\left(x\right)^{'}\t_{0}>0.
\]
\end{lem}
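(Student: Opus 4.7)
The plan is to unpack the definition $h_0(x) = E[y_i - \tfrac{1}{2} \mid X_i = x]$ in terms of the conditional CDF $F(\cdot\mid x)$ of $\epsilon_i$, differentiate by the chain rule, and then use Assumption \ref{assu:Basic}(d) to kill the term that would otherwise contaminate the inner product with $\theta_0$. The key observation is that although $\nabla_x h_0(x)$ has two components — one coming from the dependence on $x'\theta_0$ through the CDF argument and one coming from the dependence of the conditioning variable $x$ — differentiating the identity $F(0\mid x) = \tfrac12$ (which holds for all $x \in \mathcal X$) shows that the second component is orthogonal to $\theta_0$ at the crossing $x'\theta_0 = 0$, so only the first (which is strictly positive) survives.

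Concretely, writing $h_0(x) = F(x'\theta_0 \mid x) - \tfrac12$ (as in the proofs of Lemmas \ref{lem:h0_dif_bound} and \ref{lem:Term1}), the chain rule gives
\[
\nabla_x h_0(x) \;=\; f(x'\theta_0 \mid x)\,\theta_0 \;+\; \tfrac{\partial}{\partial x} F(\epsilon\mid x)\big|_{\epsilon = x'\theta_0},
\]
where $f$ is the conditional density from Assumption \ref{assu:Basic}(c). Taking inner product with $\theta_0$ and using $\|\theta_0\|=1$ yields
\[
\nabla_x h_0(x)' \theta_0 \;=\; f(x'\theta_0 \mid x) \;+\; \tfrac{\partial}{\partial x} F(\epsilon\mid x)\big|_{\epsilon = x'\theta_0}{}' \theta_0.
\]
Now evaluate at any $x$ with $x'\theta_0 = 0$. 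Assumption \ref{assu:Basic}(d) says $F(0\mid x) = \tfrac12$ for every $x \in \mathcal X$, so the map $x \mapsto F(0 \mid x)$ is constant; differentiating in $x$ gives $\tfrac{\partial}{\partial x}F(\epsilon\mid x)\big|_{\epsilon=0} = 0$. Hence the second term vanishes and the expression reduces to $f(0\mid x)$, which is strictly positive by Assumption \ref{assu:Basic}(c). The identification with $\nabla_{u_1} h_{0u}(0,u_{-1})$ is already established in Lemma \ref{lem:h0u}(iii).

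There is no real obstacle — the only subtlety is not to confuse the partial derivative in the CDF argument with the partial in the conditioning variable, and to recognize that the latter is annihilated by the median normalization at $\epsilon = 0$. (The same argument goes through verbatim under the equivalent representation $h_0(x) = \tfrac12 - F(-x'\theta_0 \mid x)$, since $F(0\mid x) = \tfrac12$ kills the conditioning-derivative term in that form as well.)
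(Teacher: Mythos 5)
Your proof is correct and follows essentially the same route as the paper's: write $h_0(x)=F(x'\t_0\mid x)-\tfrac12$, apply the chain rule to split $\Dif_x h_0(x)$ into the $f(\cdot\mid x)\t_0$ term and the conditioning-variable partial, then use the median normalization $F(0\mid x)\equiv\tfrac12$ to annihilate the latter and reduce to $f(0\mid x)>0$. The only cosmetic difference is that your opening sketch says the conditioning-derivative term is ``orthogonal to $\t_0$'' when in fact it is identically zero at $x'\t_0=0$ (as you correctly establish in the body of the argument), so the claim is slightly weaker than what you actually prove.
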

\begin{proof}
Since $h_{0}\left(x\right)=F\left(\rest{x^{'}\t_{0}}x\right)$, we
have
\[
\Dif_{x}h_{0}\left(x\right)=f\left(\rest{x^{'}\t_{0}}x\right)\t_{0}+\rest{\frac{\p}{\p x}F\left(\rest{\e}x\right)}_{\e=x^{'}\t_{0}}.
\]
Since $F\left(\rest 0x\right)\equiv\frac{1}{2}$ for any $x$, we
have
\[
\frac{\p}{\p x}F\left(\rest 0x\right)\equiv{\bf 0}.
\]
Hence, for any $x\in{\cal X}$ s.t. $x^{'}\t_{0}=0$, we have
\[
\Dif_{x}h_{0}\left(x\right)^{'}\t_{0}=f\left(\rest 0x\right)\t_{0}^{'}\t_{0}+\frac{\p}{\p x}F\left(\rest 0x\right)^{'}\t_{0}=f\left(\rest 0x\right)>0.
\]
\end{proof}

\subsubsection*{Proof of Lemma \ref{lem:Term_3}(i)}
\begin{proof}
Consider the following first-order Taylor expansion of $f_{n,\t}$
around $\t_{0}$:
\begin{align*}
\psi_{n,\t}\left(z\right)-\psi_{n,\t_{0}}\left(z\right)=\  & \left(y-\frac{1}{2}\right)\left[\Phi\left(\frac{x^{'}\t}{b_{n}}\right)-\Phi\left(\frac{x^{'}\t_{0}}{b_{n}}\right)\right]\\
=\  & \left(y-\frac{1}{2}\right)\text{\ensuremath{\Dif_{\t}^{{\cal S}}\Phi\left(\frac{\xi\left(x\right)}{b_{n}}\right)}}\left(\t-\t_{0}\right)\\
=\  & \left(y-\frac{1}{2}\right)\Dif_{\t}\Phi\left(\frac{\xi\left(x\right)}{b_{n}}\right)\left(I_{d}-\t_{0}\t_{0}^{'}\right)\left(\t-\t_{0}\right)\\
=\  & \left(y-\frac{1}{2}\right)\phi\left(\frac{\xi\left(x\right)}{b_{n}}\right)\frac{x^{'}}{b_{n}}\left(I-\t_{0}\t_{0}^{'}\right)\left(\t-\t_{0}\right)
\end{align*}
for some $\xi\left(x\right)$ that lies between $x^{'}\t$ and $\text{\ensuremath{x^{'}\t_{0}}}$.
Then the function space
\[
{\cal G}_{n,\d}^{\psi}:=\left\{ \psi_{n,\t}\left(z\right)-\psi_{n,\t_{0}}\left(z\right):\ \norm{\psi_{n,\t}\left(z\right)-\psi_{n,\t_{0}}\left(z\right)}\leq\d\right\} 
\]
has an envelope $\text{\ensuremath{\Psi_{n,\d}}}$ given b
\begin{align}
\left|\psi_{n,\t}\left(z\right)-\psi_{n,\t_{0}}\left(z\right)\right|=\  & \left|y-\frac{1}{2}\right|\left|\Phi\left(\frac{x^{'}\t}{b_{n}}\right)-\Phi\left(\frac{x^{'}\t_{0}}{b_{n}}\right)\right|\nonumber \\
=\  & \frac{1}{2b_{n}}\phi\left(\frac{\xi\left(x\right)}{b_{n}}\right)\left|x^{'}\left(I-\t_{0}\t_{0}^{'}\right)\left(\t-\t_{0}\right)\right|\nonumber \\
\leq\  & \frac{1}{2b_{n}}\phi\left(\frac{\xi\left(x\right)}{b_{n}}\right)\left|x^{'}\left(I-\t_{0}\t_{0}^{'}\right)\frac{\t-\t_{0}}{\norm{\t-\t_{0}}}\right|\d\nonumber \\
\leq\  & \frac{1}{2b_{n}}\ol{\phi}_{n,\d}\left(x^{'}\t_{0}\right)\norm{\left(I-\t_{0}\t_{0}^{'}\right)x}\d\label{eq:Bound_xi}\\
=:\  & \Psi_{n,\d}\nonumber 
\end{align}
where the function $\ol{\phi}_{n,\d}$ in \eqref{eq:Bound_xi} is
defined as
\begin{align}
\ol{\phi}_{n,\d}\left(x^{'}\t_{0}\right) & :=\max_{\e:\left|\e\right|\leq\d}\phi\left(\frac{x^{'}\t_{0}+\e}{b_{n}}\right)=\phi\left(0\right)\ind\left\{ \left|x^{'}\t_{0}\right|\leq\d\right\} +\phi\left(\frac{\left|x^{'}\t_{0}\right|-\d}{b_{n}}\right)\ind\left\{ \left|x^{'}\t_{0}\right|>\d\right\} \label{eq:phi_bar_nd}
\end{align}
given that $\phi\left(t\right)$ is decreasing in $\left|t\right|$.
This ensures the inequality in \eqref{eq:Bound_xi} by $\phi\left(\frac{\xi\left(x\right)}{b_{n}}\right)\leq\ol{\phi}_{n,\d}\left(x^{'}\t_{0}\right)$,
because $\xi\left(x\right)$ lies between $x^{'}\t_{0}$ and $x^{'}\t$,
while
\[
x^{'}\t\in\left[x^{'}\t_{0}-\norm x\d,x^{'}\t_{0}+\norm x\d\right]\subseteq\left[x^{'}\t_{0}-\d,x^{'}\t_{0}+\d\right],
\]
so that $\xi\left(x\right)\in\left[x^{'}\t_{0}-\d,x^{'}\t_{0}+\d\right]$.

Now, impose the change of coordinates to the basis $\left\{ \t_{0},\tilde{e}_{2},..,\tilde{e}_{d}\right\} $
as i\textbackslash R\textasciicircum\{Jd\}n Definition \ref{def:BasisT}
with $u:=T_{\t_{0}}^{'}x$ and thus $x=T_{\t_{0}}u$. Then, by Lemma
\ref{lem:ChangeBasis},
\begin{align*}
P\Psi_{n,\d}^{2} & =\frac{\d^{2}}{4b_{n}^{2}}\int\ol{\phi}_{n,\d}^{2}\left(x^{'}\t_{0}\right)x^{'}\left(I-\t_{0}\t_{0}^{'}\right)xp_{x}dx\\
 & =\frac{\d^{2}}{4b_{n}^{2}}\int\ol{\phi}_{n,\d}^{2}\left(u^{'}T_{\t_{0}}^{'}\t_{0}\right)u^{'}T_{\t_{0}}^{'}\left(I-\t_{0}\t_{0}^{'}\right)T_{\t_{0}}up_{x}dT_{\t_{0}}u\\
 & =\frac{\d^{2}}{4b_{n}^{2}}\int\ol{\phi}_{n,\d}^{2}\left(u_{1}\right)\ol u_{-1}^{'}T_{\t_{0}}^{'}\left(I-\t_{0}\t_{0}^{'}\right)T_{\t_{0}}\ol u_{-1}p_{x}du\\
 & =\frac{\d^{2}}{4b_{n}^{2}}\int\int\ol{\phi}_{n,\d}^{2}\left(u_{1}\right)du_{1}\ol u_{-1}^{'}T_{\t_{0}}^{'}\left(I-\t_{0}\t_{0}^{'}\right)T_{\t_{0}}\ol u_{-1}p_{x}du_{-1}
\end{align*}
while 
\begin{align*}
\int\ol{\phi}_{n,\d}^{2}\left(u_{1}\right)du_{1}= & \int\phi^{2}\left(0\right)\ind\left\{ \left|u_{1}\right|\leq\d\right\} du_{1}+\int\phi^{2}\left(\frac{\left|u_{1}\right|-\d}{b_{n}}\right)\ind\left\{ \left|u_{1}\right|>\d\right\} du_{1}\\
= & 2\phi^{2}\left(0\right)\int_{0}^{\d}du_{1}+2\int_{\d}^{1}\phi^{2}\left(\frac{u_{1}-\d}{b_{n}}\right)du_{1}\\
= & 2\phi^{2}\left(0\right)\d+2\int_{0}^{b_{n}^{-1}\left(1-\d\right)}\phi^{2}\left(\zeta_{1}\right)d\left(b_{n}\zeta_{1}+\d\right)\text{with }\zeta_{1}:=\frac{u_{1}-\d}{b_{n}}\\
\leq & 2\phi^{2}\left(0\right)\d+2b_{n}\int_{0}^{\infty}\phi^{2}\left(\zeta_{1}\right)d\zeta_{1}\\
\leq & C\left(\d+b_{n}\right)
\end{align*}
and $\int\ol u_{-1}^{'}T_{\t_{0}}^{'}\left(I-\t_{0}\t_{0}^{'}\right)T_{\t_{0}}\ol u_{-1}p_{x}du_{-1}\in\left(0,\infty\right)$.
Hence, 
\begin{align*}
P\Psi_{n,\d}^{2} & \leq\frac{\d^{2}}{4b_{n}^{2}}C\left(\d+b_{n}\right),
\end{align*}
and by VW Theorem 2.14.1, we have
\begin{align*}
P\sup_{\norm{\t-\t_{0}}\leq\d}\left|\GG_{n}\left(\psi_{n,\t}-\psi_{n,\t_{0}}\right)\right| & \leq J\sqrt{P\Psi_{n,\d}^{2}}\leq M_{1}\frac{\d}{b_{n}}\left(\d+b_{n}\right)^{\frac{1}{2}}.
\end{align*}
\end{proof}

\subsubsection*{Proof of Lemma \ref{lem:Term_3}(ii)}
\begin{proof}
First, consider the following second-order Taylor expansion of $\psi_{n,\t}-\psi_{n,\t_{0}}$:
\begin{align*}
 & \psi_{n,\t}\left(z\right)-\psi_{n,\t_{0}}\left(z\right)\\
=\  & \left(y-\frac{1}{2}\right)\left[\text{\ensuremath{\Dif_{\t}^{{\cal S}}\Phi\left(\frac{x^{'}\t_{0}}{b_{n}}\right)}}\left(\t-\t_{0}\right)+\frac{1}{2}\left(\t-\t_{0}\right)^{'}\Dif_{\t\t}^{{\cal S}}\Phi\left(\frac{\xi\left(x\right)}{b_{n}}\right)\left(\t-\t_{0}\right)\right]\\
=\  & \left(y-\frac{1}{2}\right)\Dif_{\t}\Phi\left(\frac{x^{'}\t_{0}}{b_{n}}\right)\left(I_{d}-\t_{0}\t_{0}^{'}\right)\left(\t-\t_{0}\right)\\
=\  & +\frac{1}{2}\left(y-\frac{1}{2}\right)\left(\t-\t_{0}\right)\left(I_{d}-\t_{0}\t_{0}^{'}\right)\Dif_{\t\t}\Phi\left(\frac{\xi\left(x\right)}{b_{n}}\right)\left(I_{d}-\t_{0}\t_{0}^{'}\right)\left(\t-\t_{0}\right)\\
 & -\frac{1}{2}\left(y-\frac{1}{2}\right)\Dif_{\t}\Phi\left(\frac{\xi\left(x\right)}{b_{n}}\right)\t_{0}\left(I_{d}-\t_{0}\t_{0}^{'}\right)\left(\t-\t_{0}\right)\\
=\  & \left(y-\frac{1}{2}\right)\phi\left(\frac{x^{'}\t_{0}}{b_{n}}\right)\frac{x^{'}}{b_{n}}\left(I-\t_{0}\t_{0}^{'}\right)\left(\t-\t_{0}\right)\\
 & +\frac{1}{2}\left(y-\frac{1}{2}\right)\left(\t-\t_{0}\right)^{'}\left(I_{d}-\t_{0}\t_{0}^{'}\right)\phi^{'}\left(\frac{\xi\left(x\right)}{b_{n}}\right)\cd\frac{xx^{'}}{b_{n}^{2}}\left(I_{d}-\t_{0}\t_{0}^{'}\right)\left(\t-\t_{0}\right)\\
 & -\frac{1}{2}\left(y-\frac{1}{2}\right)\phi\left(\frac{\xi\left(x\right)}{b_{n}}\right)\frac{x^{'}}{b_{n}}\t_{0}\left(\t-\t_{0}\right)^{'}\left(I_{d}-\t_{0}\t_{0}^{'}\right)\left(\t-\t_{0}\right)
\end{align*}
for some $\xi\left(x\right)$ between $x^{'}\t_{0}$ and $x^{'}\t$.
Then:
\begin{align}
 & P\left(\psi_{n,\t}\left(z\right)-\psi_{n,\t_{0}}\left(z\right)\right)\nonumber \\
= & \int\E\left[\rest{y_{i}-\frac{1}{2}}X_{i}=x\right]\left(\Phi\left(\frac{x^{'}\t}{b_{n}}\right)-\Phi\left(\frac{x^{'}\t_{0}}{b_{n}}\right)\right)p_{x}dx\nonumber \\
=\  & \left[\int h_{0}\left(x\right)\phi\left(\frac{x^{'}\t_{0}}{b_{n}}\right)\frac{x^{'}}{b_{n}}\left(I-\t_{0}\t_{0}^{'}\right)p_{x}dx\right]\left(\t-\t_{0}\right)\label{eq:A_1}\\
 & +\frac{1}{2}\left(\t-\t_{0}\right)^{'}\left[\int h_{0}\left(x\right)-\frac{1}{2}\phi^{'}\left(\frac{\xi\left(x\right)}{b_{n}}\right)\left(I_{d}-\t_{0}\t_{0}^{'}\right)\frac{xx^{'}}{b_{n}^{2}}\left(I_{d}-\t_{0}\t_{0}^{'}\right)p_{x}dx\right]\left(\t-\t_{0}\right)\label{eq:A_2}\\
 & -\frac{1}{2}\left[\int h_{0}\left(x\right)\phi\left(\frac{\xi\left(x\right)}{b_{n}}\right)\frac{x^{'}\t_{0}}{b_{n}}p_{x}d_{x}\right]\left(\t-\t_{0}\right)^{'}\left(I_{d}-\t_{0}\t_{0}^{'}\right)\left(\t-\t_{0}\right)\label{eq:A_3}\\
=: & A_{n,1}\left(\t-\t_{0}\right)+\left(\t-\t_{0}\right)^{'}A_{n,2}\left(\t-\t_{0}\right)+A_{n,3}\left(\t-\t_{0}\right)^{'}\left(I_{d}-\t_{0}\t_{0}^{'}\right)\left(\t-\t_{0}\right)\label{eq:A_ALL}
\end{align}
In the following we deal with $A_{n,1},A_{n,2},A_{n,3}$ separately.

First, for $A_{n,1}$, we consider the bracketed term in \eqref{eq:A_1}
and expand $F\left(t\right)$ around $t=0$:
\begin{align*}
A_{n,1}:=\  & \int h_{0}\left(x\right)\phi\left(\frac{x^{'}\t_{0}}{b_{n}}\right)\frac{x^{'}}{b_{n}}\left(I-\t_{0}\t_{0}^{'}\right)p_{x}dx\\
=\  & \frac{1}{b_{n}}\int h_{0}\left(T_{\t_{0}}u\right)\phi\left(\frac{u^{'}T_{\t_{0}}^{'}\t_{0}}{b_{n}}\right)u^{'}T_{\t_{0}}^{'}\left(I-\t_{0}\t_{0}^{'}\right)p_{x}du\\
=\  & \frac{1}{b_{n}}\int h_{0u}\left(u_{1},u_{-1}\right)\phi\left(\frac{u_{1}}{b_{n}}\right)\ol u_{-1}^{'}T_{\t_{0}}^{'}\left(I-\t_{0}\t_{0}^{'}\right)p_{x}du_{1}du_{-1}\\
=\  & \frac{1}{b_{n}}\int h_{0u}\left(b_{n}\zeta_{1},u_{-1}\right)\phi\left(\zeta_{1}\right)\ol u_{-1}^{'}T_{\t_{0}}^{'}\left(I-\t_{0}\t_{0}^{'}\right)p_{x}d\left(b_{n}\zeta_{1}\right)du_{-1}\text{with }\zeta_{1}:=\frac{u_{1}}{b_{n}}\\
=\  & \int\left[\int\Dif_{u_{1}}h_{0u}\left(0,u_{-1}\right)b_{n}\zeta_{1}+\Dif_{u_{1}}^{2}h_{0u}\left(b_{n}\tilde{\zeta}_{1},u_{-1}\right)\left(b_{n}\zeta_{1}\right)^{2}\right]\phi\left(\zeta_{1}\right)\ol u_{-1}^{'}T_{\t_{0}}^{'}\left(I-\t_{0}\t_{0}^{'}\right)p_{x}d\zeta_{1}du_{-1}\text{ for some }\tilde{\zeta}_{1}\text{ between }0\text{ and }\zeta_{1}\\
=\  & b_{n}\cd\int\int_{-b_{n}^{-1}}^{b_{n}^{-1}}\zeta_{1}\phi\left(\zeta_{1}\right)d\zeta_{1}\Dif_{u_{1}}h_{0u}\left(0,u_{-1}\right)\ol u_{-1}^{'}T_{\t_{0}}^{'}\left(I-\t_{0}\t_{0}^{'}\right)p_{x}du_{-1}\\
 & +b_{n}^{2}\cd\int\int\Dif_{u_{1}}^{2}h_{0u}\left(b_{n}\tilde{\zeta}_{1},u_{-1}\right)\zeta_{1}^{2}\phi\left(\zeta_{1}\right)d\zeta_{1}\cd\int\ol u_{-1}^{'}T_{\t_{0}}^{'}\left(I-\t_{0}\t_{0}^{'}\right)p_{x}du_{-1}\\
=\  & b_{n}^{2}\cd\int\int\Dif_{u_{1}}^{2}h_{0u}\left(b_{n}\tilde{\zeta}_{1},u_{-1}\right)\zeta_{1}^{2}\phi\left(\zeta_{1}\right)d\zeta_{1}\cd\int\ol u_{-1}^{'}T_{\t_{0}}^{'}\left(I-\t_{0}\t_{0}^{'}\right)p_{x}du_{-1}
\end{align*}
since $\int_{-t}^{t}\zeta_{1}\phi\left(\zeta_{1}\right)d\zeta_{1}=0$
for all $t\in\R$. Moreover, noting that $\Dif_{u_{1}}^{2}h_{0u}\left(b_{n}\tilde{\zeta}_{1},u_{-1}\right)\to\Dif_{u_{1}}^{2}h_{0u}\left(0,u_{-1}\right)$
as $n\to\infty$, by the dominated convergence theorem, we have
\begin{align*}
b_{n}^{-2}A_{n,1} & =\int\int\Dif_{u_{1}}^{2}h_{0u}\left(b_{n}\tilde{\zeta}_{1},u_{-1}\right)\zeta_{1}^{2}\phi\left(\zeta_{1}\right)d\zeta_{1}\ol u_{-1}^{'}T_{\t_{0}}^{'}\left(I-\t_{0}\t_{0}^{'}\right)p_{x}du_{-1}\\
 & \to\int_{-\infty}^{\infty}\zeta_{1}^{2}\phi\left(\zeta_{1}\right)d\zeta_{1}\cd\int_{u_{1}=0}\Dif_{u_{1}}^{2}h_{0u}\left(0,u_{-1}\right)\ol u_{-1}^{'}p_{x}du_{-1}\cd T_{\t_{0}}^{'}\left(I-\t_{0}\t_{0}^{'}\right)\\
 & =\int_{u_{1}=0}\Dif_{u_{1}}^{2}h_{0u}\left(0,u_{-1}\right)\ol u_{-1}^{'}\ol u_{-1}^{'}p_{x}du_{-1}\cd T_{\t_{0}}^{'}\left(I-\t_{0}\t_{0}^{'}\right)\\
 & =:A_{1}
\end{align*}
and hence
\begin{equation}
A_{n,1}=A_{1}b_{n}^{2}+o\left(b_{n}^{2}\right).\label{eq:A_n1}
\end{equation}

Second, consider $A_{n,2}$ corresponding to \eqref{eq:A_2}:
\begin{align*}
A_{n,2} & =\left(I-\t_{0}\t_{0}^{'}\right)\left[\int h_{0}\left(x\right)\phi^{'}\left(\frac{\xi\left(x\right)}{b_{n}}\right)\frac{xx^{'}}{b_{n}^{2}}p_{x}dx\right]\left(I-\t_{0}\t_{0}^{'}\right)\\
 & =\left(I-\t_{0}\t_{0}^{'}\right)\left[\int h_{0}\left(x\right)\phi^{'}\left(\frac{x^{'}\t_{0}}{b_{n}}\right)\frac{xx^{'}}{b_{n}^{2}}p_{x}dx\right]\left(I-\t_{0}\t_{0}^{'}\right)\\
 & \quad+\left(I-\t_{0}\t_{0}^{'}\right)\left[\int h_{0}\left(x\right)\phi^{'}\left(\frac{\xi\left(x\right)}{b_{n}}\right)-\phi^{'}\left(\frac{x^{'}\t_{0}}{b_{n}}\right)\cd\frac{xx^{'}}{b_{n}^{2}}p_{x}dx\right]\left(I-\t_{0}\t_{0}^{'}\right)\\
 & =:A_{n,2,1}+A_{n,2,2}
\end{align*}
where 
\begin{align*}
A_{n,2,1} & =\left(I-\t_{0}\t_{0}^{'}\right)\left[\int h_{0}\left(x\right)\phi^{'}\left(\frac{x^{'}\t_{0}}{b_{n}}\right)\frac{xx^{'}}{b_{n}^{2}}p_{x}dx\right]\left(I-\t_{0}\t_{0}^{'}\right)\\
 & =\left(I-\t_{0}\t_{0}^{'}\right)\left[\int h_{0}\left(x\right)\phi^{'}\left(\frac{u_{1}}{b_{n}}\right)\frac{T_{\t_{0}}\ol u_{-1}\ol u_{-1}^{'}T_{\t_{0}}^{'}}{b_{n}^{2}}p_{x}du_{1}du_{-1}\right]\left(I-\t_{0}\t_{0}^{'}\right)\\
 & =\left(I-\t_{0}\t_{0}^{'}\right)T_{\t_{0}}\left[\int\Dif_{u_{1}}h_{0u}\left(b_{n}\tilde{\zeta}_{1},u_{-1}\right)b_{n}\zeta_{1}\phi^{'}\left(\zeta_{1}\right)\frac{\ol u_{-1}\ol u_{-1}^{'}}{b_{n}^{2}}b_{n}d\zeta_{1}du_{-1}\right]T_{\t_{0}}^{'}\left(I-\t_{0}\t_{0}^{'}\right)\\
 & =\left(I-\t_{0}\t_{0}^{'}\right)T_{\t_{0}}\left[\int\Dif_{u_{1}}h_{0u}\left(b_{n}\tilde{\zeta}_{1},u_{-1}\right)\zeta_{1}\phi^{'}\left(\zeta_{1}\right)\ol u_{-1}\ol u_{-1}^{'}d\zeta_{1}dz_{-1}\right]T_{\t_{0}}^{'}\left(I-\t_{0}\t_{0}^{'}\right)\\
 & \to\left(I-\t_{0}\t_{0}^{'}\right)T_{\t_{0}}\cd\int\zeta_{1}\phi^{'}\left(\zeta_{1}\right)d\zeta_{1}\cd\int_{u_{1}=0}\Dif_{u_{1}}h_{0u}\left(0,u_{-1}\right)\ol u_{-1}\ol u_{-1}^{'}d\zeta_{1}dz_{-1}T_{\t_{0}}^{'}\left(I-\t_{0}\t_{0}^{'}\right)\\
 & =-\left(I-\t_{0}\t_{0}^{'}\right)T_{\t_{0}}\left(\int_{u_{1}=0}\Dif_{u_{1}}h_{0u}\left(0,u_{-1}\right)\ol u_{-1}\ol u_{-1}^{'}p_{x}du_{-1}\right)T_{\t_{0}}^{'}\left(I-\t_{0}\t_{0}^{'}\right)\\
 & =:-V
\end{align*}
since
\begin{align*}
\int\zeta_{1}\phi^{'}\left(\zeta_{1}\right)d\zeta_{1} & =\int\zeta_{1}\frac{1}{\sqrt{2\pi}}\left(-\zeta_{1}\right)e^{-\frac{1}{2}\zeta_{1}^{2}}d\zeta_{1}=-\int\zeta_{1}^{2}\phi\left(\zeta_{1}\right)d\zeta_{1}=-1.
\end{align*}
Now for any $\t\in\S^{d-1}$ in a neighborhood of $\t_{0}$, define
\begin{align}
v\left(\t\right) & :=\left(0,v\left(\t\right)_{-1}^{'}\right)^{'}:=T_{\t_{0}}^{'}\left(I-\t_{0}\t_{0}^{'}\right)\left(\t-\t_{0}\right)\nonumber \\
V_{u_{-1}} & :=\int_{u_{1}=0}\Dif_{u_{1}}h_{0u}\left(0,u_{-1}\right)u_{-1}u_{-1}^{'}p_{x}du_{-1}\in\R^{\left(d-1\right)\times\left(d-1\right)}\label{eq:V_tang}\\
V_{\ol u_{-1}} & :=\left(\begin{array}{cc}
0 & {\bf 0}^{'}\\
{\bf 0} & V_{u_{-1}}
\end{array}\right)\label{eq:V_tang_d}
\end{align}
so that 
\[
V=\left(I-\t_{0}\t_{0}^{'}\right)T_{\t_{0}}V_{\ol u_{-1}}T_{\t_{0}}^{'}\left(I-\t_{0}\t_{0}^{'}\right).
\]
Since $\Dif_{u_{1}}h_{0u}\left(0,u_{-1}\right)$ is strictly positive
for any $u_{-1}$
\begin{align*}
\left(\t-\t_{0}\right)^{'}V\left(\t-\t_{0}\right)= & v\left(\t\right)^{'}V_{\ol u_{-1}}v\left(\t\right)=v\left(\t\right)_{-1}^{'}V_{u_{-1}}v\left(\t\right)_{-1}\\
\geq & \l_{\min}\left(V_{u_{-1}}\right)\norm{v\left(\t\right)_{-1}}^{2}=\l_{\min}\left(V_{u_{-1}}\right)\norm{v\left(\t\right)}^{2}
\end{align*}
since $V_{u_{-1}}$ is positive definite and thus $\l_{\min}\left(V_{u_{-1}}\right)>0$.
Furthermore, notice that
\begin{align*}
\norm{v\left(\t\right)}^{2} & =\left(\t-\t_{0}\right)^{'}\left(I-\t_{0}\t_{0}^{'}\right)T_{\t_{0}}T_{\t_{0}}^{'}\left(I-\t_{0}\t_{0}^{'}\right)\left(\t-\t_{0}\right)\\
 & =\left(\t-\t_{0}\right)^{'}\left(I-\t_{0}\t_{0}^{'}\right)I\left(I-\t_{0}\t_{0}^{'}\right)\left(\t-\t_{0}\right)\\
 & =\norm{\left(I-\t_{0}\t_{0}^{'}\right)\left(\t-\t_{0}\right)}^{2}=\norm{\left(I-\t_{0}\t_{0}^{'}\right)\t}^{2}\\
 & =\left(1-\t_{0}^{'}\t\right)\left(1+\t_{0}^{'}\t\right)\\
 & =\norm{\t-\t_{0}}^{2}\left(1-\frac{1}{4}\norm{\t-\t_{0}}^{2}\right)\\
 & \geq\frac{3}{4}\norm{\t-\t_{0}}^{2}\quad\text{for }\norm{\t-\t_{0}}\leq1
\end{align*}
and hence, in a neighborhood of $\t_{0},$we have
\begin{equation}
\left(\t-\t_{0}\right)^{'}V\left(\t-\t_{0}\right)\geq\frac{3}{4}\l_{\min}\left(V_{u_{-1}}\right)\norm{\t-\t_{0}}^{2}=C\norm{\t-\t_{0}}^{2}.\label{eq:Quad_info}
\end{equation}
Now, we turn to $A_{n,2}$ and write $\d:=\norm{\t-\t_{0}}$, then
\begin{align*}
\left|A_{n,2,2}\right| & \leq\left(I-\t_{0}\t_{0}^{'}\right)\int\left|h_{0}\left(x\right)\right|\left|\phi^{'}\left(\frac{\xi\left(x\right)}{b_{n}}\right)-\phi^{'}\left(\frac{x^{'}\t_{0}}{b_{n}}\right)\right|\cd\frac{xx^{'}}{b_{n}^{2}}p_{x}dx\left(I-\t_{0}\t_{0}^{'}\right)\\
 & \leq\left(I-\t_{0}\t_{0}^{'}\right)\int\left|h_{0}\left(x\right)\right|\ol{\phi_{n,\d}^{''}}\left(x^{'}\t_{0}\right)\frac{\left|x^{'}\t-x^{'}\t_{0}\right|}{b_{n}}\cd\frac{xx^{'}}{b_{n}^{2}}p_{x}dx\left(I-\t_{0}\t_{0}^{'}\right)\\
 & \leq\left(I-\t_{0}\t_{0}^{'}\right)\int\left|h_{0}\left(x\right)\right|\ol{\phi_{n,\d}^{''}}\left(x^{'}\t_{0}\right)\frac{\d}{b_{n}}\cd\frac{xx^{'}}{b_{n}^{2}}p_{x}dx\left(I-\t_{0}\t_{0}^{'}\right)
\end{align*}
where
\begin{align*}
\ol{\phi_{n,\d}^{''}}\left(x^{'}\t_{0}\right) & :=\ind\left\{ \frac{\left|x^{'}\t_{0}\right|-\d}{b_{n}}\leq\sqrt{3}\right\} +\left|\phi^{''}\left(\frac{\left|x^{'}\t_{0}\right|-\d}{b_{n}}\right)\right|\ind\left\{ \frac{\left|x^{'}\t_{0}\right|-\d}{b_{n}}>\sqrt{3}\right\} 
\end{align*}
guarantees that $\left|\phi^{''}\left(t\right)\right|\leq\ol{\phi_{n,\d}^{''}}\left(x^{'}\t_{0}\right)$
for any
\[
t\in\left[\frac{x^{'}\t_{0}-\d}{b_{n}},\frac{x^{'}\t_{0}+\d}{b_{n}}\right]
\]
since $\phi^{''}\left(\left|t\right|\right)\leq1$ and $\phi^{''}\left(\left|t\right|\right)$
is decreasing in $\left|t\right|$ for $\left|t\right|\geq\sqrt{3}$.
Hence,
\[
\left|\phi^{'}\left(\frac{\xi\left(x\right)}{b_{n}}\right)-\phi^{'}\left(\frac{x^{'}\t_{0}}{b_{n}}\right)\right|=\left|\phi^{''}\left(\frac{\tilde{\xi}\left(x\right)}{b_{n}}\right)\right|\frac{\left|x^{'}\t-x^{'}\t_{0}\right|}{b_{n}}\leq\ol{\phi_{n,\d}^{''}}\left(x^{'}\t_{0}\right)\frac{\left|x^{'}\t-x^{'}\t_{0}\right|}{b_{n}}
\]
since $\tilde{\xi}\left(x\right)$ lies between $\xi\left(x\right)$
and $x^{'}\t_{0}$, while $\xi\left(x\right)\in\left[x^{'}\t_{0}-\d,x^{'}\t_{0}+\d\right]$.
Then,
\begin{align*}
\left|A_{n,2,2}\right| & \leq\left(I-\t_{0}\t_{0}^{'}\right)\int\left|h_{0}\left(x\right)\right|\left|\ol{\phi_{n,\d}^{''}}\left(x^{'}\t_{0}\right)\right|\frac{\d}{b_{n}}\cd\frac{xx^{'}}{b_{n}^{2}}p_{x}dx\left(I-\t_{0}\t_{0}^{'}\right)\\
 & =\left(I-\t_{0}\t_{0}^{'}\right)\int\left|h_{0}\left(x\right)\right|\left|\ol{\phi_{n,\d}^{''}}\left(x^{'}\t_{0}\right)\right|\frac{\d}{b_{n}}\cd\frac{xx^{'}}{b_{n}^{2}}p_{x}dx\left(I-\t_{0}\t_{0}^{'}\right)\\
 & =\frac{\d}{b_{n}^{3}}\left(I-\t_{0}\t_{0}^{'}\right)\int\left[\int\Dif_{u_{1}}h_{0u}\left(\tilde{u}_{1},u_{-1}\right)\left|u_{1}\right|\ol{\phi_{n,\d}^{''}}\left(u_{1}\right)du_{1}\right]T_{\t_{0}}\ol u_{-1}\ol u_{-1}^{'}T_{\t_{0}}^{'}p_{x}du_{-1}\left(I-\t_{0}\t_{0}^{'}\right)
\end{align*}
where
\begin{align*}
\  & \int\Dif_{u_{1}}h_{0u}\left(\tilde{u}_{1},u_{-1}\right)\left|u_{1}\right|\ol{\phi_{n,\d}^{''}}\left(u_{1}\right)du_{1}\\
=\  & \int\Dif_{u_{1}}h_{0u}\left(\tilde{u}_{1},u_{-1}\right)\ind\left\{ \frac{\left|u_{1}\right|-\d}{b_{n}}\leq\sqrt{3}\right\} \left|u_{1}\right|du_{1}\\
 & +\int\Dif_{u_{1}}h_{0u}\left(\tilde{u}_{1},u_{-1}\right)\left|\phi^{''}\left(\frac{\left|u_{1}\right|-\d}{b_{n}}\right)\right|\ind\left\{ \frac{\left|u_{1}\right|-\d}{b_{n}}>\sqrt{3}\right\} \left|u_{1}\right|du_{1}\\
=\  & 2\int_{0}^{\d+\sqrt{3}b_{n}}\Dif_{u_{1}}h_{0u}\left(\tilde{u}_{1},u_{-1}\right)u_{1}du_{1}+2\int_{\d+\sqrt{3}b_{n}}^{1}\Dif_{u_{1}}h_{0u}\left(\tilde{u}_{1},u_{-1}\right)\left|\phi^{''}\left(\frac{u_{1}-\d}{b_{n}}\right)\right|u_{1}du_{1}\\
\leq\  & M\left(\d+\sqrt{3}b_{n}\right)^{2}+2M\int_{\sqrt{3}}^{b_{n}^{-1}\left(1-\d\right)}\left|\phi^{''}\left(\zeta_{1}\right)\right|\left(b_{n}\zeta_{1}+\d\right)d\left(b_{n}\zeta_{1}+\d\right)\\
=\  & M\left(\d+\sqrt{3}b_{n}\right)^{2}+2Mb_{n}^{2}\int_{\sqrt{3}}^{\infty}\left|\phi^{''}\left(\zeta_{1}\right)\right|\zeta_{1}d\zeta_{1}+2b_{n}\d\int_{\sqrt{3}}^{\infty}\left|\phi^{''}\left(\zeta_{1}\right)\right|d\zeta_{1}\\
\leq\  & M^{'}\left(b_{n}^{2}+\d^{2}\right)
\end{align*}
and hence
\[
\left|A_{n,2,2}\right|\leq M^{'}\frac{\d}{b_{n}^{3}}\left(b_{n}^{2}+\d^{2}\right)=M^{'}b_{n}^{-1}\d\left(1+b_{n}^{-2}\d^{-2}\right).
\]
Combining $A_{n,2,1}$and $A_{n,2,2}$ we have
\begin{equation}
A_{n,2}=-A_{2}+o\left(1\right)+O\left(b_{n}^{-1}\d\left(1+b_{n}^{-2}\d^{-2}\right)\right)\label{eq:A_n2}
\end{equation}
We will show that $O\left(b_{n}^{-1}\d\left(1+b_{n}^{-2}\d^{-2}\right)\right)$
is irrelevant later.

Lastly, consider $A_{n,3}$ corresponding to \eqref{eq:A_3}:
\begin{align*}
A_{n,3} & =\frac{1}{2}\int h_{0}\left(x\right)\phi\left(\frac{\xi\left(x\right)}{b_{n}}\right)\frac{x^{'}\t_{0}}{b_{n}}p_{x}dx.\\
 & =\frac{1}{2}\int h_{0}\left(x\right)\phi\left(\frac{x^{'}\t_{0}}{b_{n}}\right)\frac{x^{'}\t_{0}}{b_{n}}p_{x}dx\\
 & \quad+\frac{1}{2}\int h_{0}\left(x\right)\left[\phi\left(\frac{\xi\left(x\right)}{b_{n}}\right)-\phi\left(\frac{x^{'}\t_{0}}{b_{n}}\right)\right]\frac{x^{'}\t_{0}}{b_{n}}p_{x}dx\\
 & =:A_{n,3,1}+A_{n,3,2}
\end{align*}
For $A_{n,3,1}$, we have
\begin{align*}
A_{n,3,1}= & \frac{1}{2}\int h_{0}\left(x\right)\phi\left(\frac{x^{'}\t_{0}}{b_{n}}\right)\frac{x^{'}\t_{0}}{b_{n}}p_{x}dx\\
= & \frac{1}{2}\int\Dif_{u_{1}}h_{0u}\left(\tilde{u}_{1},u_{-1}\right)u_{1}\phi\left(\frac{u_{1}}{b_{n}}\right)\frac{u_{1}}{b_{n}}p_{x}du_{1}du_{-1}\\
= & \frac{1}{2}\int\Dif_{u_{1}}h_{0u}\left(b_{n}\tilde{\zeta}_{1},u_{-1}\right)b_{n}\zeta_{1}\phi\left(\zeta_{1}\right)\zeta_{1}p_{x}b_{n}d\zeta_{1}du_{-1}
\end{align*}
so that 
\[
b_{n}^{-2}A_{n,3,1}\to\frac{1}{2}\int\Dif_{u_{1}}h_{0u}\left(0,u_{-1}\right)\zeta_{1}^{2}\phi\left(\zeta_{1}\right)d\zeta_{1}\int_{u_{1}=0}p_{x}du_{-1}:=A_{3}
\]
For $A_{n,3,2}$, writing $\d=\norm{\t-\t_{0}}$, we have
\begin{align*}
\left|A_{n,3,2}\right| & \leq\frac{1}{2}\int\left|h_{0}\left(x\right)\right|\left|\phi^{'}\left(\frac{\tilde{\xi}\left(x\right)}{b_{n}}\right)\right|\frac{\left|x^{'}\t-x^{'}\t_{0}\right|}{b_{n}}\frac{\left|x^{'}\t_{0}\right|}{b_{n}}p_{x}dx\\
 & \leq\frac{\d}{2b_{n}^{2}}\int\left|h_{0}\left(x\right)\right|\ol{\phi^{'}}_{n,\d}\left(x^{'}\t_{0}\right)\left|x^{'}\t_{0}\right|p_{x}dx
\end{align*}
with 
\begin{align*}
\ol{\phi^{'}}_{n,\d}\left(x^{'}\t_{0}\right): & =e^{-\frac{1}{2}}\ind\left\{ \frac{\left|x^{'}\t_{0}\right|-\d}{b_{n}}\leq1\right\} +\left|\phi^{'}\left(\frac{\left|x^{'}\t_{0}\right|-\d}{b_{n}}\right)\right|\ind\left\{ \frac{\left|x^{'}\t_{0}\right|-\d}{b_{n}}>1\right\} 
\end{align*}
since $\left|\phi^{'}\left(t\right)\right|\leq\phi^{'}\left(1\right)=e^{-\frac{1}{2}}$
and $\left|\phi^{'}\left(t\right)\right|$ is increasing in $\left|t\right|$
for $0<\left|t\right|<1$ and then decreasing in $\left|t\right|$
for $\left|t\right|>1$. Then,
\begin{align*}
\left|A_{n,3,2}\right| & \leq\frac{\d}{2b_{n}^{2}}\int\Dif_{u_{1}}h_{0u}\left(\tilde{u}_{1},u_{-1}\right)u_{1}^{2}\ol{\phi^{'}}_{n,\d}\left(u_{1}\right)du_{1}p_{x}du_{-1}\\
 & \leq\frac{\d}{2b_{n}}M\int\ind\left\{ \left|u_{1}\right|\leq b_{n}+\d\right\} u_{1}^{2}du_{1}p_{x}du_{-1}\\
 & \quad+\frac{\d}{2b_{n}}M\int\phi^{'}\left(\frac{u_{1}-\d}{b_{n}}\right)\ind\left\{ \left|u_{1}\right|>b_{n}+\d\right\} u_{1}^{2}du_{1}p_{x}du_{-1}\\
 & =\frac{\d}{b_{n}}M\int\int_{0}^{b_{n}+\d}u_{1}^{2}du_{1}p_{x}du_{-1}+\frac{\d}{b_{n}}M\int\int_{b_{n}+\d}^{1}\left|\phi^{'}\left(\frac{u_{1}-\d}{b_{n}}\right)\right|u_{1}^{2}du_{1}p_{x}du_{-1}\\
 & \leq\frac{\d}{b_{n}}M\left(b_{n}+\d\right)^{3}\int p_{x}du_{-1}+\frac{\d}{b_{n}}b_{n}^{3}M\int\int_{1}^{b_{n}^{-1}\left(1-\d\right)}\left|\phi^{'}\left(\zeta_{1}\right)\right|\left(b_{n}\zeta_{1}+\d\right)^{2}d\zeta_{1}p_{x}du_{-1}\\
 & =M^{'}\left(b_{n}+\d\right)^{3}+\d M\int\int_{1}^{\infty}\left|\phi^{'}\left(\zeta_{1}\right)\right|\left(b_{n}^{2}\zeta_{1}^{2}+2b_{n}\d\zeta_{1}+\d^{2}\right)d\zeta_{1}p_{x}du_{-1}\\
 & \leq M^{''}\left[\left(b_{n}+\d\right)^{3}+\d\left(b_{n}+\d\right)^{2}\right]\\
 & =M^{'''}\left(b_{n}+\d\right)^{3}
\end{align*}
Combing $A_{n,3,1}$ and $A_{n,3,2}$ we have
\begin{equation}
A_{n,3}=A_{n,3,1}+A_{n,3,2}=A_{3}b_{n}^{2}+o\left(b_{n}^{2}\right)+O\left(\left(b_{n}+\d\right)^{3}\right).\label{eq:A_n3}
\end{equation}

Plugging the results in \eqref{eq:A_n1}\eqref{eq:A_n2}\eqref{eq:A_n3}
about $A_{n,1},A_{n,2},A_{n,3}$ into \eqref{eq:A_ALL}, we deduce,
with $\d:=\norm{\t-\t_{0}}$,
\begin{align}
 & P\left(\psi_{n,\t}\left(z\right)-\psi_{n,\t_{0}}\left(z\right)\right)\nonumber \\
=\  & A_{n,1}\left(\t-\t_{0}\right)+\left(\t-\t_{0}\right)^{'}A_{n,2}\left(\t-\t_{0}\right)+A_{n,3}\left(\t-\t_{0}\right)^{'}\left(I_{d}-\t_{0}\t_{0}^{'}\right)\left(\t-\t_{0}\right)\\
=\  & b_{n}^{2}A_{1}\left(\t-\t_{0}\right)+o\left(\d b_{n}^{2}\right)\nonumber \\
 & -\left(\t-\t_{0}\right)^{'}V\left(\t-\t_{0}\right)+o\left(\d^{2}\right)+O\left(b_{n}^{-1}\d^{3}\left(1+b_{n}^{-2}\d^{-2}\right)\right)\nonumber \\
 & +A_{3}b_{n}^{2}\d^{2}+o\left(b_{n}^{2}\d^{2}\right)+O\left(\d^{2}\left(b_{n}+\d\right)^{3}\right)\label{eq:P_psi_dif}\\
=\  & -\left(\t-\t_{0}\right)^{'}V\left(\t-\t_{0}\right)+b_{n}^{2}A_{1}\left(\t-\t_{0}\right)+o\left(\d^{2}\right)+o\left(b_{n}^{2}\d\right)+O\left(b_{n}^{-1}\d^{3}\left(1+b_{n}^{-2}\d^{-2}\right)\right)\nonumber 
\end{align}
\end{proof}

\subsection{\label{subsec:pf_Thm_Rate}Proof of Theorem \ref{thm:Thm_Bin_Rate}}
\begin{proof}
For consistency, we observe that
\[
\sup_{\t\in\T}\sup_{h\in{\cal H}}\left|\P_{n}g_{\t,h}-Pg_{\t,h}\right|=o_{p}\left(1\right).
\]
since $\cG$ is Gilvenko-Cantelli given Lemma \ref{lem:J_uni_finite}.
Moreover, 
\begin{align*}
\sup_{\t\in\T}\sup_{\norm{h-h_{0}}_{\infty}\leq\e}\left|Pg_{\t,h}-Pg_{\t,h_{0}}\right| & \leq P\left(\left|h-h_{0}\right|\right)\leq\e\to0\quad\text{as }\d\to0.
\end{align*}
As $\norm{\hat{h}-h_{0}}_{\infty}=o_{p}\left(1\right)$ and $\hat{h}\in{\cal H}$
with probability approaching 1 by Assumption \ref{assu:FirstStageRate},
we conclude by Theorem 1 of \citet[DvK thereafter]{delsol2020semiparametric}
that $\norm{\hat{\t}-\t_{0}}=o_{p}\left(1\right).$

For the rate of convergence, we apply Theorem 2 of DvK by verifying
their Conditions B1-B4.

B1 directly follows from the consistency of $\hat{\t}$ and the assumption
that$\norm{\hat{h}-h_{0}}_{\infty}=O_{p}\left(a_{n}\right)$.

For their Condition B2, observe that
\[
\GG_{n}\left(g_{\t,h}-g_{\t_{0},h}\right)=\GG_{n}\left(g_{\t,h_{0}}-g_{\t_{0},h_{0}}\right)+\GG_{n}\left(g_{\t,h}-g_{\t_{0},h}-g_{\t,h_{0}}+g_{\t_{0},h_{0}}\right)
\]
and thus, by \eqref{lem:Term1}and \eqref{lem:Term2}, 
\[
P\sup_{\norm{\t-\t_{0}}\leq\d,\norm{h-h_{0}}_{\infty}\leq Ka_{n}}\left|\GG_{n}\left(g_{\t,h}-g_{\t_{0},h}\right)\right|\leq M_{1}\d^{\frac{3}{2}}+M_{2}a_{n}\sqrt{\d}.
\]
so that $\Phi_{n}\left(\d\right)=\d^{\frac{3}{2}}+a_{n}\sqrt{\d}$
in the notation of DvK.

By Lemma \eqref{lem:Term_3}(i), for any $M<\infty$, we have
\begin{align*}
 & \P\left(\GG_{n}\left(\psi_{n,\t}-\psi_{n,\t_{0}}\right)>Mb_{n}^{-1}\left(b_{n}+\norm{\t-\t_{0}}\right)^{\frac{1}{2}}\norm{\t-\t_{0}}\right)\\
\leq\  & \P\left(\sup_{\norm{\t-\t_{0}}\leq\d,\norm{h-h_{0}}_{\infty}\leq Ka_{n}}\left|\GG_{n}\left(\psi_{n,\t}-\psi_{n,\t_{0}}\right)\right|>Mb_{n}^{-1}\left(b_{n}+\norm{\t-\t_{0}}\right)^{\frac{1}{2}}\norm{\t-\t_{0}}\right)\\
\leq\  & \frac{P\sup_{\norm{\t-\t_{0}}\leq\d,\norm{h-h_{0}}_{\infty}\leq Ka_{n}}\left|\GG_{n}\left(\psi_{n,\t}-\psi_{n,\t_{0}}\right)\right|}{Mb_{n}^{-1}\left(b_{n}+\norm{\t-\t_{0}}\right)^{\frac{1}{2}}\norm{\t-\t_{0}}}\quad\text{ by Markov Inequality},\\
\leq\  & \frac{M_{3}b_{n}^{-1}\left(b_{n}+\norm{\t-\t_{0}}\right)^{\frac{1}{2}}\norm{\t-\t_{0}}}{Mb_{n}^{-1}\left(b_{n}+\norm{\t-\t_{0}}\right)^{\frac{1}{2}}\norm{\t-\t_{0}}}=\frac{M_{3}}{M}\to0\quad\text{as }M\to\infty.
\end{align*}
Hence, combining with \eqref{lem:Term_3}(ii), we have
\begin{align}
P\left(g_{\t,\hat{h}}-g_{\t_{0},\hat{h}}\right) & =\frac{1}{\sqrt{n}}\GG_{n}\left(\psi_{n,\t}-\psi_{n,\t_{0}}\right)+P\left(\psi_{n,\t}-\psi_{n,\t_{0}}\right),\nonumber \\
 & \leq R_{n}\frac{1}{\sqrt{n}}b_{n}^{-1}\left(b_{n}+\norm{\t-\t_{0}}\right)^{\frac{1}{2}}\norm{\t-\t_{0}}-C\norm{\t-\t_{0}}^{2}+M_{4}b_{n}^{2}\norm{\t-\t_{0}}\nonumber \\
 & \quad+M_{5}b_{n}^{-1}\norm{\t-\t_{0}}^{3}\left(1+b_{n}^{-2}\norm{\t-\t_{0}}^{-2}\right)\label{eq:App_p_dif}
\end{align}
with $R_{n}=O_{p}\left(1\right)$.

Letting $\norm{\hat{\t}-\t_{0}}:=O_{p}\left(\d_{n}\right)$, we seek
to find the smallest $\d_{n}$ that verifies Condition B3 and B4 in
DvK\footnote{$\d_{n}=r_{n}^{-1}$ in DvK's notation.}. First, we set
the bandwidth $b_{n}$ to be such that
\[
\frac{1}{\sqrt{nb_{n}}}=b_{n}^{2}\quad\iff b_{n}=n^{-\frac{1}{5}},
\]
which exactly corresponds to the optimal choice of bandwidth in \citet*{horowitz1992smoothed}.
This ensures that the second and the third terms in \eqref{eq:App_p_dif}
are of the same order of magnitude
\[
\frac{1}{\sqrt{n}}b_{n}^{-1}\d_{n}\left(\d_{n}+b_{n}\right)^{\frac{1}{2}}\sim b_{n}^{2}\d
\]
provided that $\d_{n}=o\left(b_{n}\right)$. Setting $\d_{n}\sim n^{-2/5}=o\left(b_{n}\right)$,
we see that
\[
b_{n}^{2}\sim\frac{1}{\sqrt{n}}b_{n}^{-1}\left(\d_{n}+b_{n}\right)^{\frac{1}{2}}\sim n^{-\frac{2}{5}}=O\left(\d_{n}\right),
\]
and moreover $b_{n}^{-1}\d_{n}^{3}\left(1+b_{n}^{-2}\d_{n}^{-2}\right)=o\left(1\right)\d_{n}^{2}.$
Hence, Condition B3 of DvK is verified. Lastly, for Condition B4,
we see that
\begin{align*}
\frac{1}{\d_{n}^{2}}\Phi_{n}\left(\d_{n}\right) & =\frac{1}{\d_{n}^{2}}\left(\d_{n}^{\frac{3}{2}}+a_{n}\sqrt{\d_{n}}\right)=\left(\d_{n}^{-\frac{1}{2}}+a_{n}\d_{n}^{-\frac{3}{2}}\right)\sim n^{\frac{1}{5}}+a_{n}n^{\frac{3}{5}},
\end{align*}
which is $O\left(\sqrt{n}\right)$ provided that $a_{n}=O\left(n^{-1/10}\right)$.
Since $a_{n}=\left(nb_{n}^{d}/\log n\right)^{-\frac{1}{2}}+b_{n}^{2}$
for the Nadaraya-Watson estimator, with $b_{n}\sim n^{-\frac{1}{5}}$
we have
\[
a_{n}=n^{-\frac{1}{2}+\frac{d}{10}}\sqrt{\log n}=O_{p}\left(n^{-\frac{1}{10}}\right)\quad\iff\quad d<4.
\]

Hence, for $d<4$, the impact of the first-stage estimation through
$a_{n}$ is negligible with $b_{n}\sim n^{-\frac{1}{5}}$, and thus
\[
\norm{\hat{\t}-\t_{0}}=O_{p}\left(n^{-2/5}\right).
\]

For $d\geq4$, the $n^{-2/5}$-rate is unattainable due to the higher
dimensionality ($d$) of the first-stage kernel regression. Optimally,
we set $b_{n}$ so as to minimize
\begin{equation}
\max\left\{ n^{-\frac{1}{3}}\left(nb_{n}^{d}/\log n\right)^{-\frac{1}{2}\cd\frac{2}{3}},\ b_{n}^{2},\ \left(nb_{n}\right)^{-\frac{1}{2}}\right\} ,\label{eq:bin_3_rates}
\end{equation}
which is solved by setting $b_{n}^{2}\sim n^{-\frac{1}{3}}\left(nb_{n}^{d}/\log n\right)^{-\frac{1}{2}\cd\frac{2}{3}}$
(up to the $\log n$ factor) with
\[
b_{n}\sim n^{-\frac{2}{d+6}}
\]
giving an optimal rate of convergence at 
\[
\d_{n}=n^{-\frac{4}{d+6}}\left(\log n\right)^{\frac{1}{3}},
\]
provided that the first-stage estimator $\hat{h}$ is still consistent
with $a_{n}=\left(nb_{n}^{d}/\log n\right)^{-1/2}\to0$, or 
\[
b_{n}\sim n^{-\frac{2}{d+6}}>>n^{-\frac{1}{d}},
\]
which is possible if $d<6$.

For $d\geq6$, $b_{n}^{2}$ becomes the dominant term in \eqref{eq:bin_3_rates},
which should be minimized subject to the constraint $a_{n}=\left(nb_{n}^{d}/\log n\right)^{-1/2}\to0$.
This can be roughly achieved by setting, say, $b_{n}\sim\left(n^{-1}\log^{2}n\right)^{\frac{1}{d}}$,
in which case $a_{n}=1/\log n\to0$ and 
\[
\norm{\hat{\t}-\t_{0}}=O_{p}\left(b_{n}^{2}\right)=n^{-\frac{2}{d}}\left(\log n\right)^{\frac{4}{d}}.
\]
\end{proof}

\subsection{Proof of Theorem \ref{thm:Thm_Bin_Dist}(i)}
\begin{proof}
For $d<4$, define $\mathbb{M}_{n}\left(\t\right):=\P_{n}g_{\t,\hat{h}}$
and $\mathbb{M}\left(\t\right):=-\left(\t-\t_{0}\right)^{'}V\left(\t-\t_{0}\right)$
so that
\begin{align*}
 & \d_{n}^{-1}\left[\left(\mathbb{M}_{n}\left(\tilde{\t}_{n}\right)-\mathbb{M}\left(\tilde{\t}_{n}\right)\right)-\left(\mathbb{M}_{n}\left(\t_{0}\right)-\mathbb{M}\left(\t_{0}\right)\right)\right]\\
= & \frac{1}{\sqrt{n}\d_{n}}\GG_{n}\left(g_{\tilde{\t}_{n},\hat{h}}-g_{\t_{0},\hat{h}}\right)+\frac{1}{\d_{n}}\left[P\left(g_{\tilde{\t}_{n},\hat{h}}-g_{\t_{0},\hat{h}}\right)-\mathbb{M}\left(\t\right)\right]\\
=: & B_{n,1}+B_{n,2}
\end{align*}
for any $\tilde{\t}_{n}$ s.t. $\norm{\tilde{\t}_{n}-\t_{0}}=O_{p}\left(\d_{n}\right)=O_{p}\left(n^{-2/5}\right)$.
With the optimal choice of bandwidth $b_{n}^{-1/5}$, we know $a_{n}=n^{-\frac{1}{2}+\frac{d}{10}}\sqrt{\log n}=o\left(n^{-\frac{1}{10}}\right)$
and thus by Lemma \ref{lem:Term1} and \ref{lem:Term2}, we have
\begin{align*}
 & P\sup_{\norm{\hat{h}-h_{0}}\leq Ka_{n}}\frac{1}{\sqrt{n}\d_{n}}\left|\GG_{n}\left(g_{\tilde{\t}_{n},\hat{h}}-g_{\t_{0},\hat{h}}\right)\right|\\
\leq\  & M\frac{1}{\sqrt{n}\d_{n}}\left(\d_{n}\sqrt{\d_{n}}+a_{n}\sqrt{\d_{n}}\right)=O\left(n^{-\frac{1}{2}}\d_{n}+n^{-\frac{1}{2}}a_{n}\d_{n}^{-\frac{1}{2}}\right)\\
=\  & o\left(\d_{n}\right)+o\left(n^{-\frac{1}{2}}n^{-\frac{1}{10}}\left(n^{-\frac{2}{5}}\right)^{-\frac{3}{2}}\right)\d_{n}=o\left(\d_{n}\right)+o\left(1\right)\d_{n}=o\left(\d_{n}\right)
\end{align*}
Hence, 
\[
B_{n,1}=o_{p}\left(\d_{n}\right).
\]

Now, recall that
\begin{align*}
B_{n,2} & =\frac{1}{\d_{n}}\left[P\left(g_{\tilde{\t}_{n},\hat{h}}-g_{\t_{0},\hat{h}}\right)-\mathbb{M}\left(\t\right)\right]\\
 & =\frac{1}{\sqrt{n}\d_{n}}\GG_{n}\left(\psi_{n,\tilde{\t}_{n}}-\psi_{n,\t_{0}}\right)+\frac{1}{\d_{n}}\left[P\left(\psi_{n,\tilde{\t}_{n}}-\psi_{n,\t_{0}}\right)-\mathbb{M}\left(\t\right)\right]\\
 & =:B_{n,2,1}+B_{n,2,2}
\end{align*}

First, we analyze $B_{n,2,1}$:
\begin{align*}
B_{n,2,1}=\  & \frac{1}{\sqrt{n}\d_{n}}\GG_{n}\left(\psi_{n,\tilde{\t}_{n}}-\psi_{n,\t_{0}}\right)\\
=\  & \frac{1}{n\d_{n}}\sum_{i=1}^{n}\left(\psi_{n,\tilde{\t}_{n}}\left(Z_{i}\right)-\psi_{n,\t_{0}}\left(Z_{i}\right)-P\left(\psi_{n,\tilde{\t}_{n}}-\psi_{n,\t_{0}}\right)\right)\\
=\  & \frac{1}{n\d_{n}}\sum_{i=1}^{n}\left[\left(y_{i}-\frac{1}{2}\right)\phi\left(\frac{X_{i}^{'}\t_{0}}{b_{n}}\right)\frac{X_{i}^{'}}{b_{n}}\left(I-\t_{0}\t_{0}^{'}\right)-A_{n,1}\right]\left(I-\t_{0}\t_{0}^{'}\right)\left(\tilde{\t}_{n}-\t_{0}\right)+R_{n,\t}\\
=\  & Z_{n}^{'}\left(I-\t_{0}\t_{0}^{'}\right)\left(\tilde{\t}_{n}-\t_{0}\right)+R_{n,\t}
\end{align*}
with 
\begin{align*}
Z_{n}^{'} & :=\frac{1}{n\d_{n}}\sum_{i=1}^{n}\left[\left(y_{i}-\frac{1}{2}\right)\phi\left(\frac{X_{i}^{'}\t_{0}}{b_{n}}\right)\frac{X_{i}^{'}}{b_{n}}\left(I-\t_{0}\t_{0}^{'}\right)-A_{n,1}\right]\\
 & =\frac{1}{n\d_{n}}\sum_{i=1}^{n}\left[\left(y_{i}-\frac{1}{2}\right)\phi\left(\frac{X_{i}^{'}\t_{0}}{b_{n}}\right)\frac{X_{i}^{'}}{b_{n}}\left(I-\t_{0}\t_{0}^{'}\right)-A_{n,1}\right]
\end{align*}
and 
\begin{align*}
R_{n,\t}:= & \left(\tilde{\t}_{n}-\t_{0}\right)^{'}\frac{1}{n\d_{n}}\sum_{i=1}^{n}\left[\frac{1}{2}\left(y_{i}-\frac{1}{2}\right)\left(I_{d}-\t_{0}\t_{0}^{'}\right)\phi^{'}\left(\frac{\xi\left(X_{i}\right)}{b_{n}}\right)\cd\frac{X_{i}X_{i}^{'}}{b_{n}^{2}}\left(I_{d}-\t_{0}\t_{0}^{'}\right)-A_{n,2}\right]\left(\tilde{\t}_{n}-\t_{0}\right)\\
 & -\frac{1}{n\d_{n}}\sum_{i=1}^{n}\left[\frac{1}{2}\left(y_{i}-\frac{1}{2}\right)\phi\left(\frac{\xi\left(X_{i}\right)}{b_{n}}\right)\frac{X_{i}^{'}}{b_{n}}\t_{0}-A_{n,3}\right]\cd\left(\tilde{\t}_{n}-\t_{0}\right)^{'}\left(I_{d}-\t_{0}\t_{0}^{'}\right)\left(\tilde{\t}_{n}-\t_{0}\right)
\end{align*}
Now, since $\E\left[Z_{n}\right]={\bf 0}$ and 
\begin{align*}
\E\left[Z_{n}Z_{n}^{'}\right] & =\frac{1}{n\d_{n}^{2}}\int\phi^{2}\left(\frac{x^{'}\t_{0}}{b_{n}}\right)\left(I-\t_{0}\t_{0}^{'}\right)\frac{xx^{'}}{b_{n}^{2}}\left(I-\t_{0}\t_{0}^{'}\right)p_{x}dx\\
 & =\frac{1}{nb_{n}^{2}\d_{n}^{2}}\int\phi^{2}\left(\frac{x^{'}\t_{0}}{b_{n}}\right)\left(I-\t_{0}\t_{0}^{'}\right)xx^{'}\left(I-\t_{0}\t_{0}^{'}\right)p_{x}dx\\
 & =\frac{1}{nb_{n}\d_{n}^{2}}\int\phi^{2}\left(\zeta_{1}\right)\left(I-\t_{0}\t_{0}^{'}\right)T_{\t_{0}}\ol u_{-1}\ol u_{-1}^{'}T_{\t_{0}}^{'}\left(I-\t_{0}\t_{0}^{'}\right)p_{x}d\zeta du_{-1}\\
 & =\int\phi^{2}\left(\zeta_{1}\right)d\zeta_{1}\left(I-\t_{0}\t_{0}^{'}\right)T_{\t_{0}}\ol u_{-1}\ol u_{-1}^{'}T_{\t_{0}}^{'}\left(I-\t_{0}\t_{0}^{'}\right)p_{x}du_{-1}\\
 & =O\left(1\right)
\end{align*}
so $Z_{n}=O_{p}\left(1\right)$. Furthermore, the Lindberg condition
can be verified as
\begin{align*}
 & \frac{1}{n\d_{n}^{2}}\int\phi^{2}\left(\frac{x^{'}\t_{0}}{b_{n}}\right)\left(I-\t_{0}\t_{0}^{'}\right)\frac{xx^{'}}{b_{n}^{2}}\left(I-\t_{0}\t_{0}^{'}\right)\cd\ind\left\{ \frac{1}{n^{2}\d_{n}^{2}b_{n}^{2}}\phi^{2}\left(\frac{x^{'}\t_{0}}{b_{n}}\right)x^{'}\left(I-\t_{0}\t_{0}^{'}\right)x\geq\e^{2}\right\} p_{x}dx\\
\leq\  & \frac{1}{nb_{n}\d_{n}^{2}}\int\phi^{2}\left(\zeta_{1}\right)\left(I-\t_{0}\t_{0}^{'}\right)T_{\t_{0}}\ol u_{-1}\ol u_{-1}^{'}T_{\t_{0}}^{'}\left(I-\t_{0}\t_{0}^{'}\right)\cd\ind\left\{ \frac{1}{n\d_{n}b_{n}}\phi\left(\zeta_{1}\right)\geq\e\right\} p_{x}d\zeta_{1}du_{-1}\\
=\  & \int\phi^{2}\left(\zeta_{1}\right)\left(I-\t_{0}\t_{0}^{'}\right)T_{\t_{0}}\ol u_{-1}\ol u_{-1}^{'}T_{\t_{0}}^{'}\left(I-\t_{0}\t_{0}^{'}\right)\cd\ind\left\{ \d_{n}\phi\left(\zeta_{1}\right)\geq\e\right\} p_{x}d\zeta_{1}du_{-1}\\
\to\  & {\bf 0}
\end{align*}
for every $\e>0$ as $n\to\infty$. Hence, by the triangular-array
CLT, we have
\begin{equation}
Z_{n}\dto\cN\left(0,\Sigma\right),\label{eq:Z_normal}
\end{equation}
where
\begin{align}
\Sigma & :=\left(I-\t_{0}\t_{0}^{'}\right)T_{\t_{0}}\left[\frac{1}{2\sqrt{\pi}}\int_{u_{1}=0}\ol u_{-1}\ol u_{-1}^{'}p_{x}du_{-1}\right]T_{\t_{0}}^{'}\left(I-\t_{0}\t_{0}^{'}\right).\nonumber \\
 & =\left(I-\t_{0}\t_{0}^{'}\right)T_{\t_{0}}\left[\frac{1}{2\sqrt{\pi}}\O_{\ol u_{-1}}\right]T_{\t_{0}}^{'}\left(I-\t_{0}\t_{0}^{'}\right)\label{eq:Sigma}
\end{align}
where
\[
\O_{\ol u_{-1}}:=\int_{u_{1}=0}\ol u_{-1}\ol u_{-1}^{'}p_{x}du_{-1}.
\]
Similarly, we can deduce
\[
\norm{R_{n,\t}}=O_{p}\left(\frac{1}{\sqrt{n\d_{n}^{2}b_{n}^{3}}}\right)\norm{\tilde{\t}_{n}-\t_{0}}^{2}=o_{p}\left(\frac{1}{\d_{n}}\norm{\tilde{\t}_{n}-\t_{0}}^{2}\right).
\]
Hence
\[
B_{n,2,1}=Z_{n}^{'}\left(\tilde{\t}_{n}-\t_{0}\right)+o_{p}\left(\frac{1}{\d_{n}}\norm{\tilde{\t}_{n}-\t_{0}}^{2}\right).
\]
Now, by \eqref{eq:P_psi_dif} and the observation that $A_{1}=A_{1}\left(I-\t_{0}\t_{0}^{'}\right)$,
\begin{align*}
P\left(\psi_{n,\tilde{\t}_{n}}\left(z\right)-\psi_{n,\t_{0}}\left(z\right)\right) & =b_{n}^{2}A_{1}\left(I-\t_{0}\t_{0}^{'}\right)\left(\tilde{\t}_{n}-\t_{0}\right)-\left(\tilde{\t}_{n}-\t_{0}\right)^{'}V\left(\tilde{\t}_{n}-\t_{0}\right)+o\left(b_{n}^{2}\norm{\tilde{\t}_{n}-\t_{0}}\right)
\end{align*}
and hence

\begin{align*}
B_{n,2,2}=\  & \frac{1}{\d_{n}}\left[P\left(\psi_{n,\tilde{\t}_{n}}-\psi_{n,\t_{0}}\right)-\mathbb{M}\left(\t\right)\right]=\frac{1}{\d_{n}}\left[b_{n}^{2}A_{1}\left(\tilde{\t}_{n}-\t_{0}\right)+o\left(b_{n}^{2}\right)\right]\\
=\  & A_{1}\left(I-\t_{0}\t_{0}^{'}\right)\left(\tilde{\t}_{n}-\t_{0}\right)+o\left(\norm{\tilde{\t}_{n}-\t_{0}}\right)
\end{align*}
Combining $B_{n,1}$, $B_{n,2,1}$ and $B_{n,2,2}$ we have
\begin{align*}
 & \d_{n}^{-1}\left[\left(\mathbb{M}_{n}\left(\tilde{\t}_{n}\right)-\mathbb{M}\left(\tilde{\t}_{n}\right)\right)-\left(\mathbb{M}_{n}\left(\t_{0}\right)-\mathbb{M}\left(\t_{0}\right)\right)\right]\\
=\  & o_{p}\left(\d_{n}\right)+Z_{n}^{'}\left(\tilde{\t}_{n}-\t_{0}\right)+o_{p}\left(\frac{1}{\d_{n}}\norm{\tilde{\t}_{n}-\t_{0}}^{2}\right)+A_{1}\left(\tilde{\t}_{n}-\t_{0}\right)+o\left(\norm{\tilde{\t}_{n}-\t_{0}}\right)\\
=\  & \left(Z_{n}^{'}+A_{1}\right)\left(I-\t_{0}\t_{0}^{'}\right)\left(\tilde{\t}_{n}-\t_{0}\right)+o_{p}\left(\norm{\tilde{\t}_{n}-\t_{0}}+\frac{1}{\d_{n}}\norm{\tilde{\t}_{n}-\t_{0}}^{2}+\d_{n}\right)\\
=\  & \left(Z_{n}^{'}+A_{1}\right)T_{\t_{0}}T_{\t_{0}}^{'}\left(I-\t_{0}\t_{0}^{'}\right)\left(\tilde{\t}_{n}-\t_{0}\right)+o_{p}\left(\norm{\tilde{\t}_{n}-\t_{0}}+\frac{1}{\d_{n}}\norm{\tilde{\t}_{n}-\t_{0}}^{2}+\d_{n}\right)
\end{align*}
All conditions in VW Theorem 3.2.16 are now satisfied with $V_{u_{-1}}\in\R^{\left(d-1\right)\times\left(d-1\right)}$
being nonsingular and invertible, where $V_{u_{-1}}$ is defined in
\eqref{eq:V_tang} with the projection onto the tangent space of $\S^{d-1}$
via $\left(I-\t_{0}\t_{0}^{'}\right)$ and the change of coordinates
via $T_{\t_{0}}^{'}$. Specifically, noting that 
\begin{align*}
\Sigma & =\left(I-\t_{0}\t_{0}^{'}\right)T_{\t_{0}}\left[\frac{1}{2\sqrt{\pi}}\O_{\ol u_{-1}}\right]T_{\t_{0}}^{'}\left(I-\t_{0}\t_{0}^{'}\right)\\
V & =\left(I-\t_{0}\t_{0}^{'}\right)T_{\t_{0}}V_{\ol u_{-1}}T_{\t_{0}}^{'}\left(I-\t_{0}\t_{0}^{'}\right)=T_{\t_{0}}V_{\ol u_{-1}}T_{\t_{0}}^{'}
\end{align*}
and writing $A_{\ol u_{-1}}\equiv\left(0,A_{u_{-1}}\right):=f^{'}\left(0\right)\cd\int_{u_{1}=0}\ol u_{-1}p_{x}du_{-1}$
so that
\[
A_{1}=T_{\t_{0}}A_{\ol u_{-1}}
\]
we have
\begin{align*}
V^{-}\Sigma V^{-} & =\frac{1}{2\sqrt{\pi}}T_{\t_{0}}\left(\begin{array}{cc}
0 & {\bf 0}\\
{\bf 0} & V_{u_{-1}}^{-1}\O_{\ol u_{-1}}V_{u_{-1}}^{-1}
\end{array}\right)T_{\t_{0}}^{'}=\frac{1}{2\sqrt{\pi}}T_{\t_{0}}V_{\ol u_{-1}}^{-}\O_{\ol u_{-1}}V_{\ol u_{-1}}^{-}T_{\t_{0}}^{'}
\end{align*}
and
\begin{align*}
V^{-}A_{1} & =T_{\t_{0}}\left(\begin{array}{c}
0\\
V_{u_{-1}}^{-1}A_{u_{-1}}
\end{array}\right)=T_{\t_{0}}V_{\ol u_{-1}}^{-}A_{\ol u_{-1}}
\end{align*}
Hence, by VW Theorem 3.2.16, we have
\begin{align*}
\d_{n}^{-1}T_{\t_{0}}^{'}\left(I-\t_{0}\t_{0}^{'}\right)\left(\hat{\t}-\t_{0}\right) & =V_{\ol u_{-1}}^{-1}\left(T_{\t_{0}}^{'}Z_{n}+A_{\ol u_{-1}}\right)+o_{p}\left(1\right)\\
 & \dto\cN\left(\left(\begin{array}{c}
0\\
V_{u_{-1}}^{-1}A_{u_{-1}}
\end{array}\right),\ \left(\begin{array}{cc}
0 & {\bf 0}^{'}\\
{\bf 0} & \frac{1}{2\sqrt{\pi}}V_{u_{-1}}^{-1}\O_{\ol u_{-1}}V_{u_{-1}}^{-1}
\end{array}\right)\right)
\end{align*}
and
\[
\d_{n}^{-1}\left(I-\t_{0}\t_{0}^{'}\right)\left(\hat{\t}-\t_{0}\right)\dto\cN\left(T_{\t_{0}}V_{\ol u_{-1}}^{-}A_{\ol u_{-1}},\ \frac{1}{2\sqrt{\pi}}T_{\t_{0}}V_{\ol u_{-1}}^{-}\O_{\ol u_{-1}}V_{\ol u_{-1}}^{-}T_{\t_{0}}^{'}\right).
\]
\end{proof}

\subsection{Proof of Theorem \ref{thm:Thm_Bin_Dist}(ii)}
\begin{proof}
For $4\leq d<6$, we set $b_{n}\sim n^{-\frac{2}{d+6}}$ so that $\d_{n}=n^{-\frac{4}{d+6}}\left(\log n\right)^{\frac{1}{3}}$
and $a_{n}=n^{-\frac{6-d}{2\left(d+6\right)}}\sqrt{\log n}$. In particular,
\begin{equation}
\d_{n}\sim\left(n^{2}b_{n}^{d}/\log n\right)^{-\frac{1}{3}}\sim n^{-\frac{1}{3}}a_{n}^{\frac{2}{3}}.\label{eq:dn_n_an}
\end{equation}
Now, consider the scaled process indexed by any $s$ in the tangent
space of $\S^{d-1}$ at $\t_{0}$:
\begin{align}
 & \frac{1}{\sqrt{n}\d_{n}^{2}}\GG_{n}\left(g_{\t_{0}+s\d_{n},\hat{h}}-g_{\t_{0},\hat{h}}\right)\nonumber \\
=\  & \frac{1}{\sqrt{n}\d_{n}^{2}}\GG_{n}\left(g_{\t_{0}+s\d_{n},\hat{h}}-g_{\t_{0},\hat{h}}-g_{+s\d_{n},h_{0}}+g_{\t_{0},h_{0}}\right)\nonumber \\
 & +\frac{1}{\sqrt{n}\d_{n}^{2}}\GG_{n}\left(g_{\t_{0}+s\d_{n},h_{0}}-g_{\t_{0},h_{0}}\right)+\frac{1}{\d_{n}^{2}}P\left(g_{\t_{0}+s\d_{n},\hat{h}}-g_{\t_{0},\hat{h}}\right)\label{eq:D_decompose}\\
=\  & D_{n,1}+D_{n,2}+D_{n,3}\nonumber 
\end{align}

For $D_{n,1}$, we verify VW Condition 2.11.21 to apply their Theorem
2.11.23. Define
\begin{align*}
\g_{n,s} & :=n^{-\frac{1}{2}}\d_{n}^{-2}\left(g_{\t_{0}+s\d_{n},\hat{h}}-g_{\t_{0},\hat{h}}-g_{\t_{0}+s\d_{n},h_{0}}+g_{\t_{0},h_{0}}\right)\\
{\cal G}_{2,n} & :=\left\{ \g_{n,s}:\ s^{'}\t_{0}=0,\ s\in\R^{d}\right\} 
\end{align*}
Similarly to the proof of Lemma \ref{lem:Term2}, we can show that
${\cal G}_{2,n}$ has an envelope function 
\[
G_{2,n}\left(x\right)=Kn^{-\frac{1}{2}}\d_{n}^{-2}a_{n}\ind\left\{ \left|x^{'}\t_{0}\right|\leq\norm x\d_{n}\right\} 
\]
with, by \eqref{eq:dn_n_an},
\begin{equation}
PG_{2,n}^{2}\leq Cn^{-1}\d_{n}^{-4}a_{n}^{2}\d_{n}=C\left(n^{-\frac{1}{3}}a_{n}^{\frac{2}{3}}\d_{n}^{-1}\right)^{3}=O\left(1\right).\label{eq:P_G2}
\end{equation}
Furthermore, since $\sqrt{n}\d_{n}\to\infty$,
\begin{align}
P\left[G_{2,n}^{2}\ind\left\{ G_{2,n}>\e\sqrt{n}\right\} \right] & \leq P\left[Kn^{-1}\d_{n}^{-4}a_{n}^{2}\ind\left\{ \left|x^{'}\t_{0}\right|\leq\norm x\d_{n}\right\} \ind\left\{ n^{-1}\d_{n}^{-4}a_{n}^{2}\geq\e\sqrt{n}\right\} \right]\nonumber \\
 & \leq Cn^{-1}a_{n}^{2}\d_{n}^{-3}\ind\left\{ n^{-1}a_{n}^{2}\d_{n}^{-3}\cd\d_{n}^{-1}\geq\e\sqrt{n}\right\} \leq C^{'}\ind\left\{ C^{'}\geq\e\sqrt{n}\d_{n}\right\} \nonumber \\
 & \to0\quad\text{as }n\to\infty\quad\text{for every }\e>0\label{eq:PG_Linid}
\end{align}
In addition, for any $s,t$,
\begin{align*}
\left|\g_{n,s}-\g_{n,t}\right| & =n^{-\frac{1}{2}}\d_{n}^{-2}\left|g_{\t_{0}+s\d_{n},h}-g_{\t_{0}+t\d_{n},h}-g_{\t_{0}+s\d_{n},h_{0}}+g_{\t_{0}+t\d_{n},h_{0}}\right|\\
 & =n^{-\frac{1}{2}}\d_{n}^{-2}\left|\hat{h}\left(x\right)-h_{0}\left(x\right)\right|\cd\left|\ind\left\{ x^{'}\left(\t_{0}+s\d_{n}\right)\geq0\right\} -\ind\left\{ x^{'}\left(\t_{0}+t\d_{n}\right)\geq0\right\} \right|\\
 & \leq Kn^{-\frac{1}{2}}\d_{n}^{-2}a_{n}\cd\left(\ind\left\{ \left|x^{'}\t_{0}+\frac{1}{2}\d_{n}x^{'}\left(s+t\right)\right|\leq\frac{1}{2}\d_{n}\left|x^{'}\left(s-t\right)\right|\right\} \right)
\end{align*}
and thus, for any $\e_{n}\to0$, we have
\begin{align}
\sup_{\norm{s-t}\leq\e_{n}} & P\left(\g_{n,s}-\g_{n,t}\right)^{2}\leq Kn^{-1}a_{n}^{2}\d_{n}^{-4}\cd C\d_{n}\e_{n}=C^{'}\e_{n}\to0.\label{eq:P_f_st}
\end{align}
VW Condition 2.11.21 is thus verified by \eqref{eq:P_G2}\eqref{eq:PG_Linid}
and \eqref{eq:P_f_st}. Lastly, since
\begin{align*}
\sqrt{\log\mathscr{N}_{[]}\left(\e\norm{G_{2,n}}_{L_{2}\left(P\right)},{\cal G}_{2,n},L_{2}\left(P\right)\right)} & \leq M\left(\e\norm{G_{2,n}}_{L_{2}\left(P\right)}\right)^{-\frac{d}{\left\lfloor d\right\rfloor +1}}\\
 & =\left(\frac{1}{n^{-1}\d_{n}^{-4}a_{n}^{2}\d_{n}}\right)^{-\frac{d}{\left\lfloor d\right\rfloor +1}}\e^{-\frac{d}{\left\lfloor d\right\rfloor +1}}\leq C\e^{-\frac{d}{\left\lfloor d\right\rfloor +1}}.
\end{align*}
and thus
\begin{align*}
\int_{0}^{\e_{n}}\sqrt{\log\mathscr{N}_{[]}\left(\e\norm{G_{2,n}}_{L_{2}\left(P\right)},{\cal G}_{2,n},L_{2}\left(P\right)\right)}d\e & \leq C\e_{n}^{\frac{d}{\left\lfloor d\right\rfloor +1}}\to0.
\end{align*}
By VW Theorem 2.11.23, the sequence
\[
\left\{ \GG_{n}\g_{n,s}:s^{'}\t_{0}=0,\ s\in\R^{d}\right\} 
\]
is asymptotically tight in $l^{\infty}\left(\R^{d}\cap\t_{0}^{\indep}\right)$
and converges in distribution to a Gaussian process $G$ with the
covariance function
\[
H\left(s,t\right):=\lim_{n\to\infty}\left(P\g_{n,s}\g_{n,t}-P\g_{n,s}P\g_{n,t}\right).
\]

Next, we show that $D_{n,2}$ is asymptotically negligible, since
by Lemma \eqref{lem:Term1}
\begin{align*}
D_{n,2}:=\frac{1}{\sqrt{n}\d_{n}^{2}}\GG_{n}\left(g_{\t_{0}+s\d_{n},h_{0}}-g_{\t_{0},h_{0}}\right) & =O_{p}\left(\frac{1}{\sqrt{n}\d_{n}^{2}}\d_{n}^{\frac{3}{2}}\right)=O_{p}\left(\sqrt{\frac{\d_{n}}{n}}\right)=o_{p}\left(1\right)
\end{align*}

Finally, for $D_{n,3}$ we show that, based on Lemma \eqref{lem:Term_3},
\begin{align*}
D_{n,3}=\  & \frac{1}{\d_{n}^{2}}P\left(g_{\t_{0}+s\d_{n},\hat{h}}-g_{\t_{0},\hat{h}}\right)\\
=\  & \frac{1}{\sqrt{n}\d_{n}^{2}}\GG_{n}\left(\psi_{n,\t_{0}+s\d_{n}}-\psi_{n,\t_{0}}\right)+\frac{1}{\d_{n}^{2}}P\left(\psi_{n,\t_{0}+s\d_{n}}-\psi_{n,\t_{0}}\right)\\
=\  & \frac{1}{\sqrt{n}\d_{n}^{2}}O_{p}\left(b_{n}^{-\frac{1}{2}}\d_{n}\right)+\frac{1}{\d_{n}^{2}}\left(-s^{'}Vs\cd\d_{n}^{2}+b_{n}^{2}\d_{n}\cd A_{1}^{'}s+o\left(\d_{n}^{2}\right)+o\left(b_{n}^{2}\d_{n}\right)\right)\\
=\  & -s^{'}Vs+A_{1}^{'}s+O_{p}\left(\frac{1}{\sqrt{nb_{n}}\d_{n}}\right)+o\left(b_{n}^{2}\d_{n}^{-1}\right)+o\left(1\right)\\
=\  & -s^{'}Vs+A_{1}^{'}s+o_{p}\left(1\right)
\end{align*}
since $\left(nb_{n}\right)^{-\frac{1}{2}}=n^{-\frac{d+4}{2\left(d+6\right)}}=o\left(\d_{n}\right)=o\left(n^{-\frac{4}{d+6}}\left(\log n\right)^{\frac{1}{3}}\right)$.

Combining $D_{n,1},D_{n,2}$ and $D_{n,3}$, we conclude that 
\[
\frac{1}{\sqrt{n}\d_{n}^{2}}\GG_{n}\left(g_{\t_{0}+s\d_{n},\hat{h}}-g_{\t_{0},\hat{h}}\right)\dto G\left(s\right)+A_{1}^{'}s-s^{'}Vs
\]
and thus by the argmax continuous mapping theorem (VW Theorem 3.2.2),
we have
\[
\d_{n}^{-1}\left(I-\t_{0}\t_{0}^{'}\right)\left(\hat{\t}-\t_{0}\right)\dto\arg\max_{s:s^{'}\t_{0}=0}G\left(s\right)+A_{1}^{'}s-s^{'}Vs.
\]
\end{proof}

\subsection{Proof of Theorem \ref{thm:Thm_Bin_Dist}(iii)}
\begin{proof}
For $d\geq6$ with $b_{n}\sim n^{-\frac{2}{d+6}}$, we note that 
\[
\d_{n}:=\norm{\hat{\t}-\t_{0}}=O_{p}\left(n^{-\frac{4}{d+6}}\left(\log n\right)^{\frac{1}{3}}\right)=O\left(b_{n}^{2}\right)
\]
and moreover
\[
\d_{n}\sim b_{n}^{2}>>\left(nb_{n}\right)^{-\frac{1}{2}},\quad\d_{n}\sim b_{n}^{2}>>n^{-\frac{1}{3}}a_{n}^{\frac{2}{3}}.
\]
The rest of the proof can be obtained by an easy adaption of the proof
for Theorem \ref{thm:Thm_Bin_Dist}(ii) above. Specifically, we observe
that for $D_{n,1}$, the inequality \eqref{eq:P_G2} becomes
\[
PG_{2,n}^{2}\leq Cn^{-1}\d_{n}^{-4}a_{n}^{2}\d_{n}=C\left(n^{-\frac{1}{3}}a_{n}^{\frac{2}{3}}\d_{n}^{-1}\right)^{3}=o\left(1\right).
\]
Hence,
\[
\d_{n}^{-1}\left(I-\t_{0}\t_{0}^{'}\right)\left(\hat{\t}-\t_{0}\right)\dto\arg\max_{s:s^{'}\t_{0}=0}A_{1}^{'}s-s^{'}Vs=A_{1}.
\]
\end{proof}

\subsection{Proof of Lemma \ref{thm:MMI_Rate}}
\begin{proof}
The proofs of Lemma \ref{lem:Term1} and Lemma \ref{lem:Term2} are
essentially unchanged. For the term $P\left(g_{\t,\hat{h}}-g_{\t_{0},\hat{h}}\right)$,
we note that
\begin{align*}
P\left(g_{\t,\hat{h}}-g_{\t_{0},\hat{h}}\right) & =P\left(g_{\t,\hat{h}}-g_{\t_{0},\hat{h}}-g_{\t,h_{0}}+g_{\t_{0},h_{0}}\right)+P\left(g_{\t,h_{0}}-g_{\t_{0},h_{0}}\right)
\end{align*}
where
\begin{align*}
 & P\left|g_{\t,\hat{h}}-g_{\t_{0},\hat{h}}-g_{\t,h_{0}}+g_{\t_{0},h_{0}}\right|\\
\leq\  & P\left|\g\left(\hat{h}\left(X\right)\right)-\g\left(h_{0}\left(X\right)\right)\right|\left|\prod_{j}\ind\left\{ X_{j}^{'}\t\geq0\right\} -\prod_{j}\ind\left\{ X_{j}^{'}\t_{0}\geq0\right\} \right|\\
\leq\  & MP\left|\hat{h}\left(X\right)-h_{0}\left(X\right)\right|\left|\ind\left\{ X_{j\left(X\right)}^{'}\t\geq0\right\} -\ind\left\{ X_{j\left(X\right)}^{'}\t_{0}\geq0\right\} \right|\\
 & \quad\quad\text{for some \ensuremath{j\left(X\right)} with probability 1 for \ensuremath{\t\ }sufficiently close to \ensuremath{\t_{0}}}\\
\leq\  & Ca_{n}\norm{\t-\t_{0}}
\end{align*}
and 
\begin{align*}
P\left(g_{\t,h_{0}}-g_{\t_{0},h_{0}}\right) & =-\left(\t-\t_{0}\right)^{'}V\left(\t-\t_{0}\right)+o\left(\norm{\t-\t_{0}}^{2}\right).
\end{align*}
Hence,
\begin{align*}
P\left(g_{\t,\hat{h}}-g_{\t_{0},\hat{h}}\right) & =-\left(\t-\t_{0}\right)^{'}V\left(\t-\t_{0}\right)+O\left(a_{n}\d\right)+o\left(\norm{\t-\t_{0}}^{2}\right).
\end{align*}

Combining this with Lemma \ref{lem:Term1} and Lemma \ref{lem:Term2},
we conclude that Conditions B1-B4 in DvK can be verified with the
smallest $\d_{n}$ such that
\[
\d_{n}=\max\left\{ n^{-1},n^{-\frac{1}{3}}a_{n}^{\frac{2}{3}},\ a_{n}\right\} =a_{n}.
\]
\end{proof}

\subsection{Proof of Lemma }
\begin{proof}
(i) and (ii) are immediate. For (iii), notice that 
\begin{align*}
\l\left(t\right) & =\frac{d}{dt}\int_{-\infty}^{t}\int_{-\infty}^{\infty}K_{d}\left(u\right)du_{1}du_{-1}=\int_{-\infty}^{\infty}K_{d}\left(t,u_{-1}\right)du_{-1}.
\end{align*}
Hence, 
\begin{align*}
\int_{-\infty}^{\infty}t^{j}\l\left(t\right)dt & =\int_{-\infty}^{\infty}\int_{-\infty}^{\infty}t^{j}K_{d}\left(t,u_{-1}\right)du_{-1}dt=\int_{-\infty}^{\infty}u_{1}^{j}K_{d}\left(u\right)du=0,
\end{align*}
and
\begin{align*}
\int_{-\infty}^{\infty}t^{s}\l\left(t\right)dt & =\int_{-\infty}^{\infty}u_{1}^{s}K_{d}\left(u\right)du=R_{s}>0.
\end{align*}
\end{proof}

\subsection{Proof of Theorem \ref{thm:MMI_3rates}}
\begin{proof}
Following the proof of Lemma \ref{thm:MMI_Rate}, we see now
\begin{align*}
P\left(g_{\t,\hat{h}}-g_{\t_{0},\hat{h}}\right) & =-\left(\t-\t_{0}\right)^{'}V\left(\t-\t_{0}\right)+O\left(u_{n}+v_{n}\right)\d+o\left(\d^{2}\right)
\end{align*}
so that
\[
\d_{n}=\max\left\{ n^{-1},n^{-\frac{1}{3}}a_{n}^{\frac{2}{3}},\ u_{n},v_{n}\right\} =\max\left\{ n^{-\frac{1}{3}}a_{n}^{\frac{2}{3}},\ u_{n},v_{n}\right\} .
\]
\end{proof}

\section{Online Appendix}

\subsection{Proof of Corollary \ref{cor:h0_dif_bound}}
\begin{proof}
Viewing $F\left(\rest{\e}x\right)$ as a function of $\left(\e,x\right)$,
we write $\frac{\p}{\p\e}F\left(\e|x\right)$ and $\frac{\p}{\p x}F\left(\e|x\right)$
as derivatives w.r.t. its two arguments. Since $h_{0}\left(x\right)=F\left(\rest{x^{'}\t_{0}}x\right)$,
we have
\begin{align*}
\left|\frac{\p}{\p x_{j}}h_{0}\left(x\right)\right| & =\left|f\left(\rest{x^{'}\t_{0}}x\right)\t_{0,j}+\rest{\frac{\p}{\p x_{j}}F\left(\rest{\e}x\right)}_{\e=x^{'}\t_{0}}\right|\\
 & \leq\left|f\left(\rest{x^{'}\t_{0}}x\right)\right|\cd\left|\t_{0,j}\right|+\left|\frac{\p}{\p x_{j}}F\left(\rest{x^{'}\t_{0}}x\right)\right|\leq M\cd1+M=2M,
\end{align*}
and 
\begin{align*}
\left|\frac{\p^{2}}{\p x_{k}\p x_{j}}h_{0}\left(x\right)\right| & =\left|\frac{\p}{\p\e}f\left(\rest{x^{'}\t_{0}}x\right)\t_{0,j}\t_{0,k}+\frac{\p}{\p x_{k}}f\left(\rest{x^{'}\t_{0}}x\right)\t_{0,j}+\frac{\p^{2}}{\p x_{k}\p x_{j}}F\left(\rest{x^{'}\t_{0}}x\right)\right|\\
 & \leq M\cd1\cd1+M\cd1+M=2M.
\end{align*}
\end{proof}

\subsection{\label{subsec:Pf_Term1}Proof of Lemma \ref{lem:Term1}}
\begin{proof}
Define ${\cal \cG}_{1,\d}:=\left\{ g_{\t,h_{0}}-g_{\t_{0},h_{0}}:\ \norm{\t-\t_{0}}\leq\d\right\} $,
which has an envelope $G_{1,\d}$:
\begin{align*}
\left|g_{\t,h_{0}}-g_{\t_{0},h_{0}}\right| & =\left|h_{0}\left(x\right)\right|\left|\ind\left\{ x^{'}\t\geq0\right\} -\ind\left\{ x^{'}\t_{0}\geq0\right\} \right|\\
 & =\left|h_{0}\left(x\right)\right|\left(\ind\left\{ x^{'}\t\geq0>x^{'}\t_{0}\right\} +\ind\left\{ x^{'}\t_{0}\geq0>x^{'}\t\right\} \right)\\
 & =\left|h_{0}\left(x\right)\right|\left(\ind\left\{ x^{'}\t_{0}+x^{'}\left(\t-\t_{0}\right)\geq0>x^{'}\t_{0}\right\} +\ind\left\{ x^{'}\t_{0}\geq0>x^{'}\t_{0}+x^{'}\left(\t-\t_{0}\right)\right\} \right)\\
 & \leq\left|h_{0}\left(x\right)\right|\left(\ind\left\{ x^{'}\t_{0}+\norm x\norm{\t-\t_{0}}\geq0>x^{'}\t_{0}\right\} +\ind\left\{ x^{'}\t_{0}\geq0>x^{'}\t_{0}-\norm x\norm{\t-\t_{0}}\right\} \right)\\
 & \leq\left|h_{0}\left(x\right)\right|\left(\ind\left\{ 0>x^{'}\t_{0}\geq-\norm x\norm{\t-\t_{0}}\right\} +\ind\left\{ \norm x\norm{\t-\t_{0}}>x^{'}\t_{0}\geq0\right\} \right)\\
 & =\left|h_{0}\left(x\right)\right|\ind\left\{ \left|x^{'}\t_{0}\right|\leq\norm x\norm{\t-\t_{0}}\right\} 
\end{align*}
Whenever $\left|x^{'}\t_{0}\right|\leq\norm x\norm{\t-\t_{0}}<\norm{\t-\t_{0}}$,
we have
\[
0\in\left[x^{'}\t_{0}-\norm{\t-\t_{0}},x^{'}\t_{0}+\norm{\t-\t_{0}}\right]=\left[\left(x-\norm{\t-\t_{0}}\t_{0}\right)^{'}\t_{0},\left(x+\norm{\t-\t_{0}}\t_{0}\right)^{'}\t_{0}\right].
\]
which implies that
\begin{equation}
h_{0}\left(x-\norm{\t-\t_{0}}\t_{0}\right)\leq0\leq h_{0}\left(x+\norm{\t-\t_{0}}\t_{0}\right).\label{eq:0_btw_h0}
\end{equation}
By Lemma \ref{cor:h0_dif_bound}, 
\begin{align*}
h_{0}\left(x+\norm{\t-\t_{0}}\t_{0}\right) & \leq h_{0}\left(x\right)+\sup_{x}\norm{\Dif_{x}h_{0}\left(x\right)}\cd\norm{\t-\t_{0}}\leq h_{0}\left(x\right)+M\norm{\t-\t_{0}},\\
h_{0}\left(x+\norm{\t-\t_{0}}\t_{0}\right) & \geq h_{0}\left(x\right)-\sup_{x}\norm{\Dif_{x}h_{0}\left(x\right)}\cd\norm{\t-\t_{0}}\geq h_{0}\left(x\right)-M\norm{\t-\t_{0}},
\end{align*}
and thus \eqref{eq:0_btw_h0} implies that
\[
h_{0}\left(x\right)-M\norm{\t-\t_{0}}\leq0\leq h_{0}\left(x\right)+M\norm{\t-\t_{0}},
\]
which further implies that
\[
\left|h_{0}\left(x\right)\right|\leq M\norm{\t-\t_{0}}.
\]
Hence, 
\begin{align*}
\left|g_{\t,h_{0}}-g_{\t_{0},h_{0}}\right| & \leq\left|h_{0}\left(x\right)\right|\ind\left\{ \left|x^{'}\t_{0}\right|\leq\norm x\norm{\t-\t_{0}}\right\} \\
 & \leq M\norm{\t-\t_{0}}\ind\left\{ \left|x^{'}\t_{0}\right|\leq\norm x\norm{\t-\t_{0}}\right\} \\
 & \leq C\d\ind\left\{ \left|x^{'}\t_{0}\right|\leq\norm x\d\right\} =:G_{1,\d}.
\end{align*}
Now, since $X_{i}/\norm{X_{i}}$ is uniformly distributed on $\S^{d-1}$,
\begin{align*}
PG_{1,\d}^{2} & =\E\left[C^{2}\d^{2}\ind\left\{ \left|X_{i}^{'}\t_{0}\right|\leq\norm{X_{i}}\d\right\} \right]\\
 & =C^{2}\d^{2}\P\left(\left|\frac{X_{i}^{'}}{\norm{X_{i}}}\t_{0}\right|\leq\d\right)\\
 & \leq C^{2}\d^{3}
\end{align*}
Now, since ${\cal \cG}_{1,\d}\subseteq\cG$, we have $\mathscr{N}\left(\e,\cG_{1,\d},L_{2}\left(P\right)\right)\leq\mathscr{N}\left(\e,\cG,L_{2}\left(P\right)\right)$
and by Lemma \ref{lem:J_uni_finite}
\[
J_{1,\d}:=\int_{0}^{1}\sqrt{1+\log\mathscr{N}\left(\e,\cG_{1,},L_{2}\left(P\right)\right)}d\e\leq J<\infty.
\]
Then, by VW Theorem 2.14.1, we have
\[
P\sup_{g\in{\cal \cG}_{1,\d}}\left|\GG_{n}\left(g\right)\right|\leq J_{1,\d}\sqrt{PG_{1,\d}^{2}}\leq J_{1}C\d\sqrt{\d}=M_{1}\d\sqrt{\d}.
\]
\end{proof}

\subsection{\label{subsec:Pf_Term2}Proof of Lemma \ref{lem:Term2}}
\begin{proof}
Define ${\cal \cG}_{2,\d,n}:=\left\{ g_{\t,h}-g_{\t_{0},h}-g_{\t,h_{0}}+g_{\t_{0},h_{0}}:\ \norm{\t-\t_{0}}\leq\d,\norm{h-h_{0}}_{\infty}\leq Ka_{n}\right\} $,
which has an envelope function $G_{2,\d,n}$ given by
\begin{align*}
 & \left|g_{\t,h}-g_{\t_{0},h}-g_{\t,h_{0}}+g_{\t_{0},h_{0}}\right|\\
= & \left|h\left(x\right)-h_{0}\left(x\right)\right|\left|\ind\left\{ x^{'}\t\geq0\right\} -\ind\left\{ x^{'}\t_{0}\geq0\right\} \right|\\
\leq & \left|h\left(x\right)-h_{0}\left(x\right)\right|\ind\left\{ \left|x^{'}\t_{0}\right|\leq\norm x\norm{\t-\t_{0}}\right\} \\
\leq & Ka_{n}\ind\left\{ \left|x^{'}\t_{0}\right|\leq\norm x\d\right\} \\
=: & G_{2,n,\d}
\end{align*}
with
\begin{align*}
PG_{2,n,\d}^{2} & =K^{2}a_{n}^{2}\P\left(\left|\frac{X_{i}^{'}}{\norm{X_{i}}}\t_{0}\right|\leq\d\right)\leq Ca_{n}^{2}\d.
\end{align*}
Since ${\cal \cG}_{2,\d,n}\subseteq\cG-{\cal \cG}_{1,\d}:=\left\{ g-\tilde{g}:g\in\cG,\tilde{g}\in\cG_{1,\d}\right\} $,
by Lemma 9.14 of \citet{kosorok2007introduction}, ${\cal \cG}_{2,\d,n}$
must also have bounded uniform entropy integrals. Hence,
\[
J_{2}:=\int_{0}^{1}\sqrt{1+\log\mathscr{N}\left(\e,\cG_{2},L_{2}\left(P\right)\right)}d\e<\infty,
\]
and by VW Theorem 2.14.1, 
\[
P\sup_{g\in{\cal \cG}_{2,\d,n}}\norm{\GG_{n}\left(g\right)}\leq J_{2,\d}\sqrt{PG_{2,n,\d}^{2}}\leq J_{2}Ca_{n}\sqrt{\d}=Ma_{n}\sqrt{\d}.
\]
\end{proof}

\end{document}